\title{Parameterised Counting in Logspace} 
\author{Anselm Haak}{Leibniz Universität Hannover, Institut für Theoretische Informatik, Hannover, Germany}{haak@thi.uni-hannover.de}{https://orcid.org/0000-0003-1031-5922}{}
\author{Arne Meier}{Leibniz Universität Hannover, Institut für Theoretische Informatik, Hannover, Germany}{meier@thi.uni-hannover.de}{https://orcid.org/0000-0002-8061-5376}{Funded by the German Research Foundation (DFG), project ME4279/1-2}
\author{Om Prakash}{Department of Computer Science and Engineering, IIT Madras, Chennai, India}{op708543@gmail.com}{}{}
\author{Raghavendra Rao B.\ V.}{Department of Computer Science and Engineering, IIT Madras, Chennai, India}{bvrr@cse.iitm.ac.in}{}{}
\authorrunning{A. Haak, A. Meier, O. Prakash, and Raghavendra Rao B.\ V.}
\keywords{Parameterized Complexity, Counting Complexity, Logspace}
\begin{document}

\maketitle

\begin{abstract}
Logarithmic space bounded complexity classes such as $\L$ and $\NL$ play a central role in space bounded computation. 
The study of counting versions of these complexity classes have lead to several interesting insights into the structure of computational problems such as computing the determinant and counting paths in directed acyclic graphs. 
Though parameterised complexity theory was initiated roughly three decades ago by Downey and Fellows, a satisfactory study of parameterised logarithmic space bounded computation was developed only in the last decade by Elberfeld, Stockhusen and Tantau (IPEC 2013, Algorithmica 2015). 

In this paper, we introduce a new framework for parameterised counting in logspace, inspired by the parameterised space bounded models developed by Elberfeld, Stockhusen and Tantau (IPEC 2013, Algorithmica 2015). 
They defined the operators $\paraw$ and $\parab$ for parameterised space complexity classes by allowing bounded nondeterminism with multiple-read and read-once access, respectively. 
Using these operators, they characterised the parameterised complexity of natural problems on graphs. 
In the spirit of the operators $\paraw$ and $\parab$ by Stockhusen and Tantau, we introduce variants based on tail-nondeterminism, $\parawt$ and $\parabt$.
Then, we consider counting versions of all four operators applied to logspace and obtain several natural complete problems for the resulting classes: counting of paths in digraphs, counting first-order models for formulas, and counting graph homomorphisms. 
Furthermore, we show that the complexity of a parameterised variant of the determinant function for $(0,1)$-matrices is $\sh\parabt\L$-hard and can be written as the difference of two functions in $\sh\parabt\L$. 
These problems exhibit the richness of the introduced counting classes. 
Our results further indicate interesting structural characteristics of these classes. 
For example, we show that the closure of $\sh\parabt\L$ under parameterised logspace parsimonious reductions coincides with $\sh\parab\L$, that is, modulo parameterised reductions, tail-nondeterminism with read-once access is the same as read-once nondeterminism.  

Initiating the study of closure properties of these parameterised logspace counting classes, we show that all introduced classes are closed under addition and multiplication, and those without tail-nondeterminism are closed under parameterised logspace parsimonious reductions. 

Also, we show that the counting classes defined can naturally be characterised by parameterised variants of classes based on branching programs in analogy to the classical counting classes. 

Finally, we underline the significance of this topic by providing a promising outlook showing several open problems and options for further directions of research.
\end{abstract}
\section{Introduction}
\label{sec:intro}
Parameterised complexity theory, introduced by Downey and Fellows~\cite{DBLP:series/mcs/DowneyF99}, takes a two-dimensional view on the computational complexity of problems and has revolutionised the algorithmic world. 
Two-dimensional here refers to the fact that the complexity of a parameterised problem is analysed with respect to the input size $n$ and a parameter $k$ associated with the given input as two independent quantities. 
The notion of fixed-parameter tractability is the proposed notion of efficient computation. 
A problem with parameter $k$ is fixed-parameter tractable (fpt, or in the class $\FPT$) if there is a deterministic $f(k)\cdot n^{O(1)}$ time algorithm for deciding it, where $f$ is a computable function. 
The primary notion of intractability is captured by the $\W{}$-hierarchy in this setting. 

Since its inception, the focus of parameterised complexity theory has been to identify parameterisations of \NP-hard problems that allow for efficient parameterised algorithms, and to address structural aspects of the classes in the $\W{}$-hierarchy and related complexity classes~\cite{DBLP:series/txtcs/FlumG06}.
This led to the development of machine-based and logical characterisations of parameterised complexity classes (see the book by Flum and Grohe~\cite{DBLP:series/txtcs/FlumG06} for more details). 
While the structure of classes in hierarchies such as the $\W{}$- and $\A{}$-hierarchy is well understood, a parameterised view of parallel and space-bounded computation lacked attention. 

In 2013, Elberfeld~et~al.~\cite{DBLP:conf/iwpec/StockhusenT13,DBLP:journals/algorithmica/ElberfeldST15} focused on parameterised space complexity classes and initiated the study of parameterised circuit complexity classes. 
In fact, they introduced parameterised analogues of deterministic and nondeterministic logarithmic space-bounded classes. 
The machine-based characterisation of $\W\P$ (the class of problems that are fpt-reducible to a weighted circuit satisfiability question), and the type of access to nondeterministic choices (multi-read or read-once) led to two different variants of parameterised logspace (para-logspace), namely, $\paraw\L$ and $\parab\L$. 
Elberfeld~et~al.~\cite{DBLP:journals/algorithmica/ElberfeldST15} obtained several natural complete problems for these classes, such as parameterised variants of reachability in graphs.

Bannach, Stockhusen and Tantau~\cite{DBLP:conf/iwpec/BannachST15} further studied parameterised parallel algorithms. 
They used colour coding techniques~\cite{DBLP:journals/jacm/AlonYZ95} to obtain efficient parameterised parallel algorithms for several natural problems. 
A year later, Chen and Flum~\cite{DBLP:conf/mfcs/ChenF16} proved parameterised lower bounds for $\AC0$ by adapting circuit lower bound techniques.

Apart from decision problems, counting problems have found a prominent place in complexity theory. 
Valiant~\cite{DBLP:journals/tcs/Valiant79} introduced the notion of counting complexity classes that capture natural counting problems such as counting the number of perfect matchings in a graph, or counting the number of satisfying assignments of a CNF formula.
Informally, $\sh\P$ (resp., $\sh\L$) consists of all functions $F\colon\{0,1\}^*\to\N$ such that there exists an nondeterministic Turing machine (NTM) running in polynomial time (resp., logarithmic space) in the input length whose number of accepting paths on every input $x\in\{0,1\}^*$ is equal to $F(x)$. 
Valiant's theory of $\sh\P$-completeness led to several structural insights into complexity classes around $\NP$ and interactive proof systems, as well as to the seminal result of Toda~\cite{DBLP:journals/siamcomp/Toda91}. 
 
While counting problems in $\sh\P$ stayed in the focus of research for long, the study of the determinant by Damm~\cite{Dam91}, Vinay~\cite{ctc/Vin91}, and Toda~\cite{Toda91CountingPC} established that the complexity of computing the determinant of an integer matrix characterises the class $\sh\L$ up to a closure under subtraction. 
Allender and Ogihara~\cite{DBLP:journals/ita/AllenderO96} analysed the structure of complexity classes based on $\sh\L$. 
The importance of counting classes based on logspace-bounded Turing machines (TMs) was further established by Allender, Beals and Ogihara~\cite{DBLP:journals/cc/AllenderBO99}.
They characterised the complexity of testing feasibility of linear equations by a class which is based on $\sh\L$. 
Beigel and Fu~\cite{DBLP:journals/jcss/BeigelF00} then showed that small depth circuits built with oracle access to $\sh\L$ functions lead to a hierarchy of languages which can be seen as the logspace version of the counting hierarchy. 
In a remarkable result, Ogihara~\cite{DBLP:journals/siamcomp/Ogihara98} showed that this hierarchy collapses to the first level.
Further down the complexity hierarchy, Caussinus~et~al.~\cite{DBLP:journals/jcss/CaussinusMTV98} introduced counting versions of $\NC1$ based on various characterisations of $\NC1$. 
The counting and probabilistic analogues of $\NC1$ exhibit properties similar to their logspace counterparts~\cite{DBLP:journals/tcs/DattaMRTV12}. 
Moreover, counting and gap variants of the class $\AC0$ were defined by Agrawal~et~al.~\cite{DBLP:journals/jcss/AgrawalAD00}. 
 
The theory of parameterised counting classes was pioneered by Flum and Grohe~\cite{DBLP:journals/siamcomp/FlumG04} as well as McCartin~\cite{DBLP:conf/mfcs/McCartin02}. The class $\sh\W{1}$ consists of all parameterised counting problems that reduce to the problem of counting $k$-cliques in a graph.  
Flum and Grohe~\cite{DBLP:journals/siamcomp/FlumG04} proved that counting cycles of length $k$ is complete for $\sh\W{1}$. 
Curticapean~\cite{DBLP:conf/icalp/Curticapean13} further showed that counting matchings with $k$ edges in a graph is also complete for $\sh\W{1}$. 
These results led to several remarkable completeness results and new techniques (see, e.g., the works of Curticapean~\cite{DBLP:phd/dnb/Curticapean15,DBLP:journals/iandc/Curticapean18}, Curticapean, Dell and Marx~\cite{DBLP:conf/stoc/CurticapeanDM17}, Jerrum and Meeks~\cite{DBLP:journals/combinatorica/JerrumM17}, Brand and Roth~\cite{DBLP:conf/csr/BrandR17}).
 
\subparagraph*{Motivation} 
Given the rich structure of logspace-bounded counting complexity classes, the study of parameterised variants of these classes is vital to obtain a finer classification of counting problems.  

A theory on para-logspace counting did not exist before. 
We wanted to overcome this defect to further understand the landscape of counting problems with decision versions in para-logspace-based classes. 
Our new framework allows us to classify many of these problems more precisely.
In this article, we define counting variants inspired by the parameterised space complexity classes introduced by Elberfeld et al.~\cite{DBLP:conf/iwpec/StockhusenT13,DBLP:journals/algorithmica/ElberfeldST15}. 

In the realm of space-bounded computation, different manners in which nondeterministic bits are accessed lead to different complexity classes. 
For example, the standard definition of $\NL$ implicitly gives the corresponding NTMs only read-once access to their nondeterministic bits~\cite{DBLP:books/daglib/0023084}: nondeterminism is given only in the form of choices between different transitions.
This means that nondeterministic bits are not re-accessible by the machine later in the computation.
When instead using an auxiliary read-only tape for these bits and allowing for multiple passes on it, one obtains the class $\NP$.
This is due to the fact that $\textbf{SAT}$ is $\NP$-complete with respect to logspace many-one reductions~\cite{DBLP:books/daglib/0023084}, and that one can evaluate a {CNF} formula in deterministic logspace even when the assignment is given on a read-only tape. 
However, polynomial time bounded NTMs still characterise $\NP$ even when the machine is allowed to do only one pass on the nondeterministic bits as they can simply store all nondeterministic bits on the work-tape.
So, it is very natural to investigate whether the differentiation from above leads to new insights in our setting.

With parameterisation as a means for a finer classification, Stockhusen and Tantau~\cite{DBLP:conf/iwpec/StockhusenT13} defined nondeterministic logarithmic space-bounded computation based on \emph{how} (unrestricted or read-once) the nondeterministic bits are accessed. 
Based on this distinction, they defined two operators: $\paraw$ (unrestricted) and $\parab$ (read-once). 
Their study led to many compelling natural problems that are complete for logspace-bounded nondeterministic computations with suitable parameters. 
Thereby, a rich structure of computational power based on the restrictions on the number of reads of the nondeterministic bits was exhibited.
In this article, we additionally differentiate based on \emph{when} (unrestricted or tail access) the nondeterministic bits are accessed.
The classes $\W1$ and $\W\P$ are the two most prominent nondeterministic classes in the parameterised world which is why we wanted to see the effect of such a restriction on the rather small classes in our setting.
This leads to the new operators $\parawt$ and $\parabt$.
The concept of tail-nondeterminism allowed to capture the parameterised complexity class $\W1$---via tail-nondeterministic, $k$-bounded machines---and thereby relates to many interesting problems such as searching for cliques, independent sets, or homomorphism, and evaluating conjunctive queries~\cite{DBLP:series/txtcs/FlumG06}.
Intuitively, tail-nondeterminism means that all nondeterministic bits are read at the end of the computation, and $k$-boundedness limits the number of these nondeterministic bits to $f(k)\cdot\log|x|$ for all inputs $(x,k)$.

Studying counting complexity often improves the understanding of related classical problems and classes (e.g., Toda's theorem~\cite{DBLP:journals/siamcomp/Toda91}).
With regard to space-bounded complexity, there are several characterisations of logspace-bounded counting classes in terms of natural problems. 
For example, counting paths in directed graphs is complete for $\sh\L$, and checking if an integer matrix is singular or not is complete for the class $\Ceq\L$ (see Allender~et~al.~\cite{DBLP:journals/cc/AllenderBO99}). 
Furthermore, testing if a system of linear equations is feasible or not can be done in $\L$ with queries to any complete language for $\Ceq\L$. 
Moreover, two hierarchies built over counting classes for logarithmic space collapse either to the first level~\cite{DBLP:journals/siamcomp/Ogihara98} or to the second level~\cite{DBLP:journals/cc/AllenderBO99}. 
Apart from this, the separation of various counting classes over logarithmic space remains widely open. 
For example, it is not known if the class $\Ceq\L$ is closed under complementation. 

We consider different parameterised variants of the logspace-bounded counting class $\sh\L$ to give a new perspective on its fine structure.

\subparagraph*{Results} 
We introduce the counting variants of parameterised space-bounded computation and show that each of the parameterised logspace complexity classes, defined by Stockhusen and Tantau~\cite{DBLP:conf/iwpec/StockhusenT13}, has a natural counting counterpart.  
Moreover, by considering also tail-nondeterminism with respect to their classes, we obtain four different variants of parameterised logspace counting classes, namely, $\sh\paraw\L, \sh\parab\L, \sh\parawt\L$, and $\sh\parabt\L$. 
We show that $\sh\paraw\L$ and $\sh\parab\L$ are closed under para-logspace parsimonious reductions and that all of our new classes are closed under addition and multiplication.

Furthermore, we develop a complexity theory by obtaining natural complete problems for these new classes.
We introduce variants of the problem of counting walks of parameter-bounded length that are complete for the classes $\sh\parab\L$ (Theorems~\ref{thm:logreach-parabetaL-paralog-complete}, \ref{thm:reach-parabetal-complete} and \ref{thm:pathb-parabetal-complete-turingplog}), $\sh\parabt\L$ (Theorem~\ref{thm:reach-parabetataill-complete}) and $\sh\paraw\L$ (Theorem~\ref{thm:reach2cnf-parawl-complete}).
Since the same problem is shown to be complete for both, $\sh\parab\L$ and $\sh\parabt\L$, we get the somewhat surprising result that the closure of $\sh\parabt\L$ under para-logspace parsimonious reductions coincides with $\sh\parab\L$ (Corollary~\ref{cor:parabtaillog-clos=parabl}). 
Also, we show that a parameterised version of the problem of counting homomorphisms from coloured path structures to arbitrary structures is complete for $\sh\parab\L$ with respect to para-logspace parsimonious reductions (Theorem~\ref{thm:phom(p^*)-pbetaL-complete}).

Afterwards, we study variants of the problem of counting assignments to free first-order variables in a quantifier-free $\FO$ formula.
We identify complete problems for the classes $\sh\parab\L$ and $\sh\parawt\L$ in this context.
More specifically, counting assignments to free first-order variables in a quantifier-free formula with relation symbols of bounded arity and the syntactical locality of the variables in the formula being restricted ($p\text-\sh\MC{\Sigmarlocal}_a$) is shown to be complete for the classes $\sh\parabt\L$ and $\sh\parab\L$ with respect to para-logspace parsimonious reductions (Theorem~\ref{thm:mc-local-bounded}). 
When there is no restriction on the arity of relational symbols or on the locality of the variables, counting the number of satisfying assignments to free first-order variables in a quantifier-free formula in a given structure ($p\text-\sh\MC{\Sigma_0}$) is complete for $\sh\parawt\L$ with respect to para-logspace parsimonious reductions (Theorem~\ref{thm:mc-for-w1}).   

Finally, we consider a parameterised variant of the determinant function ($\pdet$) introduced by Chauhan and Rao~\cite{DBLP:conf/caldam/ChauhanR15}. 
By adapting the arguments of Mahajan and Vinay~\cite{DBLP:journals/cjtcs/MahajanV97}, we show that $\pdet$ on $(0,1)$-matrices can be expressed as the difference of two functions in $\sh\parab\L$, and is $\sh\parabt\L$-hard with respect to para-logspace many-one reductions (Theorem~\ref{thm:pdet-ub}).    
    
Figure~\ref{fig:class-diagram} shows a class diagram with complete problems. 
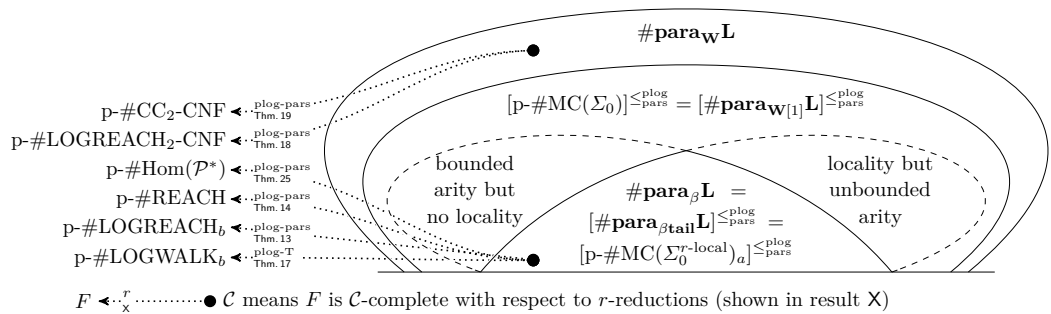
\begin{figure}
	\centering
	\resizebox{\textwidth}{!}{%
	\begin{tikzpicture}[dot/.style={circle,fill=black,inner sep=.5mm}]

\begin{scope}[yshift=-4.5cm,xshift=-1cm]
		\path[black,stealth'-*,thick,dotted] (-9,4) edge [in=180,out=0]  node[pos=0.24,fill=white,inner sep=.1mm,text width=2mm] {$\scriptstyle r$\\[-2mm]\tiny\textsf{X}} (-7,4);
		\node[anchor=west] at (-7,4) {$\calc$ means $F$ is $\calc$-complete \wrt $r$-reductions (shown in result \textsf{X})};
		\node[anchor=east] at (-9,4) {$F$};
\end{scope}
	
		\node[anchor=west,label={180:$\p\sh\cyclecover_2\CNF$}] (cyclecover) at (-7.7,2.75) {};
		\node[anchor=west,label={180:$\p\sh\LOG\reach_2\CNF$}] (reach2cnf) at (-7.7,2.25) {};
		\node[anchor=west,label={180:$\pcountHom{\mathcal P^*}$}] (phom) at (-7.7,1.75) {};
		\node[anchor=west,label={180:$\p\sh\reach$}] (preach)  at (-7.7,1.25) {};
		\node[anchor=west,label={180:$\p\sh\LOG\reach\bdegree$}] (preachlog) at (-7.7,.75) {};
		\node[anchor=west,label={180:$\p\sh\LOG\WALK\bdegree$}] (pwalklog) at (-7.7,.25) {};

		\path[black,stealth'-*,thick,dotted,shorten <=-3mm] (cyclecover) edge [in=180,out=0]  node[pos=0.106,fill=white,inner sep=.1mm,text width=1cm] {\tiny plog-pars\\[-2.5mm]\sffamily Thm.\,\ref{thm:cyclecover-parawl}} (-2.5,3.8);
		
		\path[black,stealth'-*,thick,dotted,shorten <=-3mm] (reach2cnf) edge [in=180,out=0]  node[pos=0.103,fill=white,inner sep=.1mm,text width=1cm] {\tiny plog-pars\\[-2.5mm]\sffamily Thm.\,\ref{thm:reach2cnf-parawl-complete}} (-2.5,3.8);

		\path[black,stealth'-*,thick,dotted,shorten <=-3mm] (phom) edge [in=180,out=0]  node[pos=0.103,fill=white,inner sep=.1mm,text width=1cm] {\tiny plog-pars\\[-2.5mm]\sffamily Thm.\,\ref{thm:phom(p^*)-pbetaL-complete}} (-2.5,.2);		

		\path[black,stealth'-*,thick,dotted,shorten <=-3mm] (preach) edge [in=180,out=0]  node[pos=0.105,fill=white,inner sep=.1mm,text width=1cm] {\tiny plog-pars\\[-2.5mm]\sffamily Thm.\,\ref{thm:reach-parabetal-complete}} (-2.5,.2);
				
		\path[black,stealth'-*,thick,dotted,shorten <=-3mm] (preachlog) edge node[pos=0.12,fill=white,inner sep=.1mm,text width=1cm] {\tiny plog-pars\\[-2.5mm]\sffamily Thm.\,\ref{thm:logreach-parabetaL-paralog-complete}} (-2.5,.2);

		\path[black,stealth'-*,thick,dotted,shorten <=-3mm] (pwalklog) edge [in=180,out=0]  node[pos=0.088,fill=white,inner sep=.1mm,text width=.8cm] {\tiny plog-T\\[-2.5mm]\sffamily Thm.\,\ref{thm:pathb-parabetal-complete-turingplog}} (-2.5,.2);

		\path[use as bounding box] (-6.5,-.1) rectangle (6.5,5);
		
		\node[anchor=south,text width=4.5cm,align=center] at (0,0) {$\#\parab\L=\clospara{\#\parabt\L}=\clospara{\p\#\MC{\Sigmarlocal}_a}$};

		\path[white,ultra thick,shorten >=.4mm,shorten <=1mm] (-3.5,0) edge[out=160,in=130,looseness=1.99] (3.5,0);
		\path[white,ultra thick,shorten <=.4mm,shorten >=1mm] (-3.5,0) edge[out=50,in=20,looseness=1.98] (3.5,0);
		\path[black] (-3.5,0) edge[out=160,in=130,looseness=2] (3.5,0);
		\path[black] (-3.5,0) edge[out=50,in=20,looseness=2] (3.5,0);
		\node[anchor=south,text width=2cm,align=center] at (-3.6,.65) {bounded arity but no locality};
		\node[anchor=south,text width=2cm,align=center] at (3.3,.65) {locality but unbounded arity};

		\path[white,ultra thick] (-4.5,0) edge[out=140,in=40,looseness=2.1] (4.5,0);
		\path[black] (-4.5,0) edge[out=140,in=40,looseness=2.1] (4.5,0);
		\node[anchor=south] at (0,2.5) {$\clospara{\p\#\MC{\Sigma_0}}=\clospara{\#\parawt\L}$};
	
		\path[white,ultra thick] (-4.8,0) edge[out=140,in=40,looseness=2.5] (4.8,0);
		\path[black] (-4.8,0) edge[out=140,in=40,looseness=2.5] (4.8,0);
		\node[anchor=south] at (0,3.8) {$\#\paraw\L$};
		
		\path[white,dashed,ultra thick,shorten <=1mm,shorten >=1mm] (0,2.08) edge[out=13.4,in=20,looseness=2.42] (3.5,0);
		\path[white,dashed,ultra thick,shorten <=1mm,shorten >=1mm] (0,2.08) edge[out=166.6,in=160,looseness=2.42] (-3.5,0);

		\draw[black] (-5.25,0) -- (5.25,0);
	
	\end{tikzpicture}}
	\caption{Diagram assuming pair-wise difference of studied classes with list of complete problems.}\label{fig:class-diagram}
\end{figure}

 \subparagraph*{Main Techniques}
 Our primary contribution is laying foundations for the study of parameterised logspace-bounded counting complexity classes.  
 The completeness results in Theorems~\ref{thm:reach-parabetal-complete} and~\ref{thm:mc-for-w1} required a quantised normal form for $k$-bounded nondeterministic Turing Machines (NTMs) (Lemma~\ref{normalise-lemma}). 
 This normal form quantises the nondeterministic steps of a $k$-bounded NTM into chunks of $\log n$-many steps such that the total number of accepting paths remains the same. 
 We believe that the normal form given in Lemma~\ref{normalise-lemma} will be useful in the structural study of parameterised counting classes.   
 The study of $\pdet$ involved definitions of so-called parameterised clow sequences generalising the classical notion \cite{DBLP:journals/cjtcs/MahajanV97}. 
 Besides, a careful assignment of signs to clow sequences was necessary for our complexity analysis of $\pdet$.

\subparagraph*{Related Results}
Chen and M\"uller~\cite{DBLP:journals/toct/ChenM15} studied the parameterised complexity of counting homomorphisms and divided the problems into four equivalence classes. 
However, their equivalence is only based on reductions among variants of counting homomorphisms but not in terms of concrete complexity classes.  
In this context, Dalmau and Johnson~\cite{DBLP:journals/tcs/DalmauJ04} investigated the complexity of counting homomorphisms as well, and provided generalisations of results by Grohe~\cite{DBLP:journals/jacm/Grohe07} to the counting setting.
A similar classification regarding our classes can give new insights into the complexity of the homomorphism problem (Open Problem~\ref{op:phom-natural-w1L-complete}).
The behaviour of our classes with respect to reductions is similar to the one observed for $\W{1}$ by Bottesch~\cite{DBLP:conf/iwpec/Bottesch17,DBLP:conf/mfcs/Bottesch18}. 

\subparagraph*{Outline}
In Section~\ref{sec:prelims}, we introduce the considered machine model, as well as needed foundations of parameterised complexity theory, and logic.
Section~\ref{sec:class} presents structural results regarding our introduced notions in the parameterised counting context.
Afterwards, in Section~\ref{sec:complete}, our main results on counting walks, FO-assignments, homomorphisms as well as regarding the determinant are shown.
Finally, we conclude in Section~\ref{sec:conclusion}.

\noindent Due to space limitations, all proof details can be found in the appendix.

\section{Preliminaries}\label{sec:prelims}

In this section, we describe the computational models and complexity classes that are relevant for parameterised complexity theory. 
We use standard notions and notations from parameterised complexity theory \cite{DBLP:series/mcs/DowneyF99,DBLP:series/txtcs/FlumG06}. 
Without loss of generality, we restrict the input alphabet to be $\{0,1\}$. 

\subparagraph*{Turing Machines (TMs) with Random Access to the Input}
We consider an intermediate model between TMs and Random Access Machines (RAMs) on words. 
Particularly, we make use of TMs that have random access to the input tape and can query relations in input structures in constant time.
This can be achieved with two additional tapes of logarithmic size (in the input length), called the \emph{random access tape} and the \emph{relation query tape}. 
On the former, the machine can write the index of an input position to get the value of the respective bit of the input. 
On the relation query tape, the machine can write a tuple $t$ of the input structure together with a relation identifier $R$ to get the bit stating whether $t$ is in the relation specified by $R$.
Note that our model achieves linear speed-up for accessing the input compared to the standard TM model. 
(This is further justified by Remark~\ref{remark:tail-ram}.) 
For convenience, in the following, whenever we speak about TMs we mean the TM model with random access to the input.
Denote by $\SPACETIME{s}{t}$ ($\NSPACETIME{s}{t}$) with $s,t\colon\N\to\N$ the class of languages that are accepted by (nondeterministic) TMs with space-bound $O(s(n))$ and time-bound $O(t(n))$. 
A $\calc$-machine for $\calc=\SPACETIME{s}{t}$ ($\calc=\NSPACETIME{s}{t}$) is a (nondeterministic) TM that is $O(s(n))$ space-bounded and $O(t(n))$ time-bounded.

NTMs are a generalisation of TMs where multiple transitions from a given configuration are allowed. 
This can be formalised by allowing the transition to be a relation rather than a function. 
Sometimes, it is helpful to view NTMs as deterministic TMs with an additional tape, called the (nondeterministic) choice tape which is read-only.
Let $M$ be a deterministic TM with a choice tape. 
A nondeterministic step in the computation of $M$ is a step where $M$ moves the head on the choice tape to a cell that was not visited before.
The \emph{language accepted by $M$}, $L(M)$ is defined as
\[
\{\,x \in \{0,1\}^*\mid\exists y\in \{0,1\}^*\text{ s.t.\ } M \text{ accepts } x\text{ when the choice tape is initialised with }y\,\}.
\]
Notice that in this framework the machine $M$ has two-way access to the choice tape.
Furthermore, resource bounds are with respect to the input only (the content of the choice tape is not part of the input) and the choice tape is not counted for space bounds.
In this paper, we regard nondeterministic TMs as deterministic ones with a choice tape.

Before we proceed to the definition of parameterised complexity classes, a clarification on the choice of the model is due. 
Note that RAMs and NRAMs are often appropriate in the parameterised setting as exhibited by several authors (see, e.g., the textbook of Flum and Grohe~\cite{DBLP:series/txtcs/FlumG06}). 
They allow to define bounded nondeterminism quite naturally. 
On the other hand, in the classical setting, branching programs (BPs) are one of the fundamental models that represent space bounded computation, in particular logarithmic space. 
Since BPs inherently use bit access, this relationship suggests the use of a bit access model.
Consequently, we consider a hybrid computational model: Turing machines with random access to the input.
While the computational power of this model is the same as that of Turing machines and RAMs, it seems to be a natural choice to guarantee a certain robustness, allowing for desirable characterisations of our classes.

\subparagraph*{Parameterised Complexity Classes} 
Let $\FPT$ denote the set of parameterised problems that can be decided by a deterministic TM running in time $f(k)\cdot p(|x|)$ for any input $(x,k)$ where $f$ is a computable function and $p$ is a polynomial.
Two central classes in parameterised complexity theory are $\W{1}$ and $\W\P$ which were originally defined via special types of circuit satisfiability~\cite{DBLP:series/txtcs/FlumG06}. 
Flum, Chen and Grohe~\cite{DBLP:conf/coco/ChenF03} obtained a characterisation of these two classes using the following notion of $k$-bounded NTMs. 
\begin{definition}[$k$-bounded TMs]
An NTM $M$, working on inputs of the form $(x,k)$ with $x \in \{0,1\}^*, k \in \mathbb{N}$, is said to be \emph{$k$-bounded} if for all inputs $(x,k)$ it reads at most $f(k)\cdot \log |x|$ bits from the choice tape on input $(x,k)$, where $f$ is a computable function.
\end{definition} 

Here, we will work with the following characterisation of $\W\P$.
The characterisation for $\W1$ needs another concept that will be defined on the next page.
\begin{proposition}[\cite{DBLP:conf/coco/ChenF03,DBLP:series/txtcs/FlumG06}]
$\W\P$ is the set of all parameterised problems that can be accepted by $k$-bounded $\FPT$-machines with a choice tape.
\end{proposition}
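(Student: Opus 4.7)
The plan is to prove both directions of this characterisation separately, each time using the weighted circuit satisfiability problem for general Boolean circuits, $p\text{-}\mathrm{WSat}(\mathrm{CIRC})$, as the bridge — since this problem is known to be $\W\P$-complete under fpt-reductions and therefore plays the same role for $\W\P$ that $\mathrm{SAT}$ plays for $\NP$.

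For the inclusion that $k$-bounded $\FPT$-machines with a choice tape accept only $\W\P$-languages, I would start with such a machine $M$ with time bound $f(k)\cdot p(|x|)$ that reads at most $h(k)\cdot\log|x|$ bits of the choice tape. On input $(x,k)$, a Cook--Levin style simulation of $M$'s deterministic computation (viewing the choice bits as free inputs) produces a Boolean circuit $C'_{x,k}$ of size polynomial in $f(k)\cdot p(|x|)$ with exactly $h(k)\cdot\log|x|$ free input wires, satisfiable iff $M$ accepts $(x,k)$. To match the weighted formulation, I would replace each choice bit by a weight-one selector over a fresh block of $\log|x|$ variables equipped with a small ``exactly-one-per-block'' enforcing subcircuit; stacking these $h(k)$ blocks in front of $C'_{x,k}$ yields a circuit $C_{x,k}$ admitting a satisfying assignment of weight exactly $h(k)$ iff $M$ has an accepting computation. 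The map $(x,k)\mapsto (C_{x,k}, h(k))$ is an fpt-reduction to $p\text{-}\mathrm{WSat}(\mathrm{CIRC})$ and so places the language in $\W\P$.

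For the reverse inclusion, it suffices to build a $k$-bounded $\FPT$-machine deciding $p\text{-}\mathrm{WSat}(\mathrm{CIRC})$, since membership in $\W\P$ is then transferred along the fpt-reduction by running the reduction deterministically on the work tape and invoking the $k$-bounded machine on the produced instance. On input $(C,k)$ with $N$ input gates, the machine writes $k\cdot\lceil\log N\rceil$ bits onto the choice tape, parses them as $k$ indices $i_1,\dots,i_k\in[N]$, and deterministically evaluates $C$ in polynomial time on the assignment that sets precisely $x_{i_1},\dots,x_{i_k}$ to $1$, accepting iff the indices are pairwise distinct and $C$ outputs $1$. Since $k\cdot\lceil\log N\rceil \le g(k)\cdot\log|x|$ for an appropriate computable $g$, this machine is $k$-bounded, and it runs in fpt time because circuit evaluation lies in $\P$.

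The main obstacle is the first direction: one must carefully ensure that the selector-gadget padding does not introduce spurious satisfying assignments of the intended weight and that the ``exactly-one-per-block'' enforcement remains within fpt resources. Once the one-to-one correspondence between accepting choice strings of $M$ and weight-$h(k)$ satisfying assignments of $C_{x,k}$ is verified, both directions amount to standard simulations; the reverse direction is essentially routine, as bounded nondeterminism is delegated entirely to block-selection and the deterministic core reduces to circuit evaluation.
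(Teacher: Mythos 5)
This proposition is stated in the paper with citations to Chen--Flum and Flum--Grohe rather than proved, so the relevant comparison is with the standard argument from those references, which your proposal essentially reconstructs: both directions go through $p\text{-}\mathrm{WSat}(\mathrm{CIRC})$ and a Cook--Levin simulation, and your second direction (guessing $k$ indices of the weight-$k$ variables with $k\cdot\lceil\log N\rceil$ choice bits and evaluating the circuit deterministically) is correct and is exactly the standard one.

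There is, however, a genuine gap in the first direction, precisely at the step that is the heart of the machine characterisation. Your circuit $C'_{x,k}$ has $h(k)\cdot\log|x|$ free choice-bit inputs, and you propose to ``replace each choice bit by a weight-one selector over a fresh block of $\log|x|$ variables'' while claiming the resulting circuit has satisfying assignments of weight exactly $h(k)$. The arithmetic does not work out under either reading of this sentence. If every one of the $h(k)\cdot\log|x|$ choice bits gets its own weight-one block, the weight of the intended satisfying assignments is $h(k)\cdot\log|x|$, which depends on $|x|$ and hence cannot serve as the parameter of the target $\mathrm{WSat}$ instance in an fpt-reduction (the new parameter must be bounded by a computable function of $k$ alone). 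If instead you mean there are only $h(k)$ blocks, then each block must encode $\log|x|$ choice bits, and a block of $\log|x|$ variables with an exactly-one constraint can encode only about $\log\log|x|$ bits of information, not $\log|x|$. The correct gadget partitions the $h(k)\cdot\log|x|$ choice bits into $h(k)$ groups of $\log|x|$ bits and introduces, for each group, a block of $|x|$ (i.e.\ $2^{\log|x|}$) fresh variables together with an exactly-one constraint and a small decoder subcircuit that converts the index of the unique true variable into the $\log|x|$ bits fed to $C'_{x,k}$; this yields weight exactly $h(k)$ and the required bijection between accepting choice strings and weight-$h(k)$ satisfying assignments. This selection trick, trading $\log|x|$ nondeterministic bits for one unit of Hamming weight, is exactly what makes $k$-bounded nondeterminism match weighted satisfiability, so the proof does not go through without it.
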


Now, we recall three complexity theoretic operators that define parameterised complexity classes from an arbitrary classical complexity class, namely $\para, \paraw$ and $\parab$, following the notation of Stockhusen~\cite{DBLP:phd/dnb/Stockhusen17}.

\begin{definition}[\cite{DBLP:journals/iandc/FlumG03}]
 Let $\calc$ be any complexity class. 
 Then $\para\calc$ is the class of all parameterised problems $P \subseteq \{0,1\}^* \times \mathbb{N}$ for which  there is a computable function $π \colon \mathbb{N} \to \{0,1\}^*$ and a language $L \in \calc$ with $L \subseteq \{0,1\}^* \times \{0,1\}^*$ \stfa $x \in\{0,1\}^*$, $k \in \mathbb{N}$:
 $(x,k) \in P \Leftrightarrow (x, π(k)) \in L.$
\end{definition}
Notice that $\para\P = \FPT$ is the standard precomputation characterisation of $\FPT$~\cite{DBLP:journals/iandc/FlumG03}. A $\para\calc$-machine for $\calc=\SPACETIME{s}{t}$ ($\calc=\NSPACETIME{s}{t}$) is a (nondeterministic) TM, working on inputs of the form $(x,k)$, that is $O(s(|x|+f(k)))$ space-bounded and $O(t(|x|+f(k)))$ time-bounded where $f$ is a computable function.

The class $\XP$ (problems accepted in time $|x|^{f(k)}$ for a computable function $f$) and the $\W{}$-hierarchy~\cite{DBLP:series/txtcs/FlumG06} capture intractability of parameterised problems.
Though the $\W{}$-hierarchy was defined using the weighted satisfiability of formulas with bounded weft, which is the number of alternations between gates of high fan-in, Flum and Grohe~\cite{DBLP:journals/iandc/FlumG03} characterised central classes in this context using bounded nondeterminism. 
Stockhusen and Tantau~\cite{DBLP:conf/iwpec/StockhusenT13,DBLP:phd/dnb/Stockhusen17} considered space-bounded and circuit-based parallel complexity classes with bounded nondeterminism. 

The following definition is a more formal version of the one given by Stockhusen and Tantau~\cite[Def.~2.1]{DBLP:conf/iwpec/StockhusenT13}.
They use $\text{para}\exists^{\leftrightarrow}_{f\log}\calc$ instead of $\paraw\calc$ for a complexity class $\calc$.
\begin{definition}
Let $\calc=\SPACETIME{s}{t}$ for some $s,t\colon\mathbb N\to\mathbb N$. 
Then, $\paraw\calc$ is the class of all parameterised problems $Q$ that can be accepted by a $k$-bounded $\para\calc$-machine with a choice tape.
\end{definition}

For example, $\paraw\L$ denotes the parameterised version of $\NL$ with $k$-bounded nondeterminism. 
One can also restrict this model by only giving one-way access to the nondeterministic tape. 
The following definition is a more formal version of the one of Stockhusen and Tantau~\cite[Def.~2.1]{DBLP:conf/iwpec/StockhusenT13} who use the symbol $\text{para}\exists^{\to}_{f\log}$ instead.
\begin{definition}
Let $\calc=\SPACETIME{s}{t}$ for some $s,t\colon\mathbb N\to\mathbb N$. 
Then $\parab\calc$ denotes the class of all parameterised problems $Q$ that can be accepted by a $k$-bounded $\para\calc$-machine with a choice tape with one-way read access to the choice tape.
\end{definition}
As there is only read-once access to the nondeterministic bits, $\parab\calc$ can be equivalently defined via nondeterministic transitions and without using a choice tape.

Another notion studied in parameterised complexity is tail-nondeterminism.
A $k$-bounded machine $M$ is \textit{tail-nondeterministic} if there exists a computable function $g$ such that on all inputs $(x,k)$, $M$ makes at most $g(k)\cdot \log n$ further steps in the computation, after its first nondeterministic step. 
The value of this concept is evidenced by the machine characterisation of $\W{1}$~(Chen~et~al.~\cite{DBLP:conf/coco/ChenF03}). 
We hope to get new insights by transferring this concept to space-bounded computation.
In consequence, we introduce the tail-nondeterministic versions of $\paraw\calc$ and $\parab\calc$ which are denoted by $\parawt\calc$ and $\parabt\calc$. 
 
\begin{remark}\label{remark:tail-ram}
Note that it is important to have random access to the input tape in the case of tail-nondeterminism.
Without random access to input bits and input relations, a TM cannot even make reasonable queries to the input in time $g(k)\cdot\log(n)$.
\end{remark}

\subparagraph*{Logic}
We assume basic familiarity with first-order logic (FO).
A \emph{vocabulary} is a finite ordered set of relation symbols and constants. 
Each relation symbol $R$ has an associated \emph{arity} $\arity R\in\mathbb N$.
Let $\tau$ be a vocabulary. A \emph{$\tau$-structure} $\struc A$ consists of a nonempty finite set $\dom(\struc A)$ (its \emph{universe}), and an \emph{interpretation} $R^{\struc A}\subseteq \dom(\struc A)^{\arity R}$ for every relation symbol $R\in\tau$.
Syntax and semantics are defined as usual (see, e.g., the textbook of Ebbinghaus~et~al.~\cite{DBLP:books/daglib/0080659}).
Let $\struc A$ be a structure with universe $A$.
We denote by $|\struc A|$ the \emph{size of a binary encoding of $\struc A$}, \ie, the number of bits required to represent the universe and  relations as lists of tuples. 
For example, if $R$ is a relation of arity $3$, then  $R^\struc{A}$ is represented as a subset of $A^3$, \ie, a set of triples over $A$. 
This requires $O(|R^{\struc A}|\cdot\arity{R})\cdot\log|A|)$ bits to represent the relation $R^{\struc A}$, assuming $\log |A|$ bits to represent an element in $A$.    
As analysed by Flum~et~al.~\cite[Sect.~2.3]{DBLP:journals/jacm/FlumFG02}, this means that $|\struc A|\in\Theta((|A|+|\tau|+\sum_{R\in\tau} |R^{\struc A}|\cdot\arity{R})\cdot\log|A|)$. 
Also recall that the fragment $\Sigma_i$ (for $i \in \mathbb{N}$) refers to the class of \FO-formulas with $i$ quantifier blocks alternating between existential and universal quantifiers and the outermost quantifier being existential.

\section{Parameterised Counting in Logarithmic Space}
\label{sec:class}
Now, we define the counting counterparts based on the parameterised classes defined using bounded nondeterminism.
The definitions of the decision classes based on tail-nondeterminism can be found in the appendix.
A \emph{parameterised function} is a function $F\colon\{0,1\}^* \times \mathbb{N} \to \mathbb{N}$. For an input $(x,k)$ of $F$ with $x \in \{0,1\}^*$, $k \in \mathbb{N}$, we call $k$ the \emph{parameter} of that input. 
If $\calc$ is a complexity class and a parameterised function $F$ belongs to $\calc$, we say that $F$ is $\calc$-computable.
Let $M$ be a TM.
We denote by $\acc_M(x)$ the number of accepting paths of $M$ on input $x$, and similarly, $\acc_M(x,k)$, for parameterised inputs of the form $(x,k)$.
\begin{definition}
Let $\calc=\SPACETIME{s}{t}$ for some $s,t\colon\mathbb N\to\mathbb N$. 
Then a parameterised function $F$ is in $\sh\paraw\calc$ if there is a $k$-bounded nondeterministic $\para\calc$-machine $M$ such that for all inputs $(x,k)$, we have that $\acc_M(x,k)=F(x,k)$. Furthermore, $F$ is in
\begin{itemize}
	\item $\sh\parab\calc$ if there is such an $M$ with read-once access to its nondeterministic bits,
	\item $\sh\parawt\calc$ if there is such an $M$ that is tail-nondeterministic, and
	\item $\sh\parabt\calc$ if there is such an $M$ with read-once access to its nondeterministic bits that is tail-nondeterministic.
\end{itemize}
\end{definition} 
By definition, we get $\sh\parabt\L\subseteq\calc\subseteq\sh\paraw\L$ for $\calc\in\{\sh\parab\L,\sh\parawt\L\}$. 
Note that the restriction of the above classes to $k$-boundedness is crucial.
If we drop this restriction, the machines are able to access $2^{f(k)+log|x|}$, so fpt-many, nondeterministic bits. 
Regarding multiple-read access, this allows for solving SAT (with constant parameterisation). 
So this class then would contain a $\para\NP$-complete problem.
For read-once access, we expect a similar result for $\para\NL$.
When adding tail-nondeterminism, we implicitly get $k$-boundedness again, so this does not lead to new classes. 

The following lemma shows that $\para\L$-machines can be normalised in a specific way.
This normalisation will play a major role in Section~\ref{sec:complete}.

\begin{lemma}\label{normalise-lemma}
	For any $k$-bounded nondeterministic $\para\L$-machine $M$ there exists a $k$-bounded nondeterministic $\para\L$-machine $M'$ with $\sh\acc_M(x,k) = \sh\acc_{M'}(x,k)$ for all inputs $(x,k)$ such that $M'$  has the following properties:
	\begin{enumerate}[(1)]
        \item $M'$ has a unique accepting configuration,
        \item  on any input $(x,k)$, every computation path of $M'$ accesses exactly $g(k)\cdot \log |x|$ nondeterministic bits (for some computable function $g$), and $M'$ counts on an extra tape (\tapeS) the number of nondeterministic steps, and
        \item $M'$ has an extra tape (\tapeC) on which it remembers previous nondeterministic bits, resetting the tape after every $\log|x|$-many nondeterministic steps.
	\end{enumerate}
	Additionally, if $M$ has read-once access to its nondeterministic bits, or is tail-nondeterministic, or both, then $M'$ also has these properties.
\end{lemma}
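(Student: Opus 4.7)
The plan is to build $M'$ by composing three modifications of $M$, each preserving the accepting count and each compatible with the read-once and tail access restrictions. Let $f$ be the computable function witnessing that $M$ is $k$-bounded, and set $g(k):=f(k)$.

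First, I would equip $M'$ with the auxiliary tapes \tapeS\ and \tapeC\ together with a mechanism to detect whether the current step of $M$ is nondeterministic. In the read-once case, every access to the choice tape is nondeterministic by definition. In the multi-read case, $M'$ maintains an $O(\log|x|)$-bit ``visit-marker'' tape in parallel with the (bounded-length) prefix of the choice tape actually used; a step is nondeterministic iff it touches an unmarked cell, which is then marked. On each nondeterministic step, $M'$ increments the binary counter on \tapeS\ (amortised $O(1)$ per increment), appends the freshly read bit to \tapeC, and clears \tapeC\ whenever the counter becomes a multiple of $\log|x|$. All bookkeeping fits in logspace since \tapeS\ stores a number up to $f(k)\cdot\log|x|$ and \tapeC\ at most $\log|x|$ bits at a time.

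Next, once the simulation of $M$ has halted, $M'$ enters a padding phase in which it keeps consuming fresh nondeterministic bits (while updating \tapeS\ and \tapeC) until the counter on \tapeS\ equals exactly $g(k)\cdot\log|x|$. On a branch where $M$ accepted, $M'$ requires every padding bit to equal $0$, so exactly one padding continuation per original accepting path survives; on a branch where $M$ rejected, padding bits are read without constraint and the branch is rejected. This gives property~(2) together with $\sh\acc_{M'}(x,k)=\sh\acc_M(x,k)$. On each surviving accepting branch $M'$ then runs a purely deterministic cleanup that blanks all work tapes (including \tapeS, \tapeC, and the visit marker), returns every head to its origin, and enters a single designated accepting state $q_{\mathrm{acc}}$, yielding property~(1). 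Property~(3) is obtained by the maintenance of \tapeC\ described in the first step.

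To verify the last clause: read-once is preserved because every choice-tape cell touched (whether inherited from $M$ or created by padding) is visited exactly once; tail-nondeterminism is preserved because all additions occur after $M$'s first nondeterministic step, with per-step overhead amortised to $O(1)$, padding contributing at most $g(k)\cdot\log|x|$ steps, and cleanup $O(\log|x|)$, so the new tail length remains of the form $g'(k)\cdot\log|x|$. The main technical obstacle is the multi-read, tail-nondeterministic subcase, in which nondeterministic steps must be detected in $O(1)$ amortised time per step: this forces the use of the parallel visit-marker tape rather than a pointer to the maximum visited position, whose $O(\log|x|)$-time comparisons would blow up the tail budget by a logarithmic factor and break the tail bound.
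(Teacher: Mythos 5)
Your construction is, in all essentials, the paper's own proof: the binary counter on \tapeS{} incremented at each nondeterministic step, the padding phase that reads extra nondeterministic bits and forces them all to be $0$ so that exactly one continuation of each accepting path survives (and pads rejecting paths too, so that \emph{every} path reads exactly $g(k)\cdot\log|x|$ bits), the final deterministic cleanup to a single accepting configuration, and the block-wise reset of \tapeC{} after every $\log|x|$ nondeterministic steps, together with the observation that none of these modifications introduces new nondeterminism before the tail or a second read of any choice-tape cell.

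The one point where you go beyond the paper is the explicit mechanism for \emph{detecting} that a step is nondeterministic in the multi-read case, and that gadget does not work as stated. The prefix of the choice tape actually used has up to $f(k)\cdot\log|x|$ cells, so a visit-marker work tape running in parallel with it cannot be ``$O(\log|x|)$-bit''; sized to cover the prefix it would occupy $f(k)\cdot\log|x|$ work-tape cells, which exceeds the $O(\log(|x|+f(k)))$ space bound of a $\para\L$-machine whenever $f$ is unbounded (work tapes, unlike the choice tape, are charged for space). The space-legal alternative is to exploit the fact that the visited choice-tape cells always form a prefix $[0,p_{\max}]$, so a step is nondeterministic iff the head moves right while the offset $p_{\max}-p$ is zero; this offset fits in $O(\log f(k)+\log\log|x|)$ bits, and the resulting per-step cost is harmless everywhere except in exactly the multi-read, tail-nondeterministic subcase you single out, where a naive zero-test per simulated step would multiply the tail length by a $\log\log|x|$ factor. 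So you have correctly identified a low-level issue that the paper's proof passes over in silence, but your fix trades the time problem for a space violation rather than resolving it; one still needs a constant-overhead zero-testable counter (or an equivalent reorganisation of the simulation) to close this subcase. Everything else in your argument matches the paper and is correct.
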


The following result follows from a simple simulation of nondeterministic machines by deterministic ones.
Let $\F\FPT$ be the class of functions computable by $\FPT$-machines with output.
\begin{theorem}
\label{thm:fpt-ub}
	$\sh\parab\L\subseteq\F\FPT$.
\end{theorem}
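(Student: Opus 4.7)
The plan is to exhibit a deterministic $\FPT$-time algorithm that, given a $k$-bounded read-once nondeterministic $\para\L$-machine $M$ and an input $(x,k)$, computes $\acc_M(x,k)$ by dynamic programming on the time-layered configuration graph of $M$.

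First, I would bound the size of this graph. A configuration of $M$ on $(x,k)$ is specified by the current state, the contents and head position of the $O(\log(|x|+f(k)))$-bit work tape, the $O(\log|x|)$-bit random-access pointer to the input, and the position on the read-once choice tape (at most $f(k)\cdot\log|x|$). Altogether, a configuration is encoded in $O(\log(|x|+f(k)))$ bits, yielding $(|x|+f(k))^{O(1)}$ configurations. Since $M$ is time-bounded by $T=(|x|+f(k))^{O(1)}$, I would further index each configuration by its time step $t\in\{0,\dots,T\}$, obtaining a layered DAG whose source-to-accept paths biject with accepting computations of $M$. This DAG has $(|x|+f(k))^{O(1)}$ vertices, a quantity bounded by $g(k)\cdot|x|^{O(1)}$ for some computable $g$.

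Next, I would use the read-once restriction to bound the magnitude of the intermediate counts. Since at most $f(k)\log|x|$ nondeterministic bits are consumed on any path, $\acc_M(x,k)\leq 2^{f(k)\log|x|}=|x|^{f(k)}$, so every DP value fits into $O(f(k)\log|x|)$ bits. I would initialise $P(C_0,0)=1$ and set $P(C,t)=\sum_{C'\to C}P(C',t-1)$ for every layered configuration, eventually returning $\sum_{C\text{ accepting}}P(C,T)$. Each of the $(|x|+f(k))^{O(1)}$ cells requires a single addition of $O(f(k)\log|x|)$-bit numbers, so the overall running time has the form $g'(k)\cdot|x|^{O(1)}$, placing the computation in $\F\FPT$.

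No serious obstacle arises; the only subtle point is making the DP well-founded, which the time-layering handles directly. Alternatively, one could first invoke Lemma~\ref{normalise-lemma} to normalise $M$ so that every accepting path has the same length and a unique accepting configuration, then layer by the counter on tape \tapeS and run the identical DP.
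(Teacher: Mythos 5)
Your proposal is correct and shares the paper's skeleton: build the fpt-size configuration graph of the $k$-bounded $\para\L$-machine and count source-to-accept paths. Where you genuinely diverge is in \emph{how} the paths are counted. The paper's proof enumerates all paths from the start configuration one by one (a brute-force depth-first sweep), whereas you run a dynamic program over a time-layered version of the graph. Your route is the more robust one: the number of computation paths of a $k$-bounded machine is only bounded by $|x|^{f(k)}$, so a literal path-by-path enumeration runs in XP rather than FPT time, while your DP touches each of the fpt-many (configuration, time) cells once and manipulates numbers of $O(f(k)\log|x|)$ bits, giving a genuinely $g'(k)\cdot|x|^{O(1)}$ bound. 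Your explicit observation that read-once, $k$-bounded nondeterminism caps every intermediate count at $|x|^{f(k)}$ is a nice touch (though even the crude $2^{\text{fpt}}$ bound on path counts in a DAG would keep the arithmetic fpt-cheap). Two small points to tidy up: (i) if the two nondeterministic successors of a configuration happen to coincide, the layered graph must be treated as a multigraph (sum over transitions, not over distinct predecessors) for the bijection with accepting computations to hold; and (ii) accepting configurations may be reached at layers $t<T$, so you should either sum $P(C_{\mathrm{acc}},t)$ over all $t$ or pad halting configurations to layer $T$ --- your fallback of first applying Lemma~\ref{normalise-lemma} to get a unique accepting configuration, as the paper does, handles this cleanly.
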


Using the notion of oracle machines (see, e.g.,~\cite{DBLP:books/daglib/0086373}), we define Turing, metric, and parsimonious reductions computable in $\para\L$.
For our purposes, the oracle tape is always exempted from space restrictions which is often the case in the context of logspace Turing reductions~\cite{DBLP:journals/jcss/Buss88}. A study on the effect of changing this assumption might be interesting.
\begin{definition}[Reducibilities]\label{def:reducibilities}
Let $F, F' \colon \{0,1\}^* \times \mathbb{N} \to \mathbb{N}$ be two functions. 
Then, $F$ is \emph{para-logspace Turing reducible} to $F'$, $F \leqparaturing F'$, if there is a $\para\L$ oracle TM $M$ that computes $F$ with oracle $F'$ and the parameter of any oracle query of $M$ is bounded by a computable function in the parameter. 
If there is such an $M$ that uses only one oracle query, then $F$ is \emph{para-logspace metrically reducible} to $F'$, $F \leqparalogmet F'$. 
If there is such an $M$ that returns the answer of the first oracle query, then $F$ is \emph{para-logspace parsimoniously reducible} to $F'$, $F \leqparalogpars F'$.
\end{definition}
Note that the definition of parsimonious reductions ensures that the size of the witness set is preserved by the fact that $M$ immediately returns the answer of its only oracle query (without further computations).
For any reducibility relation $\preccurlyeq$ and any complexity class $\calc$, $[\calc]^\preccurlyeq \dfn\{\,A\mid\exists C\in\calc\text{ \ST\ }A\preccurlyeq C \,\}$ is the \emph{$\preccurlyeq$-closure of $\calc$}.

Next, we show that both new classes without tail-nondeterminism are closed under $\leqparalogpars$.
\begin{lemma}\label{lem:closure-legparalogpars-reductions}
	The classes $\sh\paraw\L$ and $\sh\parab\L$ are closed under $\leqparalogpars$.
\end{lemma}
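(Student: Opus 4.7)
Suppose $F \leqparalogpars F'$ via a $\para\L$ oracle transducer $M$ which, on input $(x,k)$, writes a single query $(y,k')$ onto its oracle tape and returns the oracle's answer; then $k' \le g(k)$ for some computable $g$ and $F(x,k) = F'(y,k')$. Let $N$ witness $F' \in \sh\paraw\L$ (the $\sh\parab\L$ case is identical except for one extra remark at the end), so $N$ is a $k$-bounded nondeterministic $\para\L$-machine with $\acc_N(y,k') = F'(y,k')$ for all $(y,k')$. The goal is to build a $k$-bounded nondeterministic $\para\L$-machine $N'$ with $\acc_{N'}(x,k) = F(x,k)$.

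The construction is the standard \emph{on-demand} composition. The machine $N'$ computes $k'$ once deterministically by running $M$ and stores it (this fits, as $k' \le g(k)$). It then simulates $N$ on input $(y,k')$, keeping the current configuration of $N$ on its work tape. Whenever $N$ asks for a bit of $y$ via its random-access tape or makes a relation query (which reduces to a constant number of bit lookups into the binary encoding of the queried structure, as described in Section~\ref{sec:prelims}), $N'$ writes the requested position into a counter, restarts $M$ from scratch on $(x,k)$, scans its oracle-tape output until the requested position is produced, reads off the bit, and returns to $N$. The nondeterministic choices made by $N'$ are precisely those of $N$, so computation paths of $N'$ are in count-preserving bijection with computation paths of $N$ on $(y,k')$, yielding $\acc_{N'}(x,k) = \acc_N(y,k') = F'(y,k') = F(x,k)$.

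For the resource analysis, since $M$ runs in time polynomial in $|x| + f_M(k)$ we have $|y| \le (|x|+f_M(k))^{O(1)}$, so $\log|y| = O(\log|x| + \log f_M(k))$. The work tape of $N'$ holds simultaneously the configuration of $N$ of size $O(\log(|y|+f_N(k')))$, the position counter of size $O(\log|y|)$, the configuration of $M$ during a restart of size $O(\log(|x|+f_M(k)))$, and the stored $k'$; all of these fit into $O(\log(|x|+h(k)))$ for a suitable computable $h$. The number of nondeterministic bits that $N'$ consults equals that of $N$, at most $f_N(k') \cdot \log|y| \le h'(k) \cdot \log|x|$ for a computable $h'$ and for all but finitely many inputs, with the finitely many small exceptions absorbed into the transition table. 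For the $\sh\parab\L$ case, the read-once property is inherited automatically because the simulation of $M$ is purely deterministic and touches no bits of the choice tape.

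The only real obstacle is that $y$ can be polynomially long and must not be materialised on the work tape; the standard remedy, used above, is to restart $M$ from scratch for every bit request, exchanging polynomially more time for logarithmic space. The argument applies uniformly to both $\sh\paraw\L$ and $\sh\parab\L$, which completes the proof.
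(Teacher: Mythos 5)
Your proposal is correct and follows essentially the same route as the paper: the standard on-demand composition where the query parameter is computed and stored once, each bit of the (possibly polynomially long) query string is recomputed by restarting the reduction machine rather than materialised, accepting paths are in count-preserving bijection with those of the oracle machine, and the $k$-boundedness and read-once properties are inherited because the simulation of the reduction is deterministic. The only cosmetic difference is your explicit remark about absorbing finitely many small inputs into the transition table, a technicality the paper glosses over.
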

For the tail-classes, such a closure property is not obvious. 
Corollary~\ref{thm:reach-parabetataill-complete} will show that closing the class with read-once access and tail-nondeterminism under these reductions gives the full power of the class without tail-nondeterminism. 
Open Problem~\ref{op:MC} on page~\pageref{op:MC} asks what class is obtained when closing the class without read-once access and with tail-nondeterminism.

Another important question is whether classes are closed under certain arithmetic operations. 
We show that all newly introduced classes are closed under addition and multiplication. 
\begin{theorem}
\label{prop:closure-log}
For any $o\in\{\W{},\W1,\beta,\beta\mbox{-}\complClFont{tail}\}$, the class $\sh\complClFont{para}_o\mbox-\L$ is closed under addition and multiplication.
\end{theorem}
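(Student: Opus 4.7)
The plan is to use the standard counting-class constructions---branching via one extra nondeterministic bit for addition and sequential composition for multiplication---and then verify that each of the four restrictions (logspace, $k$-boundedness, read-once access to the choice tape, and tail-nondeterminism) is preserved. Since the four classes share a common base model and differ only by switching these restrictions on or off, one pair of constructions handles all four operators uniformly.

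Let $M_1$ and $M_2$ witness $F_1,F_2\in\sh\complClFont{para}_o\mbox-\L$. First I would normalise them using Lemma~\ref{normalise-lemma}, so that each $M_i$ consists of a deterministic prefix followed by exactly $g_i(k)\cdot\log|x|$ nondeterministic steps ending in a unique accepting configuration. Let $\sigma_i$ denote the configuration of $M_i$ at the end of its deterministic prefix; it fits in $O(\log|x|+f(k))$ bits and can therefore be stored on the work tape. For $F_1+F_2$, build $M_+$ that deterministically computes and stores both $\sigma_1,\sigma_2$, then reads one fresh nondeterministic bit $b\in\{1,2\}$, and finally continues the simulation of $M_b$ from $\sigma_b$, accepting iff that simulation accepts. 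For $F_1\cdot F_2$, build $M_\times$ that stores $\sigma_1,\sigma_2$, simulates the nondeterministic block of $M_1$ from $\sigma_1$, and---upon acceptance---simulates the nondeterministic block of $M_2$ from $\sigma_2$, accepting iff both accept. A straightforward bijection between the accepting paths of $M_+$ (respectively $M_\times$) and the disjoint union (respectively Cartesian product) of accepting paths of $M_1$ and $M_2$ then gives the desired counts.

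The resource checks are routine. Logspace holds because the stored configurations are of logarithmic size. $k$-boundedness holds because the total number of nondeterministic bits consumed is at most $(1+\max(g_1(k),g_2(k)))\cdot\log|x|$ for addition and $(g_1(k)+g_2(k))\cdot\log|x|$ for multiplication, both of the required form $f(k)\cdot\log|x|$. Read-once access is preserved because the nondeterministic bits are consumed strictly in order---first the selector bit (if any), then $M_1$'s bits, then $M_2$'s bits---each from a fresh portion of the choice tape.

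The main obstacle is tail-nondeterminism, and especially so for multiplication: a naive sequential construction that ran $M_1$ to completion and only afterwards started simulating $M_2$ from scratch would place $M_2$'s deterministic phase after nondeterministic steps of $M_1$, violating the tail condition. The construction above sidesteps this precisely by pushing all deterministic work---for both $M_1$ and $M_2$---in front of any nondeterministic bit read, so that the combined machine's nondeterministic steps, together with the constant-overhead bookkeeping needed to swap in $\sigma_2$ between the two simulations, form a contiguous tail of length $O((g_1(k)+g_2(k))\cdot\log|x|)$, which is still $g(k)\cdot\log n$ for a computable $g$.
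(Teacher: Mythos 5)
Your proposal is correct and follows essentially the same route as the paper: one fresh selector bit for addition, sequential composition for multiplication, and front-loading both machines' deterministic prefixes so that tail-nondeterminism is preserved (the paper does this front-loading only for the two tail classes, whereas you apply it uniformly, which is a harmless simplification). One small caution: Lemma~\ref{normalise-lemma} does not give you a machine that is literally a deterministic prefix followed by purely nondeterministic steps---deterministic steps may still be interleaved after the first nondeterministic one in the non-tail case---but your construction only needs the split point at the first nondeterministic step, so nothing breaks.
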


\begin{openproblem}\label{op:monus}
	Which of the classes are closed under \emph{monus}, that is, $\max\{F-G,0\}$?
\end{openproblem}

\section{Complete Problems}
\label{sec:complete}
This section studies complete problems for the previously defined classes:  
counting problems in the context of walks in directed graphs, model-checking problems for $\FO$ formulas, and homomorphisms between \FO-structures as well as a parameterised version of the determinant.

\subsection{Counting Walks}\label{ssec:countingpaths}
We start with parameterised variants of counting walks in directed graphs which will be shown to be complete for the introduced classes. 

\paracountingproblemdef{$\p\sh\LOG\reach\bdegree$}{directed graph $\G=(V,E)$ with out-degree $b$, $s, t \in V$ and $a,k\in\N$}{$k$}{number of $s$-$t$-walks of length $a$ if $a \leq k\cdot\log |V|$, 0 otherwise}

\begin{theorem}\label{thm:logreach-parabetaL-paralog-complete}
	For every $b\ge2$, $\p\sh\LOG\reach\bdegree$ is $\sh\paraBetaL$-complete \wrt $\leqparalogpars$-reductions.
\end{theorem}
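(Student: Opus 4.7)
\medskip

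\noindent\textbf{Proof plan.}
The plan is to prove the two inclusions separately.

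For membership, I design a $k$-bounded, read-once nondeterministic $\para\L$-machine $M$ that on input $(G=(V,E), s, t, a, k)$ first checks deterministically whether $a \leq k \cdot \log|V|$ and rejects unconditionally if this fails, so that the number of accepting paths is $0$ as required. Otherwise, $M$ keeps the current vertex (initially $s$) and a step counter $i \leq a$ on its work tape, both in logspace. In each of the $a$ rounds, $M$ reads $\lceil \log b \rceil$ fresh nondeterministic bits, interprets them as an integer $j$, rejects if $j \geq b$, and otherwise uses random access to the input to move to the $j$-th out-neighbour of the current vertex. After the $a$ rounds, $M$ accepts iff the current vertex is $t$. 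The total number of bits read is at most $a \cdot \lceil \log b \rceil \leq (k \cdot \lceil \log b \rceil) \cdot \log|V|$, so $M$ is $k$-bounded; each bit is used exactly once, so access is read-once; and accepting paths are in bijection with $s$-$t$-walks of length $a$.

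For hardness, take any $F \in \sh\parab\L$ with a witnessing machine and apply Lemma~\ref{normalise-lemma} to obtain a normalised, $k$-bounded, read-once machine $M'$ with a unique accepting configuration $c_{\mathrm{acc}}$ such that on every input $(x,k)$ every computation path reads exactly $N := g(k) \cdot \log |x|$ nondeterministic bits. I then build the output graph $G$ as follows. Vertices are the pairs $(c, i)$, where $c$ is a configuration of $M'$ on $(x, k)$ that is either about to read its $(i{+}1)$-th nondeterministic bit or equals $c_{\mathrm{acc}}$, and $0 \leq i \leq N$, together with a sink $\bot$. I draw an edge $(c, i) \to (c', i+1)$ for each bit $\beta \in \{0, 1\}$ such that reading $\beta$ at $c$ and performing the subsequent deterministic segment reaches the next bit-reading configuration $c'$ (or $c_{\mathrm{acc}}$ if $i+1 = N$). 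The bookkeeping tape \tapeC provided by Lemma~\ref{normalise-lemma}(3) makes distinct bit choices yield distinct successor configurations, so the graph needs no multi-edges. Every vertex is padded to out-degree exactly $b$ by extra edges to $\bot$, and $\bot$ receives $b$ self-loops. Setting $s = (c_0, 0)$ for the initial configuration $c_0$, $t = (c_{\mathrm{acc}}, N)$, $a = N$, and new parameter $k' = g(k)$ makes walks of length $a$ from $s$ to $t$ correspond bijectively to bit sequences driving $M'$ from $c_0$ to $c_{\mathrm{acc}}$, that is, to accepting computations of $M'$ on $(x,k)$. The required bound $a \leq k' \cdot \log|V|$ follows from $\log|V| \geq \log|x|$.

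The principal technical point is that this reduction is itself computable in $\para\L$. A configuration of $M'$ fits in $O(\log |x| + \log f(k))$ bits and the step counter in $O(\log N)$ bits, so the vertex set can be enumerated in logspace. Deciding the existence of an edge $(c, i) \to (c', i+1)$ amounts to simulating $M'$ deterministically starting from $c$ after one nondeterministic bit read, until either the next bit read or halting, and comparing the resulting configuration with $c'$; since $M'$ is a $\para\L$-machine, this is a standard logspace test. The crucial use of Lemma~\ref{normalise-lemma} is twofold: the uniform number $N$ of nondeterministic reads on every path lets the walk length match the bit length, and the unique accepting configuration ensures that \emph{``accepted by $M'$''} coincides with \emph{``ending at $t$''}. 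Without either property the walks-to-computations correspondence would not be a bijection and the reduction would fail to be parsimonious, which I expect to be the main obstacle requiring care in the formal write-up, together with the routine padding to exact out-degree $b$ that handles the case $b>2$.
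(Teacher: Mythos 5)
Your proposal is correct and follows essentially the same route as the paper: membership via guessing a length-$a$ walk with $\lceil\log b\rceil$ nondeterministic bits per step, and hardness via the normalised configuration graph of Lemma~\ref{normalise-lemma}, with vertices the nondeterministic configurations (plus the unique accepting one) and edges given by one nondeterministic step followed by a deterministic segment. Your explicit layering by the bit-counter, the padding to out-degree $b$ via a sink, and the remark that tape \tapeC\ forces distinct bit choices to yield distinct successors are only presentational refinements of the paper's argument (which instead appeals to acyclicity and a ``WLOG $|V(x,k)|\geq|x|$'' padding).
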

\begin{proof}[Proof Idea.]
	For the upper bound, guess a path of length exactly $a$.
	The number of nondeterministic bits is bounded by $O(k\cdot\log|V|)$ since successors can be referenced by a number in $\{0,\dots,b-1\}$.

	For the lower bound, using Lemma~\ref{normalise-lemma}, construct the configuration graph $\G$ restricted to nondeterministic configurations and the unique accepting configuration $C_\acc$, where the edge relation expresses whether a configuration is reachable with exactly one nondeterministic, but an arbitrary number of deterministic steps.
	Accepting computations of the machine correspond to paths from the first nondeterministic configuration to $C_\acc$ of length $f(k)\cdot\log|\G|$ in $\G$.
\end{proof}

Now consider the problem $\p\sh\reach$, defined as follows. 
\paracountingproblemdef{$\p\sh\reach$}{directed graph $\G=(V,E)$, $s, t \in V$, $k\in\N$}{$k$}{number of $s$-$t$-walks of length exactly $k$}
The difference to the previous problem is the unbounded out-degree of nodes and the length of counted walks being $k$ instead of $a \leq k \cdot \log|x|$.
Note that the analogue problem for counting paths is $\#\W1$-complete \cite{DBLP:journals/siamcomp/FlumG04}.
However, we will see now that the problem for walks is $\sh\paraBetaL$-complete.

\begin{theorem}\label{thm:reach-parabetal-complete}
	$\p\sh\reach$ is $\sh\paraBetaL$-complete \wrt $\leqparalogpars$.
\end{theorem}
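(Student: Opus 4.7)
For the upper bound, I would design a $k$-bounded $\para\L$-machine $M$ with read-once access to its choice tape that counts $s$--$t$-walks of length $k$ directly. The machine maintains the current vertex $v$ (initialised to $s$) and a step counter $i\in\{0,\dots,k\}$. In each of $k$ iterations, it reads a fresh block of $\lceil\log|V|\rceil$ nondeterministic bits, interprets them as a vertex $v'\in V$, uses random access to the input to test whether $(v,v')\in E$, rejects on failure, and otherwise sets $v\gets v'$ and increments $i$. After $k$ iterations, $M$ accepts iff $v=t$. The total number of bits read from the choice tape is $k\lceil\log|V|\rceil\le 2k\log|x|$, which is $f(k)\cdot\log|x|$ for $f(k)=2k$, so $M$ is $k$-bounded; bits are consumed block by block, hence read-once. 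Every accepting path corresponds bijectively to a length-$k$ walk from $s$ to $t$, so $\acc_M=F$.

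For the lower bound, I would reduce an arbitrary $F\in\sh\parab\L$ to $\p\sh\reach$ via $\leqparalogpars$. Let $M$ witness $F\in\sh\parab\L$ and apply Lemma~\ref{normalise-lemma} to obtain a normalised $M'$ with a unique accepting configuration $C_\acc$, exactly $g(k)\cdot\log|x|$ nondeterministic bits on every path, the counter tape \tapeS\ marking block boundaries, and the chunk tape \tapeC\ reset after each block of $\log|x|$ nondeterministic bits. I would group the nondeterministic bits into $g(k)$ \emph{super-steps} of $\log|x|$ bits each, and build a layered digraph $\G=(V',E')$ in $\para\L$ as follows. The vertex set contains \emph{checkpoints} $(C,j)$ with $j\in\{0,\dots,g(k)\}$ and $C$ a configuration of $M'$ at the beginning of block $j$, together with \emph{branches} $(C,j,w)$ with $w\in\{0,1\}^{\lceil\log|x|\rceil}$. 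The edges are $(C,j)\to(C,j,w)$ for every $w$, and $(C,j,w)\to(C',j+1)$ whenever simulating $M'$ from $C$ at block boundary $j$ with $w$ driving the nondeterministic bits of block $j$ reaches exactly $C'$ at block boundary $j+1$. This latter condition is deterministically checkable in logspace by running the standard $\para\L$-simulation of $M'$ against the fixed bit string $w$, using \tapeS\ to detect when the block ends. Set $s'=(C_0,0)$ with $C_0$ the initial configuration, $t'=(C_\acc,g(k))$, and output parameter $k'=2g(k)$.

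The final step is bookkeeping: any walk of length $k'=2g(k)$ starting at $(C_0,0)$ is forced by the bipartite structure to alternate between checkpoint and branch layers, and thus to end in checkpoint layer $g(k)$. Such a walk encodes a unique tuple $(C_0,w_0,C_1,w_1,\dots,w_{g(k)-1},C_{g(k)})$ with $w_j$ driving $M'$ from $C_j$ to $C_{j+1}$, and it terminates in $t'$ iff $C_{g(k)}=C_\acc$, i.e.\ iff the corresponding computation of $M'$ is accepting. Since $C_\acc$ is unique, we obtain a bijection with accepting computations of $M'$, so the walk count equals $F(x,k)$; together with the $\para\L$-computability of the reduction and $k'$ depending only on $k$, this yields the desired parsimonious reduction.

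The main obstacle I anticipate is avoiding double counting and length mismatches when collapsing the variable-length deterministic stretches of $M'$ between nondeterministic steps into single edges. The intermediate branch layer is introduced precisely to forbid parallel edges while preserving the exact walk length $2g(k)$, and the normal form of Lemma~\ref{normalise-lemma}---specifically the tapes \tapeS\ and \tapeC\ together with the uniqueness of $C_\acc$---is what makes block boundaries deterministically recognisable and rules out unintended walks.
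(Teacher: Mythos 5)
Your proof is correct and takes essentially the same route as the paper: for membership, guess one successor vertex per step using $\log|V|$ nondeterministic bits (so $k\cdot\log|V|$ bits in total, read once), and for hardness, apply Lemma~\ref{normalise-lemma} and count walks in the configuration graph contracted to blocks of $\log|x|$ nondeterministic steps. The only divergence is how you secure injectivity of the map from accepting computations to walks: the paper relies on property~(3) of the normal form (tape \tapeC\ records the current block's bits, so distinct bit strings lead to distinct successor configurations and a single edge per configuration pair suffices), whereas you interpose an explicit branch layer indexed by the bit string $w$ and double the walk length to $2g(k)$; both work, and your gadget makes the no-double-counting argument independent of \tapeC. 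One small point to tidy: your final checkpoint layer should absorb the trailing deterministic computation after the last nondeterministic block (the paper handles this with its special deterministic-reachability edges into $C_\acc$), so that arriving at $(C_\acc,g(k))$ is genuinely equivalent to acceptance.
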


As the length of paths that are counted in $\p\sh\reach$ is $k$, the runtime of the whole algorithm used to prove membership in the previous theorem is actually bounded by $k \cdot \log |x|$ on input $(x,k)$.
This means that the computation is tail-nondeterministic.

\begin{theorem}\label{thm:reach-parabetataill-complete}
  $\p\sh\reach$ is $\sh\parabt\L$-complete \wrt $\leqparalogpars$.
\end{theorem}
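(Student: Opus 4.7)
The plan is to combine the hardness already established in Theorem~\ref{thm:reach-parabetal-complete} with a refined analysis of its membership algorithm, yielding both directions almost for free.

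For $\sh\parabt\L$-hardness under $\leqparalogpars$, I would exploit the fact that $\sh\parabt\L\subseteq\sh\parab\L$ holds by definition: any $F\in\sh\parabt\L$ is in particular in $\sh\parab\L$ and hence parsimoniously reduces to $\p\sh\reach$ in para-logspace by Theorem~\ref{thm:reach-parabetal-complete}. This part therefore requires no new argument.

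The interesting direction is membership in $\sh\parabt\L$. I would reuse the natural algorithm: on input $(\G,s,t,k)$ with $\G=(V,E)$, the machine first locates $s$ and $t$ deterministically via random access, then enters a nondeterministic phase in which it successively guesses vertices $v_1,\dots,v_k\in V$, reading $\lceil\log|V|\rceil$ choice bits in one-way fashion per guess and, between consecutive guesses, verifying $(v_{i-1},v_i)\in E$ by a single query to the relation tape. The machine accepts iff every edge test succeeds and $v_k=t$, so accepting paths correspond bijectively to $s$-$t$-walks of length exactly $k$. The total choice-tape consumption is $O(k\log|V|)$, so the machine is $k$-bounded with read-once access to its nondeterministic bits.

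The key observation, as hinted in the remark just before the theorem, is that the entire post-guess phase---the $k$ guesses together with the $k$ relation queries needed to verify the edges---runs in $O(k\log|V|)\subseteq O(k\log n)$ steps. Setting $g(k):=Ck$ for a suitable constant $C$ thus witnesses tail-nondeterminism. The only step warranting care is the scheduling: all deterministic preprocessing that does not depend on guessed bits (reading $s$ and $t$, initialising counters on \tapeS and \tapeC) must occur strictly before the first nondeterministic step. I expect this bookkeeping---rather than any conceptual difficulty---to be the main point to verify explicitly, since the count-preserving and read-once properties are inherited unchanged from the construction of Theorem~\ref{thm:reach-parabetal-complete}.
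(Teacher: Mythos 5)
Your proposal is correct and follows essentially the same route as the paper: hardness is inherited from the $\sh\parab\L$-hardness of Theorem~\ref{thm:reach-parabetal-complete} via the inclusion $\sh\parabt\L\subseteq\sh\parab\L$, and membership follows by observing that the guess-and-verify algorithm from that theorem's membership proof runs in $O(k\cdot\log n)$ steps after its first nondeterministic step, hence is already tail-nondeterministic. Your extra care about scheduling the deterministic preprocessing before the first guess is a reasonable point to make explicit, but it is exactly the bookkeeping the paper's argument implicitly relies on.
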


The previous results together with the fact that $\sh\paraBetaL$ is closed under $\leqparalogpars$ yield the following surprising collapse (a similar behaviour was observed by Bottesch~\cite{DBLP:conf/iwpec/Bottesch17,DBLP:conf/mfcs/Bottesch18}).

\begin{corollary}\label{cor:parabtaillog-clos=parabl}
  $\clospara{\sh\parabt\L} = \sh\paraBetaL$.
\end{corollary}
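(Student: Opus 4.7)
The plan is to establish the two inclusions separately, exploiting the fact that the hard work has already been done in Theorems~\ref{thm:reach-parabetal-complete} and~\ref{thm:reach-parabetataill-complete}, which together pin down the single problem $\p\sh\reach$ as complete for \emph{both} $\sh\parab\L$ and $\sh\parabt\L$ under $\leqparalogpars$.

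For the inclusion $\clospara{\sh\parabt\L}\subseteq\sh\parab\L$, I would first observe that $\sh\parabt\L\subseteq\sh\parab\L$ directly from the definitions: a tail-nondeterministic, read-once $k$-bounded $\para\L$-machine is in particular a read-once $k$-bounded $\para\L$-machine, so the accepting-path counting function already witnesses membership in $\sh\parab\L$. Combining this with Lemma~\ref{lem:closure-legparalogpars-reductions}, which tells us that $\sh\parab\L$ is closed under $\leqparalogpars$, yields $\clospara{\sh\parabt\L}\subseteq\clospara{\sh\parab\L}=\sh\parab\L$.

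For the converse inclusion $\sh\parab\L\subseteq\clospara{\sh\parabt\L}$, I would fix an arbitrary $F\in\sh\parab\L$. By Theorem~\ref{thm:reach-parabetal-complete}, the problem $\p\sh\reach$ is $\sh\parab\L$-hard under $\leqparalogpars$, so $F\leqparalogpars\p\sh\reach$. On the other hand, by Theorem~\ref{thm:reach-parabetataill-complete}, $\p\sh\reach\in\sh\parabt\L$ (since it is shown complete for that class, in particular contained in it). Hence $F$ reduces parsimoniously in para-logspace to a function in $\sh\parabt\L$, which is precisely the definition of $F\in\clospara{\sh\parabt\L}$. Combining both directions gives the claimed equality.

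There is really no substantive obstacle in this proof: the collapse is just a bookkeeping consequence of the two completeness theorems that sandwich the same problem $\p\sh\reach$ between two different-looking classes via the same reduction type. The only small subtlety worth spelling out is why $\sh\parabt\L\subseteq\sh\parab\L$ holds at the level of counting functions (and not merely languages): one must verify that dropping the tail-nondeterminism restriction does not alter the accepting-path count, which is immediate because the \emph{same} machine is being viewed under a weaker structural condition, so its set of accepting paths, and hence the value of the function, is unchanged.
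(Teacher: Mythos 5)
Your proof is correct and matches the paper's intended argument exactly: the corollary is stated there as following from Theorems~\ref{thm:reach-parabetal-complete} and~\ref{thm:reach-parabetataill-complete} together with the closure of $\sh\paraBetaL$ under $\leqparalogpars$ (Lemma~\ref{lem:closure-legparalogpars-reductions}), which is precisely the two-inclusion bookkeeping you carry out.
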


We continue with another variant of $\p\sh\LOG\reach\bdegree$, namely $\p\sh\LOG\WALK\bdegree$. Here, all walks of length $a$ are counted, if $a \leq k \cdot \log |x|$ (and $s,t$ are not part of the input).

\begin{theorem}\label{thm:pathb-parabetal-complete-turingplog}
	$\p\sh\LOG\WALK\bdegree$ is $\sh\paraBetaL$-complete \wrt $\leqparaturing$.
\end{theorem}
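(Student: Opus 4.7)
The plan is to establish membership of $\p\sh\LOG\WALK\bdegree$ in $\sh\paraBetaL$ directly, and then to $\leqparaturing$-reduce the $\sh\paraBetaL$-complete problem $\p\sh\LOG\reach\bdegree$ (Theorem~\ref{thm:logreach-parabetaL-paralog-complete}) to $\p\sh\LOG\WALK\bdegree$ via four oracle queries whose answers are combined by inclusion--exclusion.

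For the upper bound, the machine first rejects if $a>k\cdot\log|V|$ (in which case the count is $0$). Otherwise a $k$-bounded $\parab\L$-machine nondeterministically guesses a start vertex $v_0\in V$ (using $\lceil\log|V|\rceil$ bits, read once) and then at each of the $a$ walk steps guesses an index in $\{0,\dots,b-1\}$ selecting the next successor of the current vertex via the edge relation and the random-access tape. Accepting paths are in bijection with walks of length $a$, and the total read-once nondeterministic budget is at most $(1+k\lceil\log b\rceil)\log|V|$ bits, of the required form $g(k)\cdot\log|V|$.

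For the lower bound, given an instance $(G,s,t,a,k)$ of $\p\sh\LOG\reach\bdegree$ with $G=(V,E)$, the reduction constructs three modifications of $G$: the graph $G_s$ obtained by adding a fresh source $\hat s$ with the single edge $\hat s\to s$; the graph $G_t$ obtained by adding a fresh sink $\hat t$ with the single edge $t\to\hat t$; and $G_{s,t}$ with both augmentations. Since $\hat s$ has no incoming and $\hat t$ no outgoing edges, every walk of length $a+2$ in $G_{s,t}$ can use $\hat s$ only as its first vertex and $\hat t$ only as its last. Partitioning such walks according to whether they visit $\hat s$, $\hat t$, both, or neither, and writing $W_H(\ell)$ for the total number of walks of length $\ell$ in a graph $H$, one obtains
\[
  W_{G_{s,t}}(a+2)\;-\;W_{G_s}(a+2)\;-\;W_{G_t}(a+2)\;+\;W_G(a+2)\;=\;A_{s,t}(a),
\]
where $A_{s,t}(a)$ is the desired number of $s$-$t$ walks of length $a$ in $G$. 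The four modified graphs are produced on the oracle tape by a $\para\L$-machine, each has $|V|+O(1)$ vertices, and each query length $a+2$ satisfies $a+2\le(k+1)\log|V|$ for $|V|\ge 4$ (small cases being handled by direct computation), so all queries are valid $\p\sh\LOG\WALK\bdegree$-instances with parameter $k+1$.

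The main obstacle is maintaining the out-degree bound $b$: adding $t\to\hat t$ may push $\deg_G(t)$ from $b$ to $b+1$. The cleanest workaround is to target $\p\sh\LOG\WALK$-$(b{+}1)$ rather than $\p\sh\LOG\WALK$-$b$, which is admissible since, in analogy with Theorem~\ref{thm:logreach-parabetaL-paralog-complete}, the completeness statement is meant for every $b\ge 2$. Alternatively, for $b\ge 3$ one can start the reduction from $\p\sh\LOG\reach\bdegree$ with parameter $b-1$, which is itself $\sh\paraBetaL$-complete, so that the augmented graphs still have out-degree at most $b$.
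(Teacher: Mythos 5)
Your overall strategy is sound and genuinely different from the paper's. For membership you do essentially what the paper does (guess an unconstrained start vertex and then a length-$a$ walk, staying within a $g(k)\cdot\log|V|$ read-once budget). For hardness, the paper reduces directly from an arbitrary $\sh\paraBetaL$-machine in the normal form of Lemma~\ref{normalise-lemma}: it attaches a fresh directed path of $\log|x|$ vertices feeding into the first nondeterministic configuration of the configuration graph and recovers the number of accepting paths as the difference of just two oracle answers (all walks of length $(f(k)+1)\cdot\log|x|$, minus the same count after deleting the first edge of the fresh path, which cancels the ``bad'' configuration chains not reachable from the start). You instead reduce from $\p\sh\LOG\reach\bdegree$, which is legitimate since that problem is $\sh\paraBetaL$-complete under $\leqparalogpars$ by Theorem~\ref{thm:logreach-parabetaL-paralog-complete} and a $\leqparalogpars$-reduction composed with a $\leqparaturing$-reduction is again a $\leqparaturing$-reduction; your four-query inclusion--exclusion identity is correct, and the bookkeeping on query lengths and parameters checks out.

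The one genuine gap is the out-degree of $t$, and neither of your workarounds closes it for the degree bound $2$. Targeting out-degree $b+1$ establishes the theorem only for degree bounds $\geq 3$, so the instance of the statement with bound $2$ -- which, read in parallel with Theorem~\ref{thm:logreach-parabetaL-paralog-complete}, is part of the claim -- remains unproven; the alternative of starting from degree bound $b-1$ likewise needs $b\geq 3$. The paper avoids this entirely by never adding an out-edge to a pre-existing vertex: the unique accepting configuration is a sink, and the fresh path only feeds \emph{into} the source side, so the construction stays at out-degree $2$. Your argument can be patched without giving up the black-box flavour by observing that the hard instances of $\p\sh\LOG\reach\bdegree$ produced in the proof of Theorem~\ref{thm:logreach-parabetaL-paralog-complete} have $t=C_\acc$ of out-degree $0$, so one may restrict to instances in which $t$ is a sink (this restricted problem is still $\sh\paraBetaL$-complete), after which adding the edge $t\to\hat t$ respects the bound $2$. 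As written, however, using Theorem~\ref{thm:logreach-parabetaL-paralog-complete} purely as a black box, you are not entitled to that assumption, so the case $b=2$ is missing.
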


Now, consider a problem that combines a reachability problem with model-checking for propositional logic, that is, it only counts walks that are models of a propositional formula (see Haak~et~al.~\cite{DBLP:conf/mfcs/HaakKMVY19}).
Let $\G=(V,E)$ be a DAG, $(e_1,\dots,e_n)$ be a walk in $\G$ with $e_i\in E$ for $1\leq i\leq n$, and $P=\{e_1,\dots,e_n\}$.
Define the function $c_P \colon E \to \{0,1\}$ to be the characteristic function of $P$ \wrt $E$: $c_P(e)=1$ iff $e \in P$.
\paracountingproblemdef{$\p\sh\LOG\reach_2\CNF$}{directed graph $\G=(V,E)$ of out-degree 2, $s,t \in V$, CNF formula $\varphi$ with $\Vars\varphi\subseteq E$, $a,k\in\N$}{$k$}{Number of $s$-$t$-walks $(s=e_1, \dots, e_{a}=t)$ \ST $c_P\models\varphi$, where $P = \{e_1, \dots, e_a\}$, if $a \leq k\cdot\log(|V|+|φ|)$, 0 otherwise}

\begin{theorem}\label{thm:reach2cnf-parawl-complete}
	$\p\sh\LOG\reach_2\CNF$ is $\sh\paraWL$-complete \wrt $\leqparalogpars$.
\end{theorem}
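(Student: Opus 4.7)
I would first handle the upper bound, which is a straightforward simulation. A $k$-bounded nondeterministic $\para\L$-machine guesses an $s$-$t$-walk edge by edge, spending one nondeterministic bit per step (since the out-degree is $2$), for a total of $a\le k\log(|V|+|\varphi|)$ bits, while storing only the current vertex and a step counter on its work tape. Once the walk has been guessed, the machine verifies $c_P\models\varphi$ clause by clause; for each literal with edge variable $e$, it replays the walk from $s$ to decide whether $e$ appears among the traversed edges. This replay is what forces multiple-read access to the choice tape, placing the problem in $\sh\paraw\L$.

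For the lower bound, given $F\in\sh\paraw\L$ witnessed by some $k$-bounded nondeterministic $\para\L$-machine $M$, I would first apply Lemma~\ref{normalise-lemma} so that $M$ has a unique accepting configuration $C_\acc$ and makes exactly $a\dfn g(k)\cdot\log|x|$ nondeterministic steps on every computation path. I would then construct a layered digraph $\G$ whose level $j\in\{0,1,\dots,a\}$ contains a vertex $v_j^C$ for each configuration $C$ of $M$ at the start of its $(j+1)$-th nondeterministic step (or the halting configuration when $j=a$). Set $s\dfn v_0^{C_0}$ and $t\dfn v_a^{C_\acc}$. From each non-terminal $v_j^C$, I add two outgoing edges corresponding to the two possible values of the bit that $M$ reads next at some position $p_C$ of the choice tape, each leading to the configuration reached after performing that read together with the ensuing deterministic segment up to the next nondeterministic step. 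Out-degree is kept at exactly $2$ throughout by padding with self-loops where necessary.

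Next, I would label every branching edge $e$ by the pair $(p_e,v_e)$ consisting of the position read and the value read, and define the CNF $\varphi$ as the conjunction, over all pairs $e\ne e'$ with $p_e=p_{e'}$ but $v_e\ne v_{e'}$, of the clauses $\lnot e\lor\lnot e'$. An $s$-$t$-walk of length $a$ in $\G$ satisfies $\varphi$ iff the sequence of bit values it selects along the way is globally consistent across repeated reads of each position; such consistent walks are then in bijection with the accepting computations of $M$, yielding $F(x,k)$ many, so the reduction is parsimonious. Since $|V|$ and $|\varphi|$ are polynomial in $|x|$ (times a function of $k$), the walk-length condition $a\le k'\log(|V|+|\varphi|)$ is met for a parameter $k'=O(g(k))$, and the reduction itself runs in $\para\L$ because configurations --- and hence vertices, edges and conflict pairs --- can all be encoded and enumerated in $O(\log|x|+\log f(k))$ bits.

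The main obstacle is handling multiple-read access: the layered graph alone cannot prevent a walk from taking the ``$0$-edge'' at one occurrence of a position on the choice tape and the ``$1$-edge'' at another occurrence, because at each visit the walk chooses independently. The strategy is to offload exactly this inconsistency check to the CNF, which is a polynomial-size $2$-CNF over the edges; this clean split --- graph for navigating configurations, formula for forcing coherent nondeterministic bits --- is what precisely captures $\paraw$-style nondeterminism and keeps the entire construction within para-logspace.
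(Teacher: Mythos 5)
Your proposal is correct and follows essentially the same route as the paper: membership via guessing the walk and replaying the nondeterministic bits to evaluate edge variables during propositional model-checking, and hardness via the configuration graph of a normalised machine (with one edge per nondeterministic step) together with a $2$-CNF of clauses $\lnot e\lor\lnot e'$ forcing consistency among edges that read the same choice-tape position with different values. The only cosmetic difference is that you make the layering explicit and pad out-degrees with self-loops, which is harmless since a self-loop can never occur on a length-$a$ walk ending in the final layer.
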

\begin{proof}[Proof Idea.]
	Regarding membership, we can first use the algorithm outlined in the proof idea of Theorem~\ref{thm:logreach-parabetaL-paralog-complete} to nondeterministically guess a path, and then use the standard logspace model-checking algorithm for propositional formulas.
	Since edges in the graph are associated with variables of the formula, whenever we need the value of an edge variable $e$, we run the original algorithm re-using nondeterministic bits to determine it.

	For the lower bound, we use the  same graph  as in   Theorem~\ref{thm:logreach-parabetaL-paralog-complete}, and the formula is used to express consistency of the re-used nondeterministic bits in the configuration graph.
\end{proof}


Similarly, define the problem $\p\sh\cyclecover_2\CNF$: Given a graph $\G = (V,E)$ of  bounded   out-degree 2, a CNF-formula $\varphi$ with $\Vars{\varphi} \subseteq E$ and $a, k \in \mathbb{N}$, with $k$ as the parameter and $a\leq \log(|G| + |\varphi|)$, output the number of cycle covers $D \subseteq E$ in which the number of non-selfloop-cycles is $\leq k$, exactly $k \cdot a$ vertices are covered non-trivially and $\Vars{D} \models \varphi$. 

\begin{theorem}\label{thm:cyclecover-parawl}
	$\p\sh\cyclecover_2\CNF$ is $\sh\paraWL$-complete \wrt $\leqparalogpars$.
\end{theorem}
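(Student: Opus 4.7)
The plan is to establish both the upper bound $\p\sh\cyclecover_2\CNF\in\sh\paraWL$ and the $\sh\paraWL$-hardness via $\leqparalogpars$.

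For the upper bound, I would design a $k$-bounded nondeterministic $\para\L$-machine that, on input $(G,\varphi,a,k)$, enumerates valid cycle covers in canonical form. The machine guesses the (at most) $k$ non-selfloop cycles in the lexicographic order of their smallest vertex; for each cycle it guesses its smallest vertex ($O(\log|V|)$ bits) and the sequence of $a$ successor choices along the cycle (one bit per step since the out-degree is $2$). The total nondeterministic budget is $O(k(\log|V|+a))=O(k\log(|G|+|\varphi|))$, hence $k$-bounded. Using the multi-read access of $\paraw$, the machine verifies in logspace that (i)~each guess closes back to its starting vertex, (ii)~the cycles are pairwise vertex-disjoint, (iii)~the remaining vertices all carry a self-loop in $E$, (iv)~the canonical order is respected so that each cycle cover is counted exactly once, and (v)~$\varphi$ holds on the induced assignment $c_D$, checking each literal by re-reading the choice tape to recompute whether the queried edge lies in $D$.

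For the lower bound I would mirror the strategy of Theorem~\ref{thm:reach2cnf-parawl-complete} and reduce from the computation of an arbitrary $k$-bounded $\para\L$-machine $M$ for a problem in $\sh\paraWL$, after normalising $M$ via Lemma~\ref{normalise-lemma}. Starting from the configuration graph $H$ of $M$ (which has out-degree $2$ and in which accepting paths of length exactly $g(k)\log n$ from the initial to the unique accepting configuration correspond bijectively to accepting computations), I build the target graph $G'$ by (a)~splitting the accepting walk into $k':=g(k)$ subwalks of length $\log n$ each at the chunk boundaries provided by Lemma~\ref{normalise-lemma}, (b)~closing each subwalk into a non-selfloop cycle via a small junction gadget attached at the chunk boundaries, and (c)~attaching self-loops at every other vertex so that they can be covered trivially, via a constant-size gadget that splits each vertex into an ``in-part'' carrying the self-loop and an ``out-part'' branching into the two successors so as to keep the out-degree bounded by $2$. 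The formula $\varphi'$ is the conjunction of (1)~the consistency clauses already used in Theorem~\ref{thm:reach2cnf-parawl-complete} to force re-read nondeterministic bits of $M$ to agree, and (2)~junction clauses forcing the end-configuration of chunk $i$ to coincide with the start-configuration of chunk $i+1$; both are $2$-CNFs with only negative literals, so $\varphi'$ is of polynomial size and in CNF directly.

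Setting $k':=g(k)$ and $a':=\log|V(G')|+O(1)$, the valid cycle covers with exactly $k'\cdot a'$ vertices covered non-trivially and satisfying $\varphi'$ are in bijection with the accepting computations of $M$; since the construction runs in para-logspace, the count is preserved, and $k'$ is a computable function of $k$, this is a $\leqparalogpars$-reduction. The main obstacle I anticipate is engineering the trivial-cover gadget together with the junction gadget so that simultaneously (i)~out-degree stays bounded by $2$, (ii)~every internal vertex admits a trivial-cover option, (iii)~the $k'$ non-selfloop cycles are forced to traverse the chunks of $M$ in the correct order, and (iv)~no spurious cycle covers are introduced; once the gadgets are in place, the bijection with accepting computations follows directly from the normal form of Lemma~\ref{normalise-lemma} together with the consistency argument of Theorem~\ref{thm:reach2cnf-parawl-complete}.
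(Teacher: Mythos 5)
Your membership argument coincides with the paper's: guess the at most $k$ non-selfloop cycles in a canonical order (each identified by its lexicographically smallest vertex plus a sequence of out-degree-$2$ successor choices), verify the cover conditions and the count of non-trivially covered vertices, and use the multi-read access of $\paraw$ to re-derive edge membership when evaluating $\varphi$. The hardness direction, however, takes a genuinely different route. The paper reuses the configuration graph $\G(x,k)$ and the consistency formula from Theorem~\ref{thm:reach2cnf-parawl-complete} essentially unchanged and makes only one global modification: it adds a self-loop to every vertex except $s(x,k)$ and $t$, plus the single back edge $(t,s(x,k))$. Since the component reachable from $s(x,k)$ is acyclic and $s(x,k)$ carries no self-loop, every cycle cover must contain $(t,s(x,k))$, and the unique non-selfloop cycle through it is exactly an accepting computation path closed by the back edge; the constraint of exactly $k'\cdot a$ non-trivially covered vertices with $k'=f(k)$ and $a=\log|x|$ then rules out any spurious non-trivial cycles elsewhere. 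Your construction instead cuts the accepting walk into $g(k)$ chunks at the Lemma~\ref{normalise-lemma} boundaries, closes each chunk into its own cycle via junction gadgets, and adds junction clauses gluing consecutive chunks; this can be made to work (the anchoring of the $g(k)$ forced cycles and the count constraint play the same role as the paper's single anchor at $s(x,k)$), but it buys you nothing over the one-back-edge trick and leaves you with exactly the gadget-engineering burden you flag as the main obstacle, which the paper avoids entirely. One point in your favour: your in-part/out-part splitting addresses the fact that adding a self-loop to a vertex that already has two nondeterministic successors pushes the out-degree to $3$, an issue the paper's construction silently glosses over; if the out-degree bound in the problem definition is taken literally, some such local fix is indeed needed.
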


\subsection{Counting FO-Assignments}
Let $\class{F}$ be a class of well-formed formulas.  
The problem of counting satisfying assignments to free \FO-variables in $\class{F}$-formulas, $p\text-\sh\MC{\class{F}}$, is defined as follows.
\paracountingproblemdef{$p\text-\sh\MC{\class{F}}$}{formula $\varphi\in \class{F}$, structure $\struc A$, $k\in\N$}{$|\varphi|$}{$|\varphi(\struc A)|$ if $k=|\varphi|$, 0 otherwise}
Here, $φ(\struc A)$ is the set of satisfying assignments of φ in $\struc A$: $\varphi(\struc A)  = \{\,(a_1,\ldots, a_n)\mid (a_1,\ldots, a_n) \in \dom(\struc A)^n, {\struc A}\models \varphi(a_1,\ldots, a_n)\,\}$. 
%
Denote by $p\text-\sh\MC{\class{F}}_a$ the variant where for all relations the arity is at most $a\in\mathbb N$.
We investigate parameterisations that yield complete problems for some of the new classes in this setting.

In particular, we consider a fragment of $\FO$ obtained by restricting the occurrence of variables in the syntactic tree of a formula in a purely syntactic manner.
Formally, the \emph{syntax tree} of a quantifier-free $\FO$-formula $\varphi$ is a tree with edge-ordering whose leaves are labelled by atoms of $\varphi$ and whose inner vertices are labelled by Boolean connectives.
\begin{definition}
Let $r \in \mathbb{N}$ and φ be a quantifier-free $\FO$-formula. Let $θ_1, \dots, θ_m$ be the atoms of φ in the order of their occurrence in the order-respecting depth-first run through the syntax tree of φ. We say that φ is \emph{$r$-local} if for any $θ_i$, $θ_j$ that involve the same variable, we have $|i-j| \leq r$.
We define 
	$\Sigmarlocal\dfn\{\,\varphi\in\Sigma_0\mid \varphi\text{ is $r$-local}\,\}.$
\end{definition}

Using this syntactic notion, we obtain a complete problem for our classes with read-once access to nondeterministic bits in the setting of first-order model-checking.
\begin{theorem}\label{thm:mc-local-bounded}
	For $a\ge2,r\ge1$, $\p\sh\MC{\Sigmarlocal}_a$ is $\sh\paraBetaL$-complete and $\sh\parabt\L$-complete \wrt $\leqparalogpars$.
\end{theorem}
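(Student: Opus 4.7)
The plan is to prove both completeness statements by showing (i) membership in $\sh\parabt\L$ and (ii) $\sh\paraBetaL$-hardness; since $\sh\parabt\L\subseteq\sh\parab\L$, this gives completeness for both classes at once.

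\emph{Upper bound.} On input $(\varphi,\struc A,k)$, the machine first verifies $k=|\varphi|$ deterministically and then evaluates $\varphi$ on $\struc A$ by traversing its syntax tree in the natural left-to-right order. At each leaf $\theta_i$ it (a) reads $\lceil\log|\dom(\struc A)|\rceil$ fresh nondeterministic bits for each variable of $\theta_i$ that has not yet been encountered, (b) looks up the atom's truth value via the relation query tape, and (c) feeds the result into the standard logspace bottom-up evaluation of the Boolean skeleton. The crucial invariant is provided by $r$-locality: when $\theta_i$ is processed, the only variables whose values can still be required are those occurring in atoms with index in $[i,i+r]$, so at most $(r+1)a$ variables sit in a sliding window at any moment. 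Because $r$ and $a$ are constants, this window and its bookkeeping (next-use indices, refreshed via an $r$-atom look-ahead on the random-access tape) fit in $O(\log n)$ bits. In total the machine reads at most $|\varphi|\cdot\lceil\log|\dom(\struc A)|\rceil = O(k\log n)$ nondeterministic bits, each exactly once, and all of them inside a nondeterministic phase of $O(k\log n)$ steps following the deterministic preprocessing; this simultaneously witnesses $k$-boundedness, read-once access, and tail-nondeterminism.

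\emph{Lower bound.} I would reduce parsimoniously from $\p\sh\reach$, which is $\sh\paraBetaL$-complete under $\leqparalogpars$ by Theorem~\ref{thm:reach-parabetal-complete}. Given $(G=(V,E),s,t,k)$, output the structure $\struc A$ with universe $V$, binary relation $E^{\struc A}=E$, and unary relations $S^{\struc A}=\{s\}$, $T^{\struc A}=\{t\}$, together with the $\Sigma_0$-formula
\[
\varphi \;=\; S(x_0)\wedge E(x_0,x_1)\wedge E(x_1,x_2)\wedge\cdots\wedge E(x_{k-1},x_k)\wedge T(x_k),
\]
and set the new parameter to $|\varphi|$. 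Its satisfying assignments biject with the $s$-$t$-walks of length $k$ in $G$; each variable $x_i$ appears in at most two consecutive atoms, so $\varphi$ is $1$-local; the maximum arity is $2$. Hence the output is a valid instance of $\p\sh\MC{\Sigmarlocal}_a$ for every $r\geq 1$, $a\geq 2$. The construction is clearly in $\para\L$ and preserves the solution count, so the reduction is parsimonious.

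\emph{Main obstacle.} The delicate point is arguing simultaneous read-once access, tail-nondeterminism, and the logspace bound in the upper bound. Without $r$-locality one would have to store $\Theta(k\log n)$ assignment bits, blowing the space budget; the constant-width sliding window together with on-the-fly garbage collection---possible because every variable's final occurrence lies within $r$ atoms of its first one---is exactly what lets the algorithm fit inside $\sh\parabt\L$ rather than only $\sh\parawt\L$. A secondary subtlety is arranging the logspace evaluation of the Boolean skeleton so that it visits atoms in the same order in which $r$-locality is defined, ensuring that a single read-once pass over the guessed bits is sufficient.
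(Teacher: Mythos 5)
Your proposal is correct and follows essentially the same route as the paper: membership in $\sh\parabt\L$ via a syntax-tree traversal that uses $r$-locality and bounded arity to keep only $O(a\cdot r)$ variable assignments in a sliding window (yielding logspace, read-once, $k$-bounded, tail-nondeterministic evaluation), plus $\sh\paraBetaL$-hardness by a parsimonious reduction from $\p\sh\reach$ with the chain formula $E(x_0,x_1)\wedge\dots\wedge E(x_{k-1},x_k)$ anchored at $s$ and $t$. The only cosmetic deviations are your use of unary relations $S,T$ where the paper uses constants, and a harmless off-by-one in how walk length is indexed.
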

\begin{proof}[Proof Idea.]
	Regarding membership, we evaluate the given $\varphi$ in $\struc A$ top to bottom using the locality of $\varphi$ by storing assignments to variables until we encountered $r$ more atoms.
	As a result, at most $a\cdot r$ assignments to variables are simultaneously stored and each one needs $\log|\struc A|$ space.
	Moreover, the runtime of the whole procedure is bounded by $f(|\varphi|)\cdot\log|\struc A|$ for some computable function $f$ and thereby the procedure is tail-nondeterministic.

	Regarding the lower bound, we reduce from $\p\sh\reach$ and use the formula
	\[φ_k(x_1, \dots, x_k) \dfn (x_1 = s) ∧ E(x_1, x_2) ∧ E(x_2, x_3) ∧ \dots ∧ E(x_{k-1}, x_k) ∧ x_k = t\]
	expressing that a tuple of vertices $(v_1, \dots, v_k)$ is an $s^{\struc{A}}$-$t^{\struc{A}}$-walk in an $(E,s,t)$-structure~$\struc{A}$.
\end{proof}

Note that the decision version of $p\text-\sh\MC{\Sigma_0}$ is equivalent to parameterised model-checking for $\Sigma_1$-sentences, as we count assignments to free variables.
This problem characterises tail-nondeterministic para-logspace with read-once access to nondeterministic bits.
%
%
%
%

\begin{theorem}
\label{thm:mc-for-w1}
	$p\text-\sh\MC{\Sigma_0}$ is $\sh\parawt\L$-complete \wrt $\leqparalogpars$.
\end{theorem}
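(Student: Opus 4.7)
The plan is to establish both the membership and the $\sh\parawt\L$-hardness of $p\text-\sh\MC{\Sigma_0}$ under $\leqparalogpars$ by direct constructions, following the templates of Theorems~\ref{thm:reach-parabetal-complete} and~\ref{thm:mc-local-bounded} but crucially using the full flexibility of $\Sigma_0$ (no arity bound, no locality restriction).

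For membership, I would exhibit a $\sh\parawt\L$-machine that, on input $(\varphi,\struc A,k)$, first tests $k=|\varphi|$ (rejecting otherwise), then uses its multi-read nondeterministic access to guess an assignment $\bar a\in\dom(\struc A)^m$ to the $m\leq k$ free variables of $\varphi$ with $m\cdot\lceil\log|\dom(\struc A)|\rceil$ nondeterministic bits (hence $k$-bounded), and finally deterministically evaluates $\varphi(\bar a)$ on $\struc A$ by the standard logspace recursion for quantifier-free first-order formulae in $O(\log|\varphi|+\log|\struc A|)$ space. Via random access to $\struc A$ and $\bar a$, the evaluation phase runs in $O(|\varphi|\cdot\log|\struc A|)$ time, so all nondeterminism takes place before a deterministic tail of length at most $f(k)\cdot\log|\struc A|$, witnessing tail-nondeterminism. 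Multi-read access is essential because distinct atoms of $\varphi$ may re-inspect the same components of $\bar a$.

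For the hardness direction, I would reduce from an arbitrary $F\in\sh\parawt\L$ computed by a $k$-bounded tail-nondeterministic multi-read $\para\L$-machine $M$. First, apply Lemma~\ref{normalise-lemma} to obtain a normalised $M'$ whose tail phase uses exactly $N=g(k)\cdot\log|x|$ nondeterministic bits (grouped into $g(k)$ chunks of $\log|x|$ bits, each identifiable with an element of $[|x|]$) and ends in a unique accepting configuration $C_\text{acc}$. Given $(x,k)$, the reduction deterministically computes the tail-entry configuration $C_0$ and outputs $(\varphi,\struc A,k')$, where $\struc A$ has universe $[|x|]$ together with the fpt-polynomially many tail-reachable configurations of $M'$ and carries relations $\textrm{Bit}(v,i,b)$, $\textrm{Head}(c,j,i)$, $\textrm{Trans}(c,b,c')$ encoding in-chunk bit lookups, choice-tape head positions stored in configurations, and one-bit deterministic transitions. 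The formula $\varphi$ has free variables $v_1,\dots,v_{g(k)}$ for the chunks and $c_1,\dots,c_{g(k)-1}$ for configurations of $M'$ at chunk boundaries (with $c_0\dfn C_0$ and $c_{g(k)}\dfn C_\text{acc}$ acting as constants), and conjoins $g(k)$ chunk-phase transition subformulae expressing that each pair $(c_{i-1},c_i)$ is the correct outcome of the $\log|x|$ mini-steps of $M'$ with nondeterministic tape $v_1\cdots v_{g(k)}$. Because the tail is deterministic in the bits, each accepting bit-assignment determines the remaining $c_i$'s uniquely, so each accepting path contributes exactly one satisfying tuple of $\varphi$ and the reduction is parsimonious.

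The main obstacle will be expressing each chunk-phase transition subformula within $O(g(k))$ atoms while keeping $\struc A$ fpt-polynomial: under multi-read access, the $\log|x|$ mini-steps of a chunk phase may query bits from any of the $g(k)$ chunks, so a monolithic segment-transition relation $\textrm{Seg}(c,v_1,\dots,v_{g(k)},c')$ would have arity $g(k)+2$ and up to $|x|^{g(k)}$ tuples, exceeding the $\para\L$ output budget. My plan is instead to build each chunk-phase transition as a $\Sigma_0$ subformula over the low-arity auxiliaries together with a disjunction of size $g(k)$ that dispatches on the chunk index currently being queried by $M'$; this dispatch causes the same configuration variable to occur in $g(k)$ successive atoms that pair it with different $v_j$'s, exactly the non-local pattern that $\Sigmarlocal_a$ forbids in Theorem~\ref{thm:mc-local-bounded} and that collapses the class to $\sh\parabt\L$ there. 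The most delicate point, for which I would crucially exploit the cache tape $\tapeC$ and counter tape $\tapeS$ provided by Lemma~\ref{normalise-lemma} to synchronise the mini-step behaviour of $M'$ with the chunk structure of $\varphi$, is to achieve the aggregation of the $\log|x|$ mini-steps of each chunk into a formula of size $O(g(k))$, absorbing all $|x|$-dependence into $\struc A$.
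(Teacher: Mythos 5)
Your membership argument coincides with the paper's: guess the assignment using $|\varphi|\cdot\lceil\log|\dom(\struc A)|\rceil$ nondeterministic bits and then evaluate $\varphi$ deterministically, re-reading guessed bits on demand because the full assignment does not fit into $O(\log|\struc A|+f(k))$ work space --- which is exactly why multi-read access is needed and the problem sits in $\sh\parawt\L$ rather than $\sh\parabt\L$. The hardness direction is where you depart from the paper, and the departure leaves a genuine gap. The paper's reduction is precisely the ``monolithic'' construction you reject: the universe is $\{1,\dots,|x|\}\cup V$ (with $V$ the nondeterministic configurations together with $C_\acc$), there is a single relation $R$ of arity $g(k)+2$ where $R(C,C',b_1,\dots,b_{g(k)})$ holds iff $M$ passes from $C$ to $C'$ through one block of $\log|x|$ fresh nondeterministic reads while the whole choice tape holds $b_1\circ\dots\circ b_{g(k)}$, and
\[
\varphi \;=\; \bigwedge_{i=1}^{g(k)}V(x_i)\;\wedge\;\bigwedge_{i=1}^{g(k)}\neg V(z_i)\;\wedge\;(x_1=C_0)\;\wedge\;(x_{g(k)}=C_\acc)\;\wedge\;\bigwedge_{i=1}^{g(k)-1}R\bigl(x_i,x_{i+1},z_1,\dots,z_{g(k)}\bigr),
\]
which is parsimonious because the $x_i$ are uniquely determined by the $z_i$. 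Since $\Sigma_0$ carries no arity bound, unbounded arity is exactly the resource this theorem is meant to exploit; compare Theorem~\ref{thm:mc-local-bounded}, where imposing bounded arity \emph{and} locality collapses the class to $\sh\parabt\L$.

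Your replacement --- bounded-arity auxiliaries plus a size-$g(k)$ non-local dispatch --- does not close the hole you yourself flag as ``the most delicate point''. Under multi-read access, between two fresh reads inside one block the machine may revisit cells in \emph{any} of the $g(k)$ chunks, and which cells it revisits depends on bits it has already read; hence the outcome of a single block is a genuine function of all $g(k)$ chunk values simultaneously. A Boolean combination of boundedly many (in $k$) atoms, each touching $O(1)$ universe elements, cannot express such a function in general, and the systematic escape --- one uniquely-determined auxiliary free variable per mini-step --- forces $|\varphi|=\Omega(g(k)\cdot\log|x|)$, so the image is not even a legal instance of the target problem (its parameter $|\varphi|$ must be bounded by a computable function of $k$ alone). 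The tapes from Lemma~\ref{normalise-lemma} do not rescue this: \tapeC{} caches only the \emph{current} block's fresh bits and is erased at each block boundary, so it carries no information about the earlier chunks that a multi-read machine may consult mid-block. Your motivating objection to the high-arity relation --- that listing its up to $|x|^{g(k)}$ tuples exceeds the fpt output budget of a $\leqparalogpars$-reduction --- is a fair point about the list-of-tuples encoding (the paper only asserts that membership in $R$ is $\para\L$-decidable), but the remedy cannot be a retreat to bounded arity; any fix has to address the representation of $R$, not replace it.
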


The complexity status of counting assignments to free first-order variables in a $\Sigma_0$ formula with unbounded arity or without the local restrictions is not known. 
In particular, it is not clear if the restriction on the arity or the locality property of the formula can be removed while preserving completeness. 
Finally, we close this section with three open questions. 

\begin{openproblem}\label{op:MC}
	What is the complexity of $\clospara{\p\sh\MC{\Sigma_0}_a}$ for fixed $a\in\N$?
	What is the complexity of $\clospara{\p\sh\MC{\Sigmarlocal}}$ for fixed $r\in\N$?
\end{openproblem}

\begin{openproblem}\label{op:paraw1l-paralogpars}
	Is the class $\clospara{\sh\parawt\L}$ equivalent to some known class?
\end{openproblem}


\subsection{Counting Homomorphisms}
This subsection is devoted to the study of the problem of counting homomorphisms between two structures in the parameterised setting.
Typically, the size of the universe of the first structure is considered as the parameter.
The complexity of counting homomorphisms has been intensively investigated for almost two decades \cite{DBLP:journals/rsa/DyerG00,DBLP:journals/jacm/Grohe07,DBLP:journals/tcs/DalmauJ04,DBLP:journals/toct/ChenM15}.

\begin{definition}[Homomorphism]
	Let $\struc A$ and $\struc B$ be structures over some vocabulary $\tau$ with universes $A$ and $B$, respectively.
	A function $h\colon A\to B$ is a \emph{homomorphism from $\struc A$ to $\struc B$} if for all $R\in\tau$ and for all tuples $(a_1,\dots, a_{\arity{R}})\in R^{\struc A}$, we have $(h(a_1),\dots,h(a_{\arity{R}})) \in R^{\struc B}$.
\end{definition}

A bijective homomorphism $h$ between two structures $\struc A,\struc B$ such that the inverse of $h$ is also a homomorphism is called an \emph{isomorphism}.
If there is an isomorphism between $\struc A$ and $\struc B$, then $\struc A$ is said to be \emph{isomorphic} to~$\struc B$.

\begin{definition}
	Let $\struc A$ be a structure with universe $A$. 
	Then denote by $\struc A\!^*$ the extension of $\struc A$ by a fresh unary relation symbol $C_a$ interpreted as $C_a^{\struc A}=\{a\}$ for each $a\in \dom(\struc A)$.
	Analogously, denote by $\class{A}^*$ for a class of structures $\class{A}$ the class $\{\,\struc A\!^*\mid\struc A\in\class{A}\,\}$.
\end{definition}

Define $\pcountHom{\class{A}}$ as the following problem. Given a pair of structures $(\struc A,\struc B)$ where $\struc A\in\class{A}$, and parameter $k$, output the number of homomorphisms from $\struc A$ to $\struc B$, if $|\dom(\struc A)| \leq k$, and 0 otherwise.
\paracountingproblemdef{$\pcountHom{\class{A}}$}{A pair of structures $(\struc A,\struc B)$ where $\struc A\in\class{A}$}{$|\struc A|$, $k\in\N$}{the number of homomorphisms from $\struc A$ to $\struc B$ if $|\dom(\struc A)|\leq k$, 0 otherwise}
Notice that $\struc B$ can be any structure.
For $n \geq 2$, let $P_n$ be the canonical undirected path of length $n$, that is, the $(E)$-structure with universe $\{1, \dots, n\}$ and $E^{P_n} = \{\,(i,i+1),(i+1,i)\mid 1 \leq i < n\,\}$.
Let $\class{P}$ be the class of structures isomorphic to some $P_n$.
For the next theorem, reduce to $\p\sh\reach$ for membership, and from a normalised, coloured variant of $\p\sh\reach$ for hardness.

\begin{theorem}\label{thm:phom(p^*)-pbetaL-complete}
	$\pcountHom{\class{P}^*}$ is $\sh\paraBetaL$-complete \wrt $\leqparalogpars$.
\end{theorem}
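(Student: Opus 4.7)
The plan is to show $\pcountHom{\class{P}^*} \in \sh\paraBetaL$ via a parsimonious reduction to $\p\sh\reach$, which lies in $\sh\paraBetaL$ by Theorem~\ref{thm:reach-parabetal-complete}, and establish hardness by reducing from $\p\sh\reach$ (a ``normalised, coloured'' instance is then encoded directly into a path-structure input). Lemma~\ref{lem:closure-legparalogpars-reductions} will take care of the upper bound once the reduction to $\p\sh\reach$ is in place.

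For membership, given an input $(\struc A^*, \struc B)$ with $\struc A \cong P_n$ and $n \leq k$ (otherwise output $0$), I will construct in para-logspace a layered product graph $\G'$ with vertex set $\{s^\star, t^\star\} \cup \bigcup_{i=1}^n \{i\} \times C_i^{\struc B}$ and edges: $s^\star \to (1, v)$ for every $v \in C_1^{\struc B}$; $(n, v) \to t^\star$ for every $v \in C_n^{\struc B}$; and $(i, v) \to (i+1, w)$ exactly when both $(v, w), (w, v) \in E^{\struc B}$ (reflecting that $E^{P_n}$ is symmetric). Each $s^\star$-$t^\star$-walk in $\G'$ of length $n+1$ corresponds bijectively to a homomorphism $h(i) := v_i$, so mapping to $(\G', s^\star, t^\star, n+1)$ with parameter $n+1$ is parsimonious, para-logspace, and has parameter bounded by $k+1$.

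For hardness, given an instance $(\G = (V, E), s, t, k)$ of $\p\sh\reach$, I will set $\struc A^* := P_{k+1}^*$ and define $\struc B$ over the vocabulary $\{E, C_1, \dots, C_{k+1}\}$ with universe $V \times \{0, 1, \dots, k\}$, pinning the endpoints via $C_1^{\struc B} := \{(s, 0)\}$ and $C_{k+1}^{\struc B} := \{(t, k)\}$, the intermediate layers via $C_i^{\struc B} := V \times \{i-1\}$ for $1 < i < k+1$, and a layer-to-layer edge relation that is symmetric: $((u, i), (v, i+1))$ and $((v, i+1), (u, i))$ both lie in $E^{\struc B}$ iff $(u, v) \in E$. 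The colours force every homomorphism to take the form $h(i) = (v_i, i-1)$ with $v_1 = s$ and $v_{k+1} = t$, and the bidirectional inter-layer edges ensure that preserving both $(i, i+1)$ and $(i+1, i) \in E^{P_{k+1}}$ reduces to $(v_i, v_{i+1}) \in E$. Hence homomorphisms correspond bijectively to $s$-$t$-walks of length $k$ in $\G$; the reduction is in logspace and the output parameter $|\struc A^*|$ is a computable function of $k$.

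The main subtlety is the symmetry of $E^{P_n}$: a homomorphism must preserve each path edge in \emph{both} orientations, so $\struc B$ is forced to carry back-edges, which could \emph{a priori} create spurious homomorphisms (e.g.\ mappings that zig-zag between two layers). The layered colour scheme rules this out, because each $C_i$ pins $h(i)$ to a single layer, so the bidirectionality is harmless. The remaining work is routine bookkeeping to check that both constructions can be performed in $\para\L$ and that the counts match on the nose.
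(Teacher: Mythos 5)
Your proof is correct, and the membership direction is essentially the paper's: both reduce parsimoniously to $\p\sh\reach$ by building a layered graph whose $s^\star$-$t^\star$-walks are in bijection with the homomorphisms (your explicit product $\{i\}\times C_{a_i}^{\struc B}$ is in fact slightly more careful than the paper's construction on $B\cup\{s,t\}$, since it checks both orientations of each path edge and is immune to overlapping colour classes). The hardness direction genuinely differs in its route. The paper first introduces an auxiliary problem $\p\sh\reachcolour$ (counting colour-respecting walks), proves it $\sh\paraBetaL$-hard by re-running the machine simulation from Theorem~\ref{thm:reach-parabetal-complete} with the step-counter tape of the normal form (Lemma~\ref{normalise-lemma}) serving as the colouring, and only then reduces $\p\sh\reachcolour$ to $\pcountHom{\class{P}^*}$; the colouring is what prevents a homomorphism from traversing the symmetric closure of $E$ backwards. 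You instead reduce directly from $\p\sh\reach$ and bake the layering into $\struc B$ itself via the product universe $V\times\{0,\dots,k\}$, with the colours pinning $h(i)$ to layer $i-1$; your observation that both orientations of a path edge then collapse to the single forward condition $(v_i,v_{i+1})\in E$ is exactly the point that makes this work. Your route is more self-contained (no auxiliary problem and no appeal to the normal form in this step) at the cost of a factor $k{+}1$ blow-up of the universe, which is harmless for a para-logspace reduction. One cosmetic remark: a $\leqparalogpars$-reduction must return the answer of a single oracle query, so ``output $0$'' in the degenerate cases should be realised by querying a trivial no-instance; the paper glosses over this as well.
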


\begin{openproblem}\label{op:phom-natural-w1L-complete}
	Is there a natural class of structures $\class A$ \ST $\pcountHom{\class{A}}$ is $\sh\parawt\L$-complete \wrt $\leqparalogpars$?
\end{openproblem}

\subsection{The Parameterised Complexity of the Determinant}
\label{subsec:det}
In this section, we consider a parameterised variant of the determinant function introduced by Chauhan and Rao~\cite{DBLP:conf/caldam/ChauhanR15}.
For $n>0$ let $\class{S}_n$ denote the set of all permutations of $\{1,\ldots, n\}$. 
For $k\le n$, let ${ \class{S}}_{n,k}$ denote the following subset of $\class{S}_n$: 
${\class {S}}_{n,k} = \{\,\pi\,|\, \pi\in S_n, |\{\,i\,:\,\pi(i) \neq i\,\}| = k\,\}. $

We define the parameterised determinant function of an $n\times n $ square matrix $A = (a_{i,j})_{1 \le i,j \le n}$ as $\pdet(A,k) = \sum_{\pi \in S_{n,k}} \sign{\pi} \prod_{i: \pi(i) \neq i} a_{i,\pi(i)}$.

Using an interpolation argument, it can be shown that $\pdet$ is in $\F\P$ when $k$ is part of the input and thereby in $\F\FPT$~\cite{DBLP:conf/caldam/ChauhanR15}, the functional counterpart of $\FPT$.
In fact, the same interpolation argument can be used to show that $\pdet$ is in ${\gap\L}$ (the class of functions $f(x)$ \ST for some $\NL$-machine, $f(x)$ is the number of accepting minus the number of rejecting paths). 
However, this does not give a space efficient algorithm for $\pdet$ in the sense of parameterised classes. 
The $\gap\L$ algorithm may require a large number of nondeterministic steps and accordingly is not $k$-bounded.  
We show that the space efficient algorithm for the determinant given by Mahajan and Vinay~\cite{DBLP:journals/cjtcs/MahajanV97} can be adapted to the parameterised setting, proving that $\pdet$ can be written as a difference of two $\sh\parab\L$ functions. 
Recall the notion of a \emph{clow sequence} introduced by Mahajan and Vinay~\cite{DBLP:journals/cjtcs/MahajanV97}. 
\begin{definition}[Clow]
Let $\G=(V,E)$ be a directed graph with $V = \{1,\ldots, n\}$ for some $n \in \mathbb{N}$. 
A {\em clow} in $\G$ is a walk $C= (w_1,\ldots, w_{r-1}, w_r = w_1)$ 
where $w_1$ is the minimal vertex among $w_1, \dots, w_{r-1}$ \wrt the natural ordering of $V$ and $w_1 \neq w_j$ for all $1 < j < r$. Node $w_1$ is called the \emph{head} of $C$, denoted by $\head(C)$.
\end{definition}
\begin{definition}[Clow sequence]
	A {\em clow sequence} of a graph $\G=(\{1,\dots,n\},E)$ is a sequence ${W} = (C_1,\ldots, C_k)$ such that $C_i$ is a clow of $\G$ for $1 \leq i \leq k$ and
\begin{itemize}
  \item the heads of the sequence are in ascending order $\head(C_1) <\cdots < \head(C_k)$, and
  \item the total number of edges that appear in some $C_i$ (including multiplicities) is exactly $n$.
\end{itemize}  
\end{definition}

For a clow sequence $W$ of some graph $\G=(\{1,\dots,n\}, E)$ with $r$ clows the \emph{sign of $W$}, $\sign{W}$, is defined as $(-1)^{n+r}$. 
Note that, if the clow sequence is a cycle cover $\sigma$, then $(-1)^{n+r}$ is equal to the sign of the permutation represented by $\sigma$ (that is, $(-1)^{\#\text{ inversions in }\sigma}$).
Mahajan and Vinay came up with this sign-function to derive their formula for the determinant. 

For an $(n\times n)$-matrix $A$, $\G_A$ is the weighted directed graph with vertex set $\{1, \dots, n\}$ and weighted adjacency matrix $A$. 
For a clow (sequence) $W$, $\wt{W}$ is the product of weights of the edges (clows) in $w$. 
For any $\G$ as above, $\class{W}_{\G}$ is the set of all clow sequences of $\G$. 
Mahajan and Vinay~\cite{DBLP:journals/cjtcs/MahajanV97} proved that 
$ \det(A) = \sum_{W\in {\class{W}_{\G_A}}} \sign{W}\cdot\wt{W}.$

We adapt these notions to the parameterised setting. First observe that for a permutation $\sigma \in S_{n,k}$, we have that $\sign{\sigma} = (-1)^{n+r}$, where $r$ is the number of cycles in the permutation. 
However, the number of cycles in $\sigma$ is $n-k+r'$, where $r'$ is the number of cycles of length at least two in $\sigma$. 
Accordingly, we have $\sign{\sigma} = (-1)^{2n-k+r'}$.  
Adapting the definition of a clow sequence, for $k \ge 0$, define a {\em $k$-clow sequence} to be a clow sequence where the total number of edges (including multiplicity) in the sequence is exactly $k$, every clow has at least two edges, and no self loop edge of the form $(i,i)$ occurs in any of the clows. 
For any graph $\G$ with vertex set $\{1,\dots,n\}$ for  $n \in \mathbb{N}$, ${\class{W}}_{\G,k}$ is the set of all $k$-clow sequences of $\G$. 
For a $k$-clow sequence $W \in {\class{W}}_{\G,k}$, $\sign{W}$ is $(-1)^{2n-k+r'}$, where $r'$ is the number of clows in $W$.  
Mahajan and Vinay~\cite[Theorem~1]{DBLP:journals/cjtcs/MahajanV97} showed that the signed sum of the weights of all clow sequences is equal to the determinant. 
At the outset, this is a bit surprising, since the determinant is equal to the signed sum of weights of cycle covers, whereas there are clow sequences that are not cycle covers. 
Mahajan and Vinay~\cite{DBLP:journals/cjtcs/MahajanV97} observed that every clow sequence that is not a cycle cover can be associated with a unique clow sequence of opposite sign, and thereby all clow sequences cancel out. 
We observe a parameterised version of the above result~\cite[Theorem~1]{DBLP:journals/cjtcs/MahajanV97}. 
\begin{lemma}\label{lem:clow}
$\pdet(A,k) = \sum_{W \in {\cal W}_{\G_A,k}}\sign{W}\cdot\wt{W}$, for $\{0,1\}$-matrix $A$, $k \in \mathbb{N}$.
\end{lemma}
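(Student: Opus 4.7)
The plan is to generalise the Mahajan--Vinay sign-reversing involution~\cite{DBLP:journals/cjtcs/MahajanV97} from clow sequences to $k$-clow sequences. Call a $k$-clow sequence \emph{proper} if its clows are pairwise vertex-disjoint simple cycles; otherwise call it \emph{improper}. The argument splits naturally into two parts: a weight- and sign-preserving bijection between the proper $k$-clow sequences of $\G_A$ and the permutations in $S_{n,k}$, and a weight-preserving, sign-reversing involution $\phi$ on the improper ones.

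For the bijection, I would send $\pi \in S_{n,k}$ to the $k$-clow sequence $W_\pi$ whose clows are the nontrivial cycles of $\pi$ listed in order of increasing minimum element. The weight condition is immediate: $\wt{W_\pi} = \prod_{i:\pi(i)\neq i} a_{i,\pi(i)}$. For the sign, $\pi$ has $n-k+r'$ cycles in its full cycle decomposition (including the $n-k$ fixed points), hence
\[
\sign{\pi} = (-1)^{n - (n - k + r')} = (-1)^{k - r'} = (-1)^{2n - k + r'} = \sign{W_\pi}.
\]

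For the involution, I would traverse the clows of $W$ in head order and, within each clow, walk along its edges until hitting the first non-closing position whose vertex $v$ has already been visited. If $v$ first appeared in an earlier clow $C_j$, splice the current clow into $C_j$ at $v$ to form one larger clow with head $h_j$. If $v$ first appeared earlier within the current clow $C_i$, cut $C_i$ into an outer piece (still rooted at $h_i$) and an inner piece (the sub-walk between the two copies of $v$), re-rooting the inner piece at its minimum vertex $u$; note $u > h_i$ because $u$ is a non-head vertex of $C_i$, so the new inner clow can be inserted at the appropriate later position of the head-ordered sequence. The identification of the two copies of $v$ makes the spliced edges coincide with edges of the original sequence, so $\wt{\phi(W)} = \wt{W}$; and because the number of clows changes by exactly one, $\sign{\phi(W)} = -\sign{W}$. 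The involutivity $\phi\circ\phi = \mathrm{id}$ follows from the Mahajan--Vinay analysis: the first defect in $\phi(W)$ is again at $v$, and applying the opposite operation reconstructs $W$.

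It remains to check that $\phi$ keeps its output inside ${\cal W}_{\G_A,k}$. The total edge count stays at $k$; no self-loop is introduced since the input contains none; and every clow still has at least two edges, because two consecutive occurrences of $v$ inside a clow would require the forbidden self-loop edge $(v,v)$, so an intra-clow split always produces two clows of length $\geq 2$. Summing $\sign{W}\cdot\wt{W}$ over ${\cal W}_{\G_A,k}$, the improper terms cancel pairwise under $\phi$, and the proper terms contribute exactly $\pdet(A,k)$ via the bijection, completing the proof. The main obstacle will be the detailed bookkeeping for $\phi \circ \phi = \mathrm{id}$: one must verify that after an intra-clow split the re-rooted inner clow is placed at the correct position of the head-ordered sequence, and that the first defect encountered when re-traversing $\phi(W)$ is precisely the vertex $v$ that triggers the inverse operation, so that the new $k$-clow sequence constraints (no self-loops and at least two edges per clow) are compatible with the involution.
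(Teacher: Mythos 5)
Your overall strategy is the same as the paper's: keep the Mahajan--Vinay sign-reversing involution, observe that it never changes the number of edges in a sequence and so restricts to $k$-clow sequences, and pair the proper sequences bijectively with $S_{n,k}$. The bijection and the sign computation $\sign{\pi}=(-1)^{k-r'}=(-1)^{2n-k+r'}$ are correct. The problem is your involution $\phi$ itself: the forward scan ``first repeated vertex, merge into the \emph{earlier} clow'' is not the Mahajan--Vinay rule, and it is not an involution. Counterexample: take $W=(C_1,C_2)$ with $C_1=(1,3,1)$ and $C_2=(2,3,4,3,2)$, a valid $6$-clow sequence. Your scan finds its first defect at the vertex $3$ of $C_2$, which was already visited in $C_1$, so you splice $C_2$ into $C_1$ and obtain the single clow $\phi(W)=(1,3,4,3,2,3,1)$. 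Scanning $\phi(W)$, the first defect is the occurrence of $3$ right after $4$, an \emph{intra}-clow repeat against the occurrence at position $2$; your split rule then extracts the inner walk $(3,4,3)$ and returns $\bigl((1,3,2,3,1),(3,4,3)\bigr)\neq W$. So $\phi\circ\phi\neq\mathrm{id}$, the pairing of improper sequences is inconsistent, and the cancellation argument collapses. The failure is structural, not bookkeeping: after a merge, the spliced-in walk contains vertices that your scan of $W$ never examined (everything past the first defect), and these can create a defect in $\phi(W)$ strictly earlier than the one you intend to undo.

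The cure is to use Mahajan and Vinay's actual rule, which is designed precisely to avoid this. One chooses the \emph{smallest} index $i$ such that $C_{i+1},\dots,C_r$ are pairwise vertex-disjoint simple cycles (so the critical clow is the last ``defective'' one), traverses $C_i$ from its head, and stops at the first vertex $v$ that either lies on one of the \emph{later} clows $C_j$ with $j>i$ (merge $C_j$ into $C_i$) or repeats within $C_i$ (split off the inner simple cycle). Because everything to the right of $C_i$ is already a clean family of disjoint simple cycles, the material moved by a merge has been fully vetted, and one checks that the critical index and the critical vertex of $\eta(W)$ coincide with those of $W$; on the example above this rule correctly sends $W$ to $\bigl((1,3,1),(2,3,2),(3,4,3)\bigr)$ and back. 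With that substitution your remaining checks (weight preservation, sign flip by one clow, no self-loops, every clow keeping at least two edges because a length-one inner cycle would be a self-loop, and $\eta$ fixing exactly the proper sequences) go through and the proof matches the one in the paper.
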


Using this characterisation, the upper bound in the following theorem can be obtained. 
For hardness a reduction from $\p\sh\reach$ suffices.
\begin{theorem}
\label{thm:pdet-ub}
The problem $\pdet$ for $(0,1)$-matrices can be written as a difference of two functions in $\sh\parabt\L$, and is $\sh\parabt\L$-hard \wrt $\leqparalogmet$.
\end{theorem}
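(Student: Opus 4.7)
The plan is to handle upper and lower bounds separately, both anchored in Lemma~\ref{lem:clow}.

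For the upper bound, since $A$ is a $(0,1)$-matrix every $k$-clow sequence $W\in\class{W}_{\G_A,k}$ satisfies $\wt{W}\in\{0,1\}$, with $\wt{W}=1$ iff every edge appearing in $W$ exists in $\G_A$. The sign $\sign{W}=(-1)^{2n-k+r'}$ depends only on the parity of $k+r'$, where $r'$ is the number of clows in $W$. Splitting Lemma~\ref{lem:clow} by sign, let $F^+(A,k)$ (respectively $F^-(A,k)$) be the number of weight-one $k$-clow sequences with $k+r'$ even (respectively odd); then $\pdet(A,k)=F^+(A,k)-F^-(A,k)$, and it suffices to place $F^+,F^-$ in $\sh\parabt\L$.

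To witness $F^+$, I would build a $k$-bounded tail-nondeterministic NTM with read-once access that guesses a $k$-clow sequence one edge at a time. After initially guessing the first head $h_1$ on $\lceil\log n\rceil$ bits, the machine maintains on its work tape the current vertex $v$, the current head $h$, the previous head (for the ascending-heads condition), an edge counter $e\le k$, a clow counter $r'$, and the running parity of $k+r'$. At each step it guesses the next vertex $v'$ on $\lceil\log n\rceil$ bits and verifies that $(v,v')$ is an edge of $\G_A$, that $v\neq v'$ (no self-loops), and that $v'\ge h$; if $v'=h$ the clow closes and, whenever $e<k$, a new head $h''>h$ is guessed on $\lceil\log n\rceil$ further bits. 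The machine accepts iff after exactly $k$ guessed edges the last clow has just closed and $k+r'$ is even. At most $k+r'\le 2k$ guesses of $\lceil\log n\rceil$ nondeterministic bits each are made, all consumed in one pass, and the entire nondeterministic phase runs in $O(k\log n)$ steps, so the machine is $k$-bounded, has read-once access, and is tail-nondeterministic. Each valid weight-one $k$-clow sequence with $k+r'$ even corresponds to exactly one accepting path, giving $\sh\acc_M(A,k)=F^+(A,k)$; the machine for $F^-$ differs only in the parity check.

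For the lower bound, I would reduce $\p\sh\reach$, which is $\sh\parabt\L$-hard by Theorem~\ref{thm:reach-parabetataill-complete}, to $\pdet$ on $(0,1)$-matrices using a single oracle query. Given $(\G=(V,E),s,t,k)$ with $|V|=n$, construct a layered graph $\G'$ on vertex set $\{0\}\cup\{\,v^{(i)}\mid v\in V,\,0\le i\le k\,\}$, with $0$ the minimum vertex, whose edges are $0\to s^{(0)}$, $t^{(k)}\to 0$, and $v^{(i)}\to u^{(i+1)}$ for every $(v,u)\in E$ and $0\le i<k$. The layered part is acyclic and the only edges incident to $0$ are the two auxiliary ones, so every clow in $\G'$ has head $0$ and traverses exactly $0\to s^{(0)}\to v_1^{(1)}\to\cdots\to v_{k-1}^{(k-1)}\to t^{(k)}\to 0$ for some $s$-$t$-walk $s,v_1,\ldots,v_{k-1},t$ of length $k$ in $\G$; each such clow has length $k+2$. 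Hence the $(k+2)$-clow sequences of $\G'$ are in bijection with the $s$-$t$-walks of length $k$ in $\G$, and every such clow sequence has $r'=1$.

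Let $A'$ be the $(0,1)$-adjacency matrix of $\G'$; it has $n'=1+n(k+1)$ rows and is producible entry by entry in $\para\L$ from $(\G,s,t,k)$. Applying Lemma~\ref{lem:clow} to $A'$ with parameter $k+2$ yields $\pdet(A',k+2)=(-1)^{2n'-(k+2)+1}\cdot N=(-1)^{k+1}\cdot N$, where $N$ is the walk count. Multiplying the answer to the single oracle query by $(-1)^{k+1}$, which is computable in $\para\L$ from $k$, completes the $\leqparalogmet$-reduction. I expect the main subtlety to lie in the upper-bound bookkeeping: enforcing the $k$-clow conditions (ascending heads, head minimal within its clow, no self-loops, total edge count exactly $k$) simultaneously under read-once access, while preserving the one-to-one correspondence between valid $k$-clow sequences and accepting computation paths.
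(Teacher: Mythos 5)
Your proof is correct and follows essentially the same route as the paper: for the upper bound, two $k$-bounded, read-once, tail-nondeterministic machines that guess a $k$-clow sequence edge by edge and differ only in the final check on the sign $(-1)^{2n-k+r'}$, and for hardness a single-query reduction from $\p\sh\reach$ that closes $s$-$t$-walks into cycles through a distinguished minimum vertex and applies Lemma~\ref{lem:clow}. The only refinement worth noting is your layered unrolling of $\G$ in the hardness step, which handles length-$k$ walks in arbitrary digraphs directly and forces every clow sequence to consist of a single clow with head $0$; the paper instead adds one back edge $(t,s)$ and implicitly relies on the instance being a DAG (justified separately in the conclusions) so that the counted walks are paths.
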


%

\section{Conclusions and Outlook}\label{sec:conclusion}
We developed foundations for the study of parameterised space complexity of counting problems. 
Our results show interesting characterisations for classes defined in terms of $k$-bounded para-logspace NTMs. 
We believe that our results will lead to further research of parameterised logspace counting complexity.
Notice, that the studied walk problems in Section~\ref{ssec:countingpaths} can be considered restricted to DAGs yielding the same completeness results.

Branching programs are immanent for the study of space-bounded and parallel complexity classes. 
Languages accepted by polynomial-size logspace-uniform branching programs characterise $\NL$. 
In fact, this result carries forward to the counting versions. 
Motivated by this, one can consider parameterised counting classes based on deterministic branching programs (DBPs) and nondeterministic branching programs (BPs). 
It can be shown that for any $o\in\{\W{},\W1,\beta,\beta\mbox{-}\complClFont{tail}\}$,  $\sh\complClFont{para}_o\mbox-\L$ and $\sh\complClFont{para}_o\mbox-\NL$, can be characterised in terms of an adequate parameterised counting version of DBPs and BPs, respectively (see Theorems~\ref{lem:equivalence-BP-SC} and~\ref{thm:bp-nl-equality} in the appendix). 

Comparing our newly introduced classes with the $\W{}$-hierarchy (which is defined in terms of weighted satisfiability problems for circuits of a so-called bounded weft), one might ponder whether there is an alternative definition of our classes with such circuit problems.
Though in this article we did not explore the weighted satisfiability, the closely related problem $p\text-\MC{\Sigma_0}$ sheds some light on this. 
Theorem~\ref{thm:mc-for-w1} shows that $p\text-\MC{\Sigma_0}$ is complete for $\parawt\L$ (in fact, we show this for their counting versions) under $\leq^{\mathrm{plog}}_{m}$-reductions. 
However, if we take $\FPT$-reductions, $p\text-\MC{\Sigma_0}$ is complete for $\W1$.
Though we could not prove it so far, we believe this is a general phenomenon: Any $\W1$-complete problem is complete for $\parawt\L$ under $\leq^{\mathrm{plog}}_{m}$-reductions. 
More generally, there is a possibility that the FPT-closure of $\paraWL$-classes is equal to the corresponding class in the $\W{}$-hierarchy. 

One might also ask the question if $\paraWL$ is contained in $\F\FPT$.
This is unlikely based on the view expressed above. 
For example, $p\text-\MC{\Sigma_0}$ is complete for both $\parawt\L$ and $\W1$ but under two different reductions. 
As a result, $\paraWL\subseteq\F\FPT$ would imply that $p\text-\MC{\Sigma_0}\in\FPT$ and, accordingly, $\FPT = \W1$ as $\FPT$ is closed under $\FPT$-reductions.

\noindent We close with interesting tasks for further research:
\begin{itemize}
	\item Study further closure properties of the new classes (e.g., Open Problem~\ref{op:monus}). 
	\item Improve the understanding of the influence of syntactic locality, resp., bounded arity in the setting of $\p\sh\MC{\Sigma_0}$ (Open Problem~\ref{op:MC}).
	\item Find a characterisation of the $\leqparalogpars$-closure of $\sh\parawt\L$ (Open Problem~\ref{op:paraw1l-paralogpars}).
	\item Identify a natural class of structures for which the homomorphism problem is $\#\paraw\L$-complete (Open Problem~\ref{op:phom-natural-w1L-complete}).
	\item Establish a broader spectrum of complete problems for the classes $\paraBetaL$ and $\paraWL$, \eg, in the realm of satisfiability questions.
	\item Identify further characterisations of the introduced classes, \eg, in the vein of descriptive complexity, which could highlight their robustness.
	\item Study gap classes~\cite{DBLP:journals/jcss/FennerFK94} based on our classes. This might help improve Theorem~\ref{thm:pdet-ub}.
\end{itemize}

\bibliography{ref}
\appendix
\newpage
\section{Omitted Proof Details}
\begin{restatelemma}[normalise-lemma]
	\begin{lemma}
		For any $k$-bounded nondeterministic $\para\L$-machine $M$ there exists a $k$-bounded nondeterministic $\para\L$-machine $M'$ with $\sh\acc_M(x,k) = \sh\acc_{M'}(x,k)$ for all inputs $(x,k)$ such that $M'$  has the following properties:
		\begin{enumerate}[(1)]
			\item $M'$ has a unique accepting configuration,
			\item  on any input $(x,k)$, every computation path of $M'$    accesses exactly $g(k)\cdot \log |x|$ nondeterministic bits (for some computable function $g$), and $M'$ uses an extra tape (\tapeS) that counts the number of nondeterministic steps, and
			\item $M'$ has an extra tape (\tapeC) on which it remembers the last nondeterministic bits, resetting the tape after every $\log|x|$-many nondeterministic steps.
		\end{enumerate}
		Additionally, if $M$ has read-once access to its nondeterministic bits, or is tail-nondeterministic, or both, then $M'$ also has these properties.
	\end{lemma}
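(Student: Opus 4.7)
The plan is to start from any $k$-bounded nondeterministic $\para\L$-machine $M$ with bound $f$ on nondeterministic bit usage, and obtain $M'$ by three successive modifications, each adding one of the three desired properties while preserving the earlier ones as well as the invariants from $M$ (equal accepting-count, $\para\L$ space bound, and, if present, read-once access and tail-nondeterminism).

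First, I would augment $M$ with a deterministic cleanup routine run immediately before any halting transition: it scans each work tape, erases all non-blank cells, and moves every head back to a canonical home cell. This is a standard logspace manoeuvre, trivially preserves the accepting count since it is deterministic, and together with the rest of the construction yields property~(1).

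Second, I would introduce the counter tape $\tapeS$, incremented on every nondeterministic step (an $O(\log(f(k)\log|x|))$-space counter), together with a padding phase that takes place whenever $M$ would otherwise halt. Setting $g := f$, the padding phase reads fresh nondeterministic bits until $\tapeS$ reaches $g(k)\log|x|$. To preserve $\sh\acc_M(x,k)$ exactly, padding extensions of an accepting path of $M$ are declared accepting iff every pad bit equals~$0$, so each accepting path of $M$ has exactly one accepting completion in $M'$; padding extensions of a rejecting path all reject. After this stage every computation path consumes precisely $g(k)\log|x|$ nondeterministic bits, yielding property~(2). Third, I would install the circular buffer tape $\tapeC$ of capacity $\log|x|$ cells, appending each nondeterministic bit to it as it is read and, using $\tapeS$, clearing $\tapeC$ whenever the counter becomes a multiple of $\log|x|$; this entails only $O(\log|x|)$ additional space. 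Since $\tapeS$ hits the multiple $g(k)\log|x|$ at the very last nondeterministic step, $\tapeC$ is empty at acceptance and the unique-configuration property from stage~1 is preserved, yielding property~(3). Finally, I would verify the optional invariants. Read-once access is respected because the padding and buffer steps only consume fresh bits. Tail-nondeterminism is preserved: if $M$'s original tail has length at most $h(k)\log|x|$, then the extended tail (padding plus deterministic cleanup) remains $O((h(k)+g(k))\log|x|)$.

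The main obstacle is arranging that the padding phase preserves the accepting-path count exactly. The subtlety is that the pad bits are genuine nondeterministic bits (the machine must consume precisely $g(k)\log|x|$ of them on every path), yet among all $2^{g(k)\log|x|-r}$ completions of an accepting path that has already consumed $r$ bits, exactly one may accept. The all-zeros acceptance criterion delivers this bijection and is simultaneously compatible with read-once access (each pad bit is still consumed once and never re-read) and with tail-nondeterminism (all pad steps occur strictly after $M$'s own nondeterministic steps, so the tail length stays $O(g(k)\log|x|)$). All other invariants — space bound, $k$-boundedness with $g' := g+1$, and the resettable buffer discipline — then follow routinely from the construction.
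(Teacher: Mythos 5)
Your proposal is correct and follows essentially the same three-stage construction as the paper's proof: a deterministic cleanup for the unique accepting configuration, a counter tape with a padding phase that accepts only when all pad bits are zero (preserving the accepting-path count exactly), and a $\log|x|$-cell buffer tape reset at each block boundary, with the same checks that read-once access and tail-nondeterminism are preserved. No gaps.
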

	\end{restatelemma}
	
	\begin{proof}
	  We construct the machine $M'$ with the three desired properties and the same number of accepting computation paths as $M$ step by step, ensuring that the properties from previous steps are preserved. 
		
		First note that \wLOG $M$ can be assumed to have a single accepting state.
		We can modify $M$ so that upon reaching an accepting state, it erases everything in the work tape and move the head positions of every tape to the left end marker. 
		This ensures that $M$ has a unique accepting configuration, which is property (1).
		This does not alter the number of accepting paths of $M$ on any input. 
	
		To ensure (2), we construct a new machine $N$ that behaves as $M$ but additionally maintains a counter on \tapeS for the access of nondeterministic bits: Every time a nondeterministic bit is accessed, the counter is incremented by $1$. 
		If $M$ halts with the counter being less than $g(k)\cdot\log|x|$, then the modified machine $N$ keeps making nondeterministic choices until the count is $g(k)\cdot\log|x|$. 
		We define $N$ to accept if and only if $M$ accepts and all of the additional nondeterministic bits guessed by $N$ have the value $0$. 
		Note that $N$ does not use any additional space except for maintaining and updating the counter.
	  For all $(x,k) \in \{0,1\}^* \times \mathbb{N}$, the machine $N$ accesses exactly $g(k)\cdot\log |x|$ nondeterministic bits on all computation paths and $\sh\acc_M(x,k) = \sh\acc_N(x,k)$. Also, $N$ still has property (1), since the tape $S$ has the same content and head position for all accepting configurations and the remaining part of accepting configurations doesn't change compared to $M$.
	 
		Finally, to ensure (3), we modify $N$ from the previous step to obtain $M'$ as follows.
		The new machine $M'$ has an additional tape (\tapeC) on which it initially marks exactly $\log|x|$ cells, placing the head on the left-most cell afterwards.
		Whenever $N$ reads a nondeterministic bit, $M'$ copies the nondeterministic bit to \tapeC (that is, $M'$ copies the bit to \tapeC and moves the head position to the right). 
		If the current head position in \tapeC is on the right-most marked cell, then $M'$ erases the content of \tapeC in the marked cells and copies the nondeterministic bit currently being read to the first marked cell on \tapeC.
		Finally, $M'$ accepts if and only if $N$ accepts. Since this modification does not introduce new nondeterministic steps, the number of accepting computation paths of $M'$ on any input is the same as that of $N$ and the modification preserves property (2). To re-establish property (1), we clear tape $C$ whenever we reach an accepting configuration.  
\end{proof}

\begin{restatetheorem}[thm:fpt-ub]
\begin{theorem}
	$\sh\parab\L\subseteq\F\FPT$.
\end{theorem}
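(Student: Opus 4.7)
The plan is to establish membership by a straightforward dynamic programming over the configuration graph of a witness machine, where the read-once restriction on nondeterministic bits is the key ingredient that makes the recursion well-founded.

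Let $M$ be a $k$-bounded nondeterministic $\para\L$-machine with read-once access to the choice tape computing some $F\in\sh\parab\L$. By Lemma~\ref{normalise-lemma} we may assume that $M$ has a unique accepting configuration $C_{\mathrm{acc}}$ and that every computation path reads exactly $g(k)\cdot\log|x|$ nondeterministic bits for some computable $g$. Since $M$ is a $\para\L$-machine, it uses $O(\log(|x|+f(k)))$ space on input $(x,k)$, so the set $\mathcal{C}$ of reachable configurations has size at most $h(k)\cdot|x|^{O(1)}$ for some computable $h$. The read-once condition ensures that the choice-tape head position $c$ of a configuration is non-decreasing along any computation, which is what organises the DP into well-founded layers.

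Given this, I would perform the following computation. For each pair $(C,c)$ with $C\in\mathcal{C}$ and $c\in\{0,\dots,g(k)\cdot\log|x|\}$, let $N(C,c)$ denote the number of accepting extensions starting from $C$ when $c$ nondeterministic bits have already been read. Processing $c$ from $g(k)\cdot\log|x|$ down to $0$, and for each such $c$ iterating over $C\in\mathcal{C}$, I would simulate $M$ deterministically from $C$ until one of the following occurs: (i) the unique accepting configuration $C_{\mathrm{acc}}$ is reached, contributing $N(C,c)=1$ if $c=g(k)\cdot\log|x|$ and $0$ otherwise; (ii) the next nondeterministic step is reached at some configuration $D$ with successors $D_0,D_1$ at level $c+1$, so that $N(C,c)=N(D_0,c+1)+N(D_1,c+1)$ reads off two already-computed entries; (iii) a cycle or a halting non-accepting configuration is detected, giving $N(C,c)=0$. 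Cycles within a layer are detected by cutting off the deterministic simulation after $|\mathcal{C}|$ steps, and the final output is $N(C_{\mathrm{start}},0)$.

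The number of DP cells is $O(|\mathcal{C}|\cdot g(k)\cdot\log|x|)$, and each cell is obtained by at most $|\mathcal{C}|$ deterministic transitions of $M$, each simulable in polynomial time in the configuration size. Hence the total running time is bounded by $h'(k)\cdot|x|^{O(1)}$, which is FPT, and the answer fits in $g(k)\cdot\log|x|$ bits so it can be written out within the same budget. The only conceptual step that requires care is invoking read-once access to justify the monotone layering by $c$: if $M$ could revisit the choice tape, the DP order would collapse and this simple simulation would break, so the argument genuinely relies on the $\parab$-versus-$\paraw$ distinction. Everything else is routine logspace configuration-graph bookkeeping.
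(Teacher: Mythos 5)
Your proof is correct, and it starts from the same reduction as the paper: normalise $M$ via Lemma~\ref{normalise-lemma}, note that the configuration graph has fpt size, and count paths from the initial to the unique accepting configuration. Where you differ is in how those paths are counted, and your version is the more careful one. The paper counts them by brute force -- it ``systematically iterates through all paths starting from $s(x,k)$'', incrementing a counter whenever $C_\acc$ is reached. Taken literally, that enumeration runs in time proportional to the number of computation paths, which can be as large as $2^{g(k)\cdot\log|x|}=|x|^{g(k)}$, an XP bound rather than an fpt one; what actually makes the theorem go through is a path-counting step that is polynomial in the \emph{size} of the graph, i.e.\ a dynamic program. Your table over pairs $(C,c)$ supplies exactly that: it has $O(|\mathcal C|\cdot g(k)\log|x|)$ cells, each filled by a deterministic simulation of at most $|\mathcal C|$ steps plus one addition of $O(g(k)\log|x|)$-bit numbers, giving a genuine $h'(k)\cdot|x|^{O(1)}$ bound. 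Your layering is sound -- within a layer the machine is deterministic, so cycles are caught by the $|\mathcal C|$-step cutoff, and every nondeterministic step strictly increases $c$ -- and you are right that this is where read-once access matters: with multiple-read access the accepted objects are choice-tape contents rather than computation paths, a configuration no longer determines which bits were consumed, and this configuration-level recursion would not count the right quantity. In short, both arguments rest on the fpt-size configuration graph, but yours makes explicit the dynamic-programming step that the paper's phrasing elides.
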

\end{restatetheorem}
\begin{proof}
Let $F\in \sh\parab\L$ via the machine $M$ with space-bound $g(k) + c\log n$ for some constant $c>0$. 
For an input $(x,k)$, let $\G(x,k)=(V(x,k),E(x,k))$ be the configuration graph of $M$ on $(x,k)$. 
Since $|V(x,k)| = 2^{O(g(k) + c \log|x|)}$, $\G$ can be constructed in $\FPT$ time given the input $(x,k)$. 
Now, $\sh\acc_M(x,k)$ is equal to the number of paths from the starting configuration $s(x,k)$ of $M$ on input $(x,k)$ to the unique accepting configuration $C_\acc$ (by virtue of Lemma~\ref{normalise-lemma}).
This can be done via a brute-force algorithm computing the number of accepting paths in $\G(x,k)$, since the size of $\G(x,k)$ is fpt and the branching degree of the vertices is at most $2$.
That is, we systematically iterate through all paths starting from $s(x,k)$, storing the currently investigated path, and incrementing the count whenever we reach $C_\acc$.
\end{proof}

\begin{restatelemma}[lem:closure-legparalogpars-reductions]
	\begin{lemma}
	The classes $\sh\paraw\L$ and $\sh\parab\L$ are closed under $\leqparalogpars$.
	\end{lemma}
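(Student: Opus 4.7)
The plan is to prove closure by the standard composition of a parsimonious reduction with a witness machine, taking care that random access to the ``virtual'' reduction output is simulated within logspace and that the nondeterministic-bit budget and access mode of the witness are preserved. Suppose $F\leqparalogpars F'$ via a $\para\L$ oracle TM $R$ that makes a single query and returns its answer verbatim, and let $N$ be a $k$-bounded $\para\L$-machine witnessing $F'\in\sh\paraw\L$ (respectively $\sh\parab\L$). Because $R$ is parsimonious, I may regard it as computing in $\para\L$ a function $(x,k)\mapsto(y,k'')$ with $k''\le h(k)$ for a computable $h$, and such that $F(x,k)=F'(y,k'')$. The string $y$ lives on the oracle tape, which is exempt from space bounds and may be much longer than logarithmic.

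Next, I would construct a combined machine $M'$ that on input $(x,k)$ simulates $N$ on the virtual input $(y,k'')$, producing bits of $y$ on demand. Whenever $N$'s random-access tape requests the $i$-th bit of $y$, $M'$ re-runs $R$ from scratch on $(x,k)$ while maintaining an additional $O(\log|y|)$-bit position counter; when $R$ is about to write to position $i$ of its oracle tape, that bit is handed to $N$. Relation-query-tape accesses are handled analogously when $y$ encodes a structure, and $k''$ is obtained by a preliminary run of $R$. Crucially, the nondeterministic bits consumed by $M'$ are exactly those consumed by $N$, accessed at the same moments in the same order, so read-once access is inherited in the $\parab$ case without any extra argument.

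Then I would verify the resource bounds. Since $R$ is a $\para\L$-machine, the oracle string satisfies $|y|\le|x|^{O(1)}\cdot c_1(k)$ for some computable $c_1$, hence $\log|y|=O(\log|x|)+O(\log c_1(k))$. The workspace of $M'$ is the workspace of $N$ plus the workspace of $R$ plus the $O(\log|y|)$ position counter, which is $O(\log|x|)$ plus a function of $k$, confirming $\para\L$. The number of nondeterministic bits used by $N$ on $(y,k'')$ is at most $f_N(k'')\cdot\log|y|$; for $|x|\ge 2$ this is bounded by $(f_N(h(k))\cdot(1+\log c_1(k)))\cdot\log|x|$, giving $k$-boundedness, while the finitely many small inputs can be tabulated via the $\para$-precomputation. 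Finally, by construction $\sh\acc_{M'}(x,k)=\sh\acc_N(y,k'')=F'(y,k'')=F(x,k)$.

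The main subtlety to watch is that the repeated, on-demand simulations of $R$ must neither inflate the nondeterministic-bit budget nor perturb $N$'s own access pattern to its choice tape. Because $R$ is deterministic, each re-computation only costs additional workspace, not additional nondeterminism; and because $N$'s access pattern is relayed verbatim by $M'$, the read-once restriction characteristic of $\sh\parab\L$ survives the composition unchanged. Neither part requires new ideas beyond those already implicit in logspace reduction composition, so the closure follows in both cases.
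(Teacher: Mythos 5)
Your proposal is correct and follows essentially the same route as the paper's proof: compose the witness machine with the reduction by regenerating the query string $(y,k'')$ on demand via repeated deterministic runs of the reduction machine, then check the space bound, the $k$-boundedness via the fpt bound on $|y|$ and the computable bound on $k''$, and note that the verbatim relay of the choice-tape accesses preserves read-once access. The only cosmetic differences are your explicit mention of the relation-query tape and of tabulating small inputs, neither of which changes the argument.
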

\end{restatelemma}
\begin{proof}
	We start with $\sh\paraw\L$.
	Let $F,{F'}\subseteq\{0,1\}^*\times\mathbb N$ and $F'\in\sh\paraw\L$ via the $k$-bounded, $s_{F'}$ space-bounded NTM $M_{F'}$.
	Furthermore, let $F\leqparalogpars {F'}$ via the $\para\L$ oracle TM $M$.
	Let $\sigma\colon\{0,1\}^*\times\N\to\{0,1\}^*$ and $h\colon\{0,1\}^*\times\N\to\N$ \ST the machine $M$ on input $(x,k)$ uses the only oracle query $(\sigma(x,k),h(x,k))$.
	
	Let $M_\sigma$, $M_h$ with space-bounds $s_\sigma$, $s_h$ be the machines computing $\sigma$, $h$. 
	To show that $F\in\sh\paraw\L$, we construct a new $\paraw\L$-machine $M_F$ as follows.
	
	On input $(x,k)$, the machine $M_F$ simulates $M_{F'}$ on $(\sigma(x,k),h(x,k))$ using $M_\sigma$ and $M_h$.
	Initially, $h(x,k)$ is computed using $M_h$ and the value is stored.
  Then, whenever $M_{F'}$ reads the $i$-th input symbol, $M_F$ runs $M_\sigma$ on $(x,k)$ until it outputs the $i$-th symbol and uses it as the next input symbol.
  For this, note that $|σ(x,k)|$ is fpt-bounded.
	Afterwards $M_F$ continues the simulation of $M_{F'}$.
	On $(x,k)$ the number of accepting paths of $M_F$ is exactly the number of accepting paths of $M_{F'}$ on $(\sigma(x,k),h(x,k))$ which is equal to ${F'}(\sigma(x,k),h(x,k))=F(x,k)$ as required.
	The space of $M_F$ is bounded by the space used by $M_{F'}$ on $(\sigma(x,k),h(x,k))$ and space required for running $M_\sigma$ and bookkeeping.
	Regarding bookkeeping, we need an index counter for $M_{F'}$'s input head position and the value $h(x,k)$.
	Formally, the space is bounded by
	\[
	s_h(x,k)+s_{F'}(|\sigma(x,k)|,h(x,k))+s_\sigma(|x|,k)+s_\text{bk}(|x|,k),
	\]
	where $s_\text{bk}(|x|,k)$ is the space required for bookkeeping.
	This sum is in $O(\log|x|+f(k))$.
	
	The number of nondeterministic bits required by $M_F$ is the same as $M_{F'}$ on input $(\sigma(x,k),h(x,k))$.
	Consequently, the computation of $M_F$ is still $k$-bounded as the number of nondeterministic bits is bounded by 
	\[
	f'(h(x,k))\cdot\log|\sigma(x,k)|\leq f''(k)\cdot\log|x|,
	\]
	where $f',f''$ are computable functions.
	This is due to $|\sigma(x,k)|$ being fpt-bounded and $h(x,k)$ being bounded by a computable function in $k$.
	
	We continue with $\sh\parab\L$.
	The proof is analogous but $M_{F'}$ has read-once access to its nondeterministic bits.
	Now, the only time $M_F$ accesses nondeterministic bits is when $M_{F'}$ accesses its nondeterministic bits.
	This simulation also preserves the order in which nondeterministic bits are accessed.
	So $M_F$ has read-once access to its nondeterministic bits as well.
\end{proof}

\begin{restatetheorem}[prop:closure-log]
\begin{theorem}
For any $o\in\{\W{},\W1,\beta,\beta\mbox{-}\complClFont{tail}\}$, the class $\sh\complClFont{para}_o\mbox-\L$ is closed under addition and multiplication.
\end{theorem}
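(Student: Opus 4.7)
The plan is to apply the standard disjoint-union construction for addition and the sequential-product construction for multiplication, adapting both to respect read-once access and, where relevant, tail-nondeterminism. Fix machines $M_1, M_2$ witnessing $F_1, F_2 \in \sh\complClFont{para}_o\mbox-\L$, using at most $g_1(k)\log|x|$ and $g_2(k)\log|x|$ many nondeterministic bits respectively. By Lemma~\ref{normalise-lemma}, I may assume both machines are normalised; in particular each has a unique accepting configuration, so the predicate ``$M_i$ accepts'' can be decided deterministically from the current configuration.

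For the non-tail classes $\sh\paraw\L$ and $\sh\parab\L$, the constructions are direct. For $F_1+F_2$, the new machine $M$ reads one nondeterministic bit $b$ and then simulates $M_{b+1}$ on $(x,k)$, accepting iff $M_{b+1}$ does, so that the accepting paths of $M$ biject with the disjoint union of the accepting paths of $M_1$ and $M_2$. For $F_1\cdot F_2$, the new machine $M$ sequentially simulates $M_1$ on $(x,k)$ and, upon acceptance, simulates $M_2$ on $(x,k)$, accepting iff $M_2$ does; here accepting paths of $M$ biject with pairs of accepting $(M_1, M_2)$-paths. In both constructions the nondeterministic bits are consumed left-to-right in the order dictated by the simulated machines, so read-once access is preserved for the $\beta$-variant.

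The hard part will be the tail-nondeterministic variants $o \in \{\W1, \beta\mbox{-}\complClFont{tail}\}$, because interleaving the deterministic preamble of $M_2$ with the nondeterministic suffix of $M_1$ would push nondeterministic steps outside the permitted $g(k)\log|x|$ tail-window of $M$. To get around this, $M$ first runs both $M_1$ and $M_2$ deterministically up to their respective first nondeterministic configurations $c_1, c_2$ and stores them on its work tape (each configuration needs $O(\log|x|+f(k))$ space, which fits in the $\para\L$ budget). Only then does $M$ initiate its own nondeterministic phase: for addition, a single choice bit followed by the nondeterministic continuation of either $M_1$ from $c_1$ or $M_2$ from $c_2$; for multiplication, the nondeterministic continuation of $M_1$ from $c_1$ followed, upon acceptance, by the nondeterministic continuation of $M_2$ from $c_2$. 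The entire nondeterministic phase of $M$ now consists of at most $(g_1(k)+g_2(k)+1)\log|x|$ steps, so $k$-boundedness and tail-nondeterminism are both preserved, and the sequential read order preserves read-once access for the $\beta\mbox{-}\complClFont{tail}$ case as well.
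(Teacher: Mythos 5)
Your proposal is correct and follows essentially the same route as the paper's proof: a one-bit branching construction for addition, sequential composition for multiplication, and, for the tail-nondeterministic variants, deterministically precomputing and storing the first nondeterministic configurations of both machines before entering a combined nondeterministic phase. The only cosmetic difference is your invocation of Lemma~\ref{normalise-lemma}, which the paper's argument does not need.
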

\end{restatetheorem}
\begin{proof}
Let $F_1, F_2$ be in $C$ via the NTMs $M_1$ and $M_2$, respectively.
We start by showing that the above classes are closed under addition.
We first consider $\sh\paraw\L$ and $\sh\parab\L$.  The argument is similar for both of the classes. 
We give details for $\sh\paraw\L$. 
We construct a new machine $M$ as follows: $M$ nondeterministically chooses whether to simulate the machine $M_1$ or the machine $M_2$ on the given input using a single additional nondeterministic bit.
By construction we have $\sh\acc_M(x,k) = \sh\acc_{M_1}(x,k) + \sh\acc_{M_2}(x,k) = F_1(x,k) + F_2(x,k)$.
Also, $M$ is $O(\log|x| + h(k))$ space-bounded and $k$-bounded since it only behaves in the same way as either $M_1$ or $M_2$ and uses one additional (nondeterministic) step in the beginning.
We conclude that the class $\sh\paraw\L$ is closed under addition. 
Exactly the same argument works for $C=\parab\L$.

For $\sh\parawt\L$ and $\sh\parabt\L$, we modify $M$ as follows: $M$ first simulates the deterministic parts up until the first nondeterministic step of both machines $M_1$ and $M_2$. We then choose nondeterministically whether to finish the computation of $M_1$ or $M_2$.
This ensures that $M$ is tail-nondeterministic if $M_1$ and $M_2$ are tail-nondeterministic. Also, read-once access to nondeterministic bits is still preserved.

We now show that the above classes are closed under multiplication, starting again with $\sh\paraw\L$ and $\sh\parab\L$.
We construct a new machine $M'$. 
On input $(x,k)$, $M'$ first simulates $M_1$ on input $(x,k)$.
If $M_1$ accepts, then $M'$ simulates $M_2$ on $(x,k)$, and accepts if and only if $M_2$ does so. 
Since $M_1$ and $M_2$ are $k$-bounded, $M'$ is also $k$-bounded. The space used by $M'$ is at most the maximum of that of $M_1$ and $M_2$. 
By construction we have $\sh\acc_{M'}(x,k)= \sh\acc_{M_1}(x, k)\cdot\sh\acc_{M_2}(x,k) = F(x,k) \cdot F'(x, k)$. 
Accordingly, the class $C = \sh\paraw\L$ is closed under multiplication. 
Additionally, if $M_1$ and $M_2$ only have read-once access to the nondeterministic bits, so has the new machine $M'$. 
We conclude that $\sh\parab\L$ is closed under multiplication.

For $\sh\parawt\L$ and $\sh\parabt\L$, we modify $M'$ similar to what was done to show that these classes are closed under addition. 
$M'$ first simulates the deterministic parts up until the first nondeterministic step of both machines $M_1$ and $M_2$. 
Then $M'$ simulates $M_1$ starting from the first nondeterministic step. If $M_1$ accepts, $M'$ simulates $M_2$ starting from the first nondeterministic step and accepts if and only if $M_2$ accepts. 
Formally, let $C_1$ be the first nondeterministic configuration of $M_1$ on input $(x,k)$ and $C_2$ be the first nondeterministic configuration of $M_2$ on $(x,k)$. 
The new machine $M'$ runs $M_1$ on $(x,k)$ until the configuration $C_1$ is reached. 
Then $M'$ stores the configuration $C_1$ and runs $M_2$ on $(x,k)$ until the configuration $C_2$ is reached. 
Storing $C_2$ in a separate tape, $M'$ now proceeds with the nondeterministic computation of $M_1$ starting with the configuration $C_1$. 
For every accepting configuration of $M_1$, the machine $M'$ re-starts the computation of $M_2$ starting from configuration $C_2$ accepting on all paths where $M_2$ does. 
By construction, the machine $M'$ accepts $(x,k)$ if and only if $M_1$ and $M_2$ both accept it. 
As a result, we have 
  $\sh\acc_{M'}(x,k)= \sh\acc_{M_1}(x, k)\cdot\sh\acc_{M_2}(x,k) = F(x,k) \cdot F'(x, k).$
    
Suppose $M_1$ and $M_2$ are both tail-nondeterministic. 
Then the number of steps of $M_1$ (respectively $M_2$) starting from configuration  $C_1$ (respectively $C_2$) is $k$-bounded. 
The machine $M'$ does not perform any nondeterministic steps until it starts simulating $M_1$ from configuration $C_1$. 
By the tail-nondeterminism of $M_1$, the number of steps starting from $C_1$ until its termination is $k$-bounded. Once $M_1$ terminates, $M'$ starts simulation of $M_2$ starting from $C_2$ on all accepting computations of $M_1$. 
The computation of $M_2$ starting from the configuration $C_2$ is also $k$-bounded. 
This ensures that if both $M_1$ and $M_2$ are tail-nondeterministic, so is $M'$. 
Also, it can be seen that the read-once access to nondeterministic bits is preserved.
\end{proof}

\begin{restatetheorem}[thm:logreach-parabetaL-paralog-complete]
	\begin{theorem}
		For every $b\ge2$, $\p\sh\LOG\reach\bdegree$ is $\sh\paraBetaL$-complete \wrt $\leqparalogpars$-reductions.
	\end{theorem}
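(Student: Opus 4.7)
The plan has two parts: showing $\p\sh\LOG\reach\bdegree \in \sh\paraBetaL$ and showing its $\sh\paraBetaL$-hardness under $\leqparalogpars$.

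For membership, I will exhibit a $k$-bounded, read-once $\para\L$-machine that on input $(\G, s, t, a, k)$ first deterministically verifies $a \leq k \log|V|$ (returning no accepting path if not), and then guesses an $s$-$t$-walk of length exactly $a$ one step at a time. It stores the current vertex $v$ (initialised to $s$) and a step counter $i$ on the work tape, and in each of the $a$ iterations consumes $\lceil \log b \rceil$ fresh nondeterministic bits to pick one of the at most $b$ outgoing edges, updating $v$ via random access to the input. After exactly $a$ steps it accepts iff $v=t$. The total number of nondeterministic bits is $\lceil\log b\rceil \cdot a = O(k \log|V|) = O(k \log|x|)$, giving $k$-boundedness; the workspace is $O(\log|V| + \log a)$, giving logarithmic space; and each bit is consumed exactly once, giving read-once access. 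Accepting runs are by construction in bijection with $s$-$t$-walks of length $a$.

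For hardness, let $F \in \sh\paraBetaL$ be witnessed by a $k$-bounded, read-once $\para\L$-machine $M$. I apply Lemma~\ref{normalise-lemma} to get an equivalent $M'$ with a unique accepting configuration $C_\acc$ and exactly $g(k) \log|x|$ nondeterministic bits on every computation path, still with read-once access. The reduction builds a directed graph $\G$ whose vertices are the nondeterministic configurations of $M'$ on $(x,k)$ together with $C_\acc$, plus, for each such configuration $C$ and each choice bit $c \in \{0,1\}$, a fresh auxiliary vertex $u_{C,c}$. The edges are: from $C$ to $u_{C,0}$ and $u_{C,1}$, and from $u_{C,c}$ to the next nondeterministic configuration (or $C_\acc$) reached from $C$ by making choice $c$ and then deterministic steps. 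Set $s$ to the first nondeterministic configuration of $M'$ on $(x,k)$, $t = C_\acc$, $a = 2 g(k) \log|x|$, and the new parameter $k' \dfn 2g(k)$. Because $|V|$ is polynomial in $|x|$ times an fpt factor, $\log|V| = \Theta(\log|x|+f(k))$, so the bound $a \leq k'\log|V|$ holds. Padding with edges to a fixed dead-end vertex carrying $b$ self-loops gives every vertex out-degree exactly $b$ without creating spurious $s$-$t$-walks (as the dead-end cannot reach $t$). Each accepting computation of $M'$ reads its $g(k)\log|x|$ nondeterministic bits in order, producing exactly one $s$-$t$-walk of length $a$ in $\G$, and conversely; hence the reduction is parsimonious. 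Computing $\G$ from $(x,k)$ is a routine para-logspace task: configurations have logarithmic description, and determining each edge amounts to simulating a bounded block of deterministic steps of $M'$ from a candidate configuration (with a clock to enforce halting), which fits in $\para\L$.

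The main technical obstacle is ruling out parallel edges when the two choice bits from a single nondeterministic configuration happen to lead to the same successor, which would cause undercounting in a simple-graph setting; this is exactly what the auxiliary vertices $u_{C,c}$ guard against, at the cost of doubling path lengths, which is absorbed into the constant multiplying $g(k)$. The other delicate points are (i) tuning $k'$ so that $a \leq k' \log|V|$ holds uniformly, which is needed to fire the nonzero branch of the problem's definition, and (ii) ensuring that the dummy out-degree padding neither disturbs the walk count nor breaks the logspace bound on the reduction, both of which are handled by routing all dummy edges to a single self-looping sink disjoint from $C_\acc$.
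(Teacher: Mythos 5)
Your proposal is correct and follows essentially the same route as the paper: membership by guessing a length-$a$ walk edge by edge using $O(\log b)$ fresh choice bits per step, and hardness by applying Lemma~\ref{normalise-lemma} and counting walks of length $f(k)\cdot\log|x|$ in the configuration graph restricted to nondeterministic configurations plus the unique accepting one. The only divergence is your auxiliary vertices $u_{C,c}$, which explicitly guard against two choice bits from one configuration collapsing onto a single edge; the paper instead relies on the normal form (the choice-recording tape) to keep the two successor configurations distinct, so your gadget is a harmless -- and arguably more robust -- variation rather than a different argument.
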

\end{restatetheorem}
\begin{proof}
 \begin{algorithm}[t]
 	\DontPrintSemicolon
 	\SetKwInOut{Input}{Input}
 	\caption{Algorithm solving $\p\sh\LOG\reach\bdegree$ in $\sh\paraBetaL$}\label{alg:reach}
 \Input{$\G = (V,E)$, $s,t \in V$, $a \in \mathbb{N}$, $k \in \mathbb{N}$}
\lIf{$a > k\cdot \log |V|$}{
 {\bfseries reject}\label{ln:checka}
 }
$v_1 \gets s$\;
 \For{$1 \leq i \leq a$\label{ln:forloop}}{
 guess a number $j$ between $1$ and $b$\;\label{ln:guess-a-number}
 \lIf{$v_1$ has less than $j$ successors}{
 {\bfseries reject}
 }
 let $v_2$ be the $j$th successor of $v_1$\;
 $v_1 \gets v_2$\;
 }
 \lIf{$v_1 = t$}{
 {\bfseries accept\ \ else\ \ reject}
 }
 \end{algorithm}
 Algorithm~\ref{alg:reach} shows membership. 
 The algorithm first checks the constraint on $a$.
 Then it starts from vertex $s$ and guesses an arbitrary path of length exactly $a$, using the fact that the out-degree of all vertices is bounded by $b$ to limit the number of nondeterministic bits needed for this task. 
 More precisely, we choose a natural ordering depending on the encoding of vertices and use this to reference successors of the current node by numbers $0,\dots,b-1$.
 The needed number of nondeterministic bits is $a \cdot \log(b) \in O(k \cdot \log |V|)$. 
 Furthermore, at any point of  time, indices of at most   two vertices as well as one number bounded by a constant are stored, so the algorithm uses logarithmic space.

 Regarding the lower bound, let $F \in \sh\paraBetaL$ via the machine $M$. 
 Using Lemma~\ref{normalise-lemma}, we can assume that $M$ has a unique accepting configuration $C_\acc$ and there is a computable function $f$ such that for any input $(x,k)$ all accepting computation paths of $M$ on input $(x,k)$ use exactly $f(k) \cdot \log|x|$ nondeterministic bits.
Let $(x,k)$ be an input of $M$. 
Consider the graph $\G(x,k) = (V(x,k),E(x,k))$ where $V(x,k)$ is the set containing all nondeterministic configurations of $M$ on input $(x,k)$ as well as $C_\acc$ and $E(x,k)$ contains an edge $(C,C')$ if and only if $C'$ is reached from $C$ using exactly one nondeterministic step and  any number of deterministic steps in the computation of $M$ on input $(x,k)$.
Now the number of accepting computation paths of $M$ on input $(x,k)$ is exactly the number of paths of length $f(k) \cdot \log|x|$ from the first nondeterministic configuration $s(x,k)$ reached in the computation of $M$ on input $(x,k)$ to the unique accepting configuration in $\G$.
To ensure that the length of accepting paths is bounded by $f(k)\cdot \log|\G|$, we further assume, \wLOG, that $|V(x,k)| \geq |x|$.
 
Note that no (directed) cycle is reachable from the initial configuration in the configuration graph of $M$ on any input. For that reason no cycle is reachable from $s(x,k)$ in $\G$ and hence every $s$-$t$-walk in $\G$ is an $s$-$t$-path.
We now have for all $(x,k)$ that
\begin{align*}
F(x,k) &= \sh\acc_M(x,k)\\
	   &= \p\sh\LOG\reach\bdegree((\G(x,k), s(x,k), t, f(k) \cdot \log|x|),f(k)).
\end{align*}
Adjacency within $\G(x,k)$ can be computed from $(x,k)$ in parameterised logspace, since it only depends on the (fixed) machine $M$ as well as computing deterministic paths. Furthermore, the new parameter is bounded by a computable function in the old parameter. 
Accordingly, the construction yields a para-logspace parsimonious reduction.
\end{proof}

\begin{restatetheorem}[thm:reach-parabetal-complete]
\begin{theorem}
	$\p\sh\reach$ is $\sh\parab\L$-complete \wrt $\leqparalogpars$.
\end{theorem}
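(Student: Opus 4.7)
For the upper bound, give a $\sh\parab\L$ algorithm that directly guesses an $s$-$t$-walk of length exactly $k$. On input $(\G=(V,E),s,t,k)$, maintain a current vertex $v$ initialised to $s$ and iterate $k$ times; in each iteration read $\lceil\log|V|\rceil$ fresh nondeterministic bits to obtain an index $j$, reject if $v$ has fewer than $j$ out-neighbours, and otherwise replace $v$ by its $j$th out-neighbour. After $k$ iterations, accept iff $v=t$. The total consumption is $k\cdot\lceil\log|V|\rceil$ nondeterministic bits, so the machine is $k$-bounded; the bits are scanned in order and never revisited, so access is read-once; and only one vertex plus a small counter needs to be stored, so the space is logarithmic. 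By construction, accepting computations are in bijection with $s$-$t$-walks of length exactly $k$.

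For the lower bound, let $F\in\sh\parab\L$ be witnessed by a $k$-bounded read-once machine $M$. Applying Lemma~\ref{normalise-lemma}, I obtain a normalised read-once machine $M'$ with a unique accepting configuration $C_\acc$ and a computable $g$ such that every path of $M'$ on input $(x,k)$ reads exactly $g(k)\cdot\log|x|$ nondeterministic bits, naturally grouped into $g(k)$ blocks of $\log|x|$ bits as recorded on the extra tapes \tapeS and \tapeC. The goal is to construct, in para-logspace, an instance $(G(x,k),s_0,C_\acc,k')$ of $\p\sh\reach$ whose walks of length $k'=k'(k)$ are in bijection with accepting computations of $M'$.

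The naive analogue of the construction from Theorem~\ref{thm:logreach-parabetaL-paralog-complete} would produce walks of length $g(k)\cdot\log|x|$, which is not a valid target parameter. To compress, I exploit the unbounded out-degree permitted in $\p\sh\reach$ and fire one full block per step, inserting a distinct auxiliary vertex for each block so that the resulting graph stays simple. Concretely, the vertex set of $G(x,k)$ consists of the \emph{block-boundary} configurations of $M'$ (those in which the counter on \tapeS is a multiple of $\log|x|$, in particular the initial configuration $s_0$ and $C_\acc$) together with a fresh auxiliary vertex $u_{C,\vec b}$ for each block-boundary $C\neq C_\acc$ and each $\vec b\in\{0,1\}^{\log|x|}$; the edges are $C\to u_{C,\vec b}$ and $u_{C,\vec b}\to C'$, where $C'$ is the unique block boundary reached from $C$ when the next $\log|x|$ nondeterministic bits are $\vec b$. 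Every walk from $s_0$ to $C_\acc$ in $G(x,k)$ then has length exactly $2g(k)$, alternates between block boundaries and auxiliary vertices, and is uniquely determined by the sequence of block choices $(\vec b_1,\ldots,\vec b_{g(k)})$, which in turn picks out exactly one accepting computation of $M'$. Setting $k'\dfn 2g(k)$ yields $F(x,k)=\p\sh\reach((G(x,k),s_0,C_\acc),k')$.

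To see that the reduction is computable in para-logspace, note that vertex descriptions have size $O(\log|x|+f(k))$ and that deciding an edge $u_{C,\vec b}\to C'$ amounts to simulating the deterministic portion of $M'$ from $C$ with the fixed $\log|x|$ bits $\vec b$ until \tapeS increments by $\log|x|$, which fits into logarithmic space. The main obstacle I anticipate is precisely the parallel-edge issue: several blocks $\vec b$ starting from the same $C$ may reach the same $C'$, which in a simple directed graph would coalesce into a single edge and destroy the count; the auxiliary vertices $u_{C,\vec b}$ are the cleanest way to preserve multiplicities while keeping the walk length a function of $k$ alone, at the cost of doubling it to $2g(k)$.
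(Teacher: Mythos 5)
Your proof is correct, and both directions follow the same route as the paper: the membership algorithm is identical (guess a successor index with $\log|V|$ fresh bits, $k$ times), and the hardness proof is the same block-compression of the configuration graph into chunks of $\log|x|$ nondeterministic steps so that the walk length depends on $k$ alone. The one genuine difference is how the two arguments preserve multiplicities when several blocks $\vec b\neq\vec b'$ lead from the same source configuration to the same target. You solve this with the auxiliary vertices $u_{C,\vec b}$ and a walk length of $2g(k)$; the paper instead leans on property (3) of Lemma~\ref{normalise-lemma}: because the normalised machine records the current block of nondeterministic bits on the extra tape \tapeC, configurations reached via distinct blocks are already distinct as vertices, so no parallel edges arise and the walk length stays at $f(k)$. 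Your gadget is more self-contained --- it works even for a normal form without \tapeC and makes the potential overcounting explicit, at the modest cost of doubling the target parameter --- whereas the paper's version keeps the graph smaller but buries the crucial disambiguation inside the normal-form lemma. Both yield a valid $\leqparalogpars$-reduction.
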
	
\end{restatetheorem}

\begin{proof}
Regarding membership, we modify Algorithm~\ref{alg:reach}.
Line~\ref{ln:checka} is omitted.
In line~\ref{ln:forloop}, we now loop over $i$ from $1$ to $k$ and in line~\ref{ln:guess-a-number} we guess a number between $1$ and $\log|V|$.
This number is used to choose a successor $v_2$ of $v_1$.

Regarding hardness, let $F \in \sh\parab\L$ via the machine $M$.
\WLOG, $M$ is in the normal form from Lemma~\ref{normalise-lemma} and there is a computable function $f$ \ST $M$ on any input $(x,k)$ uses exactly $f(k) \cdot \log|x|$ nondeterministic bits.
Also assume without loss of generality that the accepting configuration of $M$ is deterministic.
Fix an input $(x,k)$. 
We reduce the problem of counting the accepting computation paths of $M$ on input $(x,k)$ to the problem of counting walks in a modified version of the configuration graph of $M$ on input $(x,k)$. 
The difference to the hardness proof of Theorem~\ref{thm:logreach-parabetaL-paralog-complete} is that edges now encode computations comprised of $\log|x|$-many nondeterministic steps.
More precisely, define $\G = (V,E)$ and $s,t \in V$ as follows:
$V$ consists of all nondeterministic configurations of $M$ on input $(x,k)$ and the (unique) accepting configuration $C_\acc$ of $M$.
For $C' \neq C_\acc$, $(C,C') \in E$ if and only if configuration $C'$ is reachable from $C$ in exactly $\log|x|$-many nondeterministic steps (and any number of deterministic steps) in the computation of $M$ on input $(x,k)$. Furthermore, $(C, C_\acc) \in E$ if and only if $C_\acc$ is reachable from $C$  using only deterministic steps in the computation of $M$ on input $(x,k)$.
Finally, $s$ is the first nondeterministic configuration reached in the computation of $M$ on input $(x,k)$ and $t = C_\acc$.

Since all accepting computation paths in the configuration graph of $M$ on input $(x,k)$ use exactly $f(k) \cdot \log|x|$ nondeterministic bits, the change made compared to the reduction used for Theorem~\ref{thm:logreach-parabetaL-paralog-complete} does not change the number of paths and we can  simply  count paths of length exactly $f(k)$ in the new graph.
By the above we have
\begin{align*}
  F(x,k) &= \#\acc_M(x,k)\\
  &= \p\sh\reach((\G, s(x,k), t), f(k)).
\end{align*}
The set $E$ can still be computed by a $\para\L$-machine:
To check whether an edge $(C,C')$ is present, we simply loop over all values for the next $\log|x|$ nondeterministic bits and verify whether the corresponding sequence of configurations starting from $C$ is a path ending in $C'$ in the configuration graph of $M$ on input $(x,k)$.
Consequently, the construction yields a para-logspace parsimonious reduction.
\end{proof}

\begin{restatetheorem}[thm:reach-parabetataill-complete]
\begin{theorem}
  $\p\sh\reach$ is $\sh\parabt\L$-complete \wrt $\leqparalogpars$. 
\end{theorem}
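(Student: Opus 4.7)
The plan is to handle membership and hardness separately, observing that both directions follow almost immediately from what has already been established for Theorem~\ref{thm:reach-parabetal-complete}.

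For membership in $\sh\parabt\L$, I would reuse the algorithm sketched in the proof of Theorem~\ref{thm:reach-parabetal-complete}: starting from $s$, iteratively guess one of the at most $|V|$ successors of the current vertex (using $\log|V|$ nondeterministic bits), take $k$ such steps, and accept iff the final vertex is $t$. This procedure is read-once by construction. The key observation, already noted in the paragraph preceding the theorem, is that the total runtime is $O(k\cdot\log|V|)$. Since the very first step of the loop is already a nondeterministic step (the guessing of a successor), the number of steps made after the first nondeterministic step is bounded by $g(k)\cdot\log|V|$ for a suitable computable function $g$. Hence the machine is tail-nondeterministic in the sense of the definition given after Definition~\ref{def:reducibilities}, and thus witnesses $\p\sh\reach\in\sh\parabt\L$.

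For hardness, I would simply invoke the trivial containment $\sh\parabt\L\subseteq\sh\parab\L$ together with Theorem~\ref{thm:reach-parabetal-complete}: any $F\in\sh\parabt\L$ is a fortiori in $\sh\parab\L$, and hence $F\leqparalogpars\p\sh\reach$ by the hardness half of Theorem~\ref{thm:reach-parabetal-complete}. There is nothing in the reduction used there that needs to be re-examined, since it takes an arbitrary $\sh\parab\L$-machine as input.

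There is no real obstacle here; the interest of the statement lies not in the technical content of the proof but in its structural consequence. Combined with Theorem~\ref{thm:reach-parabetal-complete} and the closure of $\sh\parab\L$ under $\leqparalogpars$ (Lemma~\ref{lem:closure-legparalogpars-reductions}), it yields Corollary~\ref{cor:parabtaillog-clos=parabl}, the somewhat surprising collapse $\clospara{\sh\parabt\L}=\sh\parab\L$. The only point that deserves a line of care in writing up is making explicit that the loop counter and the current/next vertex can be stored in $O(\log|V|)$ bits, so that the machine remains a $\para\L$-machine while simultaneously being both read-once and tail-nondeterministic.
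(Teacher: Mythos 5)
Your proposal is correct and follows essentially the same route as the paper: membership by observing that the $O(k\cdot\log|V|)$ runtime of the algorithm from Theorem~\ref{thm:reach-parabetal-complete} already makes it tail-nondeterministic, and hardness via the trivial inclusion $\sh\parabt\L\subseteq\sh\parab\L$ combined with the hardness half of that theorem. Nothing further is needed.
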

\end{restatetheorem}

\begin{proof}
 This almost immediately follows from the proof of Theorem~\ref{thm:reach-parabetal-complete}.
 For membership, observe that the runtime of the modified algorithm described in that proof is bounded by $O(k \cdot \log n)$ and hence the algorithm is already tail-nondeterministic.
 For hardness, it suffices that the problem is $\sh\paraBetaL$-hard and $\sh\parabt\L \subseteq \sh\paraBetaL$.
 \end{proof}

\begin{restatetheorem}[thm:pathb-parabetal-complete-turingplog]
\begin{theorem}
	$\p\sh\LOG\WALK\bdegree$ is $\sh\paraBetaL$-complete \wrt $\leqparaturing$.
\end{theorem}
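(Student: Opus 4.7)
The plan is to establish membership in $\sh\paraBetaL$ by guessing a start vertex and then a walk, and to establish hardness by a single-query reduction from $\p\sh\LOG\reach\bdegree$.

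For the upper bound, I would adapt Algorithm~\ref{alg:reach}: first nondeterministically guess a starting vertex $v_0 \in V$ using $\lceil\log|V|\rceil$ bits, then proceed exactly as in the proof of Theorem~\ref{thm:logreach-parabetaL-paralog-complete}, nondeterministically generating a walk of length $a$ starting at $v_0$ while omitting the final equality check. Since $a \leq k\log|V|$, the total number of nondeterministic bits on any computation path is bounded by $(1 + k\log b)\log|V|$ and they are used read-once; the machine works in logarithmic space; and its number of accepting paths is exactly the number of walks of length $a$ in $\G$.

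For hardness, I would reduce from $\p\sh\LOG\reach\bdegree$, which is $\sh\paraBetaL$-complete by Theorem~\ref{thm:logreach-parabetaL-paralog-complete}. Given $(\G=(V,E), s, t, a, k)$, I would build a layered graph $\G' = (V', E')$ with $V' = \{(s, 0)\} \cup (V \times \{1, \ldots, a\}) \cup \{\sigma\}$, edges $((v, i), (w, i+1))$ for every $(v, w) \in E$ and $0 \leq i \leq a - 1$, and a single additional edge $((t, a), \sigma)$. The sink $\sigma$ and every $(v, a)$ with $v \neq t$ have no outgoing edges, so out-degrees in $\G'$ remain bounded by $b$, and $\G'$ can be produced by a $\para\L$-machine using logspace counters for $v$ and $i$ that queries the edge relation of $\G$ directly.

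The key step—which I expect to be the main thing to justify cleanly—is verifying that the total number of walks of length exactly $a + 1$ in $\G'$ equals $N_\G(s, t, a)$. Walks starting at $(v, i)$ with $i \geq 1$ advance at most $a - i$ layers and can then take at most one terminal step to $\sigma$, so they have length at most $a - i + 1 \leq a$; walks starting at $\sigma$ have length $0$. Walks from $(s, 0)$ of length $a + 1$ must consist of $a$ layer-advancing steps tracing a walk $(s, v_1, \ldots, v_a)$ in $\G$ followed by the forced final step to $\sigma$, which exists only when $v_a = t$; this establishes the desired bijection. Since $a + 1 \leq (k+1)\log|V'|$, a single parsimonious query to $\p\sh\LOG\WALK\bdegree$ on $(\G', a+1, k+1)$ returns the required value, giving in fact a $\leqparalogpars$ reduction and hence the claimed $\leqparaturing$-hardness.
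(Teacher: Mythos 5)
Your proof is correct, and the hardness part takes a genuinely different route from the paper's. The paper reduces directly from the configuration graph of an arbitrary $\sh\paraBetaL$-machine; since walks of the target length can also start at spurious configurations unreachable from the initial one, it prepends a path of $\log|x|$ fresh vertices and isolates the good walks by making \emph{two} oracle queries---one to the graph and one to the graph with the first prefix edge deleted---and returning their difference. That subtraction is exactly why the paper only obtains a $\leqparaturing$-reduction. You instead reduce from the already-complete problem $\p\sh\LOG\reach\bdegree$ (Theorem~\ref{thm:logreach-parabetaL-paralog-complete}) and unroll the instance into $a+1$ layers with a \emph{singleton} first layer $\{(s,0)\}$ and a sink $\sigma$ reachable only from $(t,a)$; every walk of length $a+1$ is then forced to start at $(s,0)$ and to end at $\sigma$ via $(t,a)$, which pins down both endpoints and yields the bijection with $s$-$t$-walks of length $a$ in $\G$. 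This gives a single parsimonious query and hence the strictly stronger statement that $\p\sh\LOG\WALK\bdegree$ is $\sh\paraBetaL$-complete \wrt $\leqparalogpars$, matching the reduction notion used for the other walk-counting problems in Section~\ref{ssec:countingpaths}. Two small points to tidy up: first, the $i=0$ edges should formally be restricted to those leaving $(s,0)$, since layer $0$ contains no other vertices; second, when $a > k\cdot\log|V|$ the source problem outputs $0$ by definition, whereas your constructed instance may still satisfy $a+1\le(k+1)\cdot\log|V'|$ and return the true walk count, so the reduction machine must first test this condition and, if it fails, query a fixed trivial instance with answer $0$. Both fixes are routine.
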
	
\end{restatetheorem}
\begin{proof}
	For membership, we use Algorithm~\ref{alg:reach} but nondeterministically guess nodes $s,t\in V$.

	For hardness, let $F\in\p\sh\paraBetaL$ via the machine $M$.
  \WLOG assume that $M$ has a unique accepting configuration and that there is a computable function $f$  \ST $M$ on input $(x,k)$ uses exactly $f(k) \cdot \log|x|$ nondeterministic bits.
	
	Similarly as in the proof of Theorem~\ref{thm:logreach-parabetaL-paralog-complete}, let $\G(x,k)=(V(x,k),E(x,k))$ be the configuration graph such that all paths through only deterministic configurations are substituted by a single edge.
	Furthermore, we extend $\G(x,k)$ such that we add a path of fresh vertices $v_1,\dots,v_{\log|x|}$ with $(v_i,v_{i+1})\in E(x)$ for $1\leq i<\log|x|$.
	The reason for this construction lies in possible ``bad'' sequences of configurations as depicted in Figure~\ref{fig:badsequences}.
	\begin{figure}
	\centering
	\includegraphics[width=10cm]{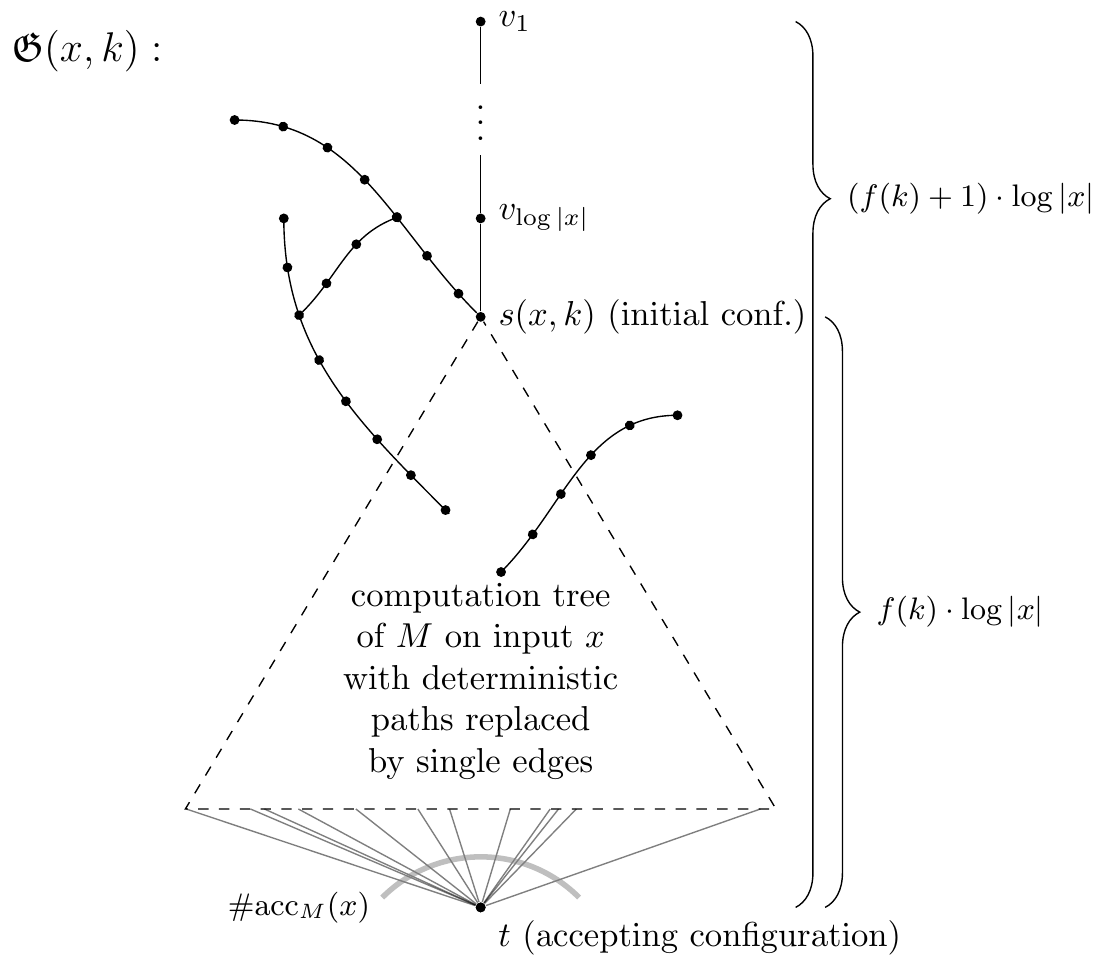} 
	\caption{Construction of $\G(x)$ in the proof of Theorem~\ref{thm:pathb-parabetal-complete-turingplog}. The black chains of unnamed vertices depict possibly occurring ``bad'' configuration sequences in the configuration graph that should not be considered as they are unreachable from the initial configuration.}\label{fig:badsequences}
	\end{figure}
  Furthermore, \wLOG we assume $|V(x,k)| \geq |x|$.
	Now, any path in $\G(x,k)$ going from $v_1$ to the initial configuration $s(x,k)$ and then to the accepting configuration $t$ is of length $\ell\dfn (f(k)+1)\cdot\log|x|$.
	Notice, that the number of such $v_1$-$t$-paths is equivalent to the number of accepting paths of $M$ on input $x$.
	To precisely calculate $\sh\acc_M(x,k)$, we will use two oracle queries to different $\p\sh\LOG\WALK\bdegree$-instances yielding a $\leqparaturing$-reduction as required. 

	At first, we compute the result $n_1$ of the oracle query $\p\sh\LOG\WALK\bdegree(\G(x,k),\ell,f(k)+1)$.
	Then, we modify $\G(x,k)$ yielding a graph $\G'(x,k)$ by deleting the edge $(v_1,v_2)$.
	This ensures that in $\G'(x,k)$ among paths of length $(f(k)+1)\cdot \log |x|$ exactly the ``good'' accepting paths are missing compared to $\G(x,k)$. 
	Then we store the value of the oracle query $\p\sh\LOG\WALK\bdegree(\G'(x,k),\ell,f(k)+1)$ in the variable $n_2$.
 	Finally, calculate the difference $n_1-n_2$ which is equivalent to $\sh\acc_M(x,k)$.
\end{proof}

\begin{restatetheorem}[thm:reach2cnf-parawl-complete]
\begin{theorem}
	$\p\sh\LOG\reach_2\CNF$ is $\sh\paraWL$-complete \wrt $\leqparalogpars$.
\end{theorem}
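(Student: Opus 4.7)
The plan for the upper bound is to stack the formula evaluation on top of the walk-guessing algorithm used for Theorem~\ref{thm:logreach-parabetaL-paralog-complete}. Stage one guesses an $s$-$t$-walk of length $a$ by sequentially guessing one of the two successors at each vertex, consuming exactly $a\leq k\cdot\log(|V|+|\varphi|)$ nondeterministic bits. Stage two deterministically evaluates the CNF formula $\varphi$ on the characteristic assignment $c_P$ using the standard logspace propositional model-checking routine, iterating through clauses and literals. The crucial point is the evaluation of a single variable $e\in E$: since edges are not stored on the work tape, we recompute the value $c_P(e)$ by re-reading the very same $a$ nondeterministic bits that defined $P$, re-traversing the walk, and testing whether $e$ appears along it. This re-reading is what forces $\sh\paraWL$ rather than $\sh\parab\L$, but it keeps the total number of nondeterministic bits bounded by $a$ and uses only $O(\log(|V|+|\varphi|)+f(k))$ space for bookkeeping.

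For hardness, the plan is to reuse the configuration-graph construction from Theorem~\ref{thm:logreach-parabetaL-paralog-complete} and add a CNF formula that enforces \emph{consistency} among the re-readable nondeterministic bits. Given $F\in\sh\paraWL$ via a $k$-bounded $\para\L$-machine $M$, apply Lemma~\ref{normalise-lemma} to obtain a normalised $M'$ with a unique accepting configuration and exactly $f(k)\cdot\log|x|$ nondeterministic bit accesses on every computation path. Define $\G=(V,E)$ as the compressed configuration graph: vertices are nondeterministic configurations together with the unique accepting configuration, and edges encode one nondeterministic step followed by the ensuing deterministic sub-computation to the next nondeterministic configuration. The out-degree is at most~$2$ by construction. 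Each edge $e\in E$ is naturally annotated by the pair $(p_e,v_e)$, where $p_e$ is the position on the choice tape that $M'$ reads when traversing $e$ and $v_e\in\{0,1\}$ is the value on the tape that triggers this transition.

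A walk from the starting configuration to the accepting configuration of length exactly $f(k)\cdot\log|x|$ corresponds to a potential execution trace of $M'$, but only the \emph{consistent} traces (where any two edges that read the same choice-tape position agree on the bit value) arise from a fixed content of the choice tape, i.e., from an actual accepting computation. Consistency is expressed by the CNF $\varphi$ whose variables are the edges of $E$ and whose clauses are $(\neg e_1\lor\neg e_2)$ for every pair of edges $e_1,e_2\in E$ with $p_{e_1}=p_{e_2}$ and $v_{e_1}\neq v_{e_2}$. Under $c_P$, a walk $P$ satisfies $\varphi$ iff it avoids every such conflicting pair, so the number of walks of length $f(k)\cdot\log|x|$ from $s$ to $t$ satisfying $\varphi$ equals $\sh\acc_{M'}(x,k)=F(x,k)$. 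Setting the new parameter to $k':=f(k)$ validates the required length bound $a\leq k'\cdot\log(|V|+|\varphi|)$.

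The main obstacle I expect is to verify that this whole reduction is performable in $\para\L$ and, in particular, that the CNF $\varphi$ can be emitted bit-wise in para-logspace. The graph itself is constructible as in Theorem~\ref{thm:logreach-parabetaL-paralog-complete}. For the formula, the enumeration of clauses amounts to iterating over ordered pairs of edges and comparing their annotations; each edge, and thus each clause, is describable using $O(\log|x|+f(k))$ bits, so the clauses can be streamed to the output with only logspace plus parameter-dependent bookkeeping, giving a $\leqparalogpars$-reduction.
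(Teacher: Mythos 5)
Your proposal matches the paper's proof essentially step for step: membership by guessing the walk and re-reading the same nondeterministic bits to evaluate each edge variable during the logspace CNF evaluation, and hardness via the configuration graph of Theorem~\ref{thm:logreach-parabetaL-paralog-complete} augmented with a consistency CNF whose clauses $(\lnot e_1\lor\lnot e_2)$ rule out pairs of edges reading the same choice-tape position with conflicting values (your $(p_e,v_e)$ annotation is exactly the paper's $\ell$ and $\mathrm{val}$). The argument is correct and takes the same route.
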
	
\end{restatetheorem}

\begin{proof}
Regarding membership, we can modify Algorithm~\ref{alg:reach} to find paths as before. 
We then need to check whether the chosen path also satisfies the formula φ. 
For this, we do the following.
First, we use the folklore logspace-algorithm for propositional model-checking.
Then, whenever we need the value of a variable $e$, we re-compute the whole path constructed using Algorithm~\ref{alg:reach} reusing the nondeterministic bits and check whether the edge $e$ is used in that path (when $e$ occurs, then the computation can be stopped and the variable $e$ is true; when the computation terminates without $e$ occurring, then the variable $e$ is false). 
This yields a $\sh\paraWL$-algorithm.

Regarding hardness, we use the same construction as in the proof of Theorem~\ref{thm:logreach-parabetaL-paralog-complete}. 
The difference is that nondeterministic bits can be reused. 
This means that we need to ensure to only count paths on which different queries to the same nondeterministic bit assume the same value for that bit. 
Let $\G(x,k) = (V(x,k), E(x,k))$, $s(x,k)$, $t$ and $f$ be as in the proof of Theorem~\ref{thm:logreach-parabetaL-paralog-complete}. 
Let $\ell \colon E(x,k) \to \mathbb{N}$ be a labelling function stating which nondeterministic bit is read on each edge of $\G(x,k)$ and let $\textrm{val} \colon E(x,k) \to \{0,1\}$ be a function mapping each edge to the value of the corresponding nondeterministic bit assumed in the computation when using that edge. 
We can now define
\[φ(x,k) = \bigwedge_{\substack{e_1, e_2 \in E(x,k),\\\ell(e_1) = \ell(e_2),\\\textrm{val}(e_1) \neq \textrm{val}(e_2)}} \lnot e_1 ∨ \lnot e_2,\]
expressing that the values assumed for the nondeterministic bits are consistent throughout a path. 
Notice that $φ(x,k)$ can be computed in parameterised logspace.
Hence, the mapping
\[(x,k) \mapsto ((\G(x,k), s(x,k), t, φ(x,k), f(k) \cdot \log|x|), f(k))\]
is a para-logspace parsimonious reduction.
\end{proof}

\begin{restatetheorem}[thm:cyclecover-parawl]
\begin{theorem}
	$\p\sh\cyclecover_2\CNF$ is $\sh\paraWL$-complete \wrt $\leqparalogpars$.
\end{theorem}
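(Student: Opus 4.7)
To place $\p\sh\cyclecover_2\CNF$ in $\sh\paraWL$, I design a $k$-bounded nondeterministic $\para\L$-machine that guesses the cycle cover edge-by-edge: it iterates up to $k$ times, and in each iteration first guesses a starting vertex of a new non-self-loop cycle using $\log|V|$ bits and then repeatedly guesses one of the two outgoing edges (one bit per step) until the cycle closes. Since the total non-trivial coverage is $k\cdot a$ with $a\le\log(|G|+|\varphi|)$, the overall nondeterminism is $O(k\cdot\log(|G|+|\varphi|))$, so the machine is $k$-bounded. A deterministic pass checks cycle disjointness, the exact count $k\cdot a$, the existence of a self-loop in $E$ at every uncovered vertex, and $\varphi$; whenever an edge variable's truth value is needed, the machine re-runs the guess using multi-read access to its nondeterministic tape, as in the proof of Theorem~\ref{thm:reach2cnf-parawl-complete}.

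\textbf{Lower bound.} I reduce from $\p\sh\LOG\reach_2\CNF$, which is $\sh\paraWL$-complete under $\leqparalogpars$ by Theorem~\ref{thm:reach2cnf-parawl-complete}. Given $(\G=(V,E),s,t,\varphi,a,k)$ with $a\le k\cdot\log(|V|+|\varphi|)$, the plan is to time-unroll $\G$ into a layered graph $\G'$ with vertices $(u,j)$ for $u\in V$ and $0\le j\le a$, walk edges $((u,j),(v,j+1))$ for $(u,v)\in E$, and a closing hub $R$ that receives an edge from every $(u,a)$ and sends its unique outgoing edge back to $(s,0)$. Clauses in $\varphi'$ force the incoming edge of $R$ to originate from $(t,a)$ and translate $\varphi$ via Tseitin-style auxiliary edge variables $y_e$ for $e\in E$ satisfying $y_e\leftrightarrow\bigvee_j z_{e,j}$, where $z_{e,j}$ denotes the layered edge in $\G'$ corresponding to using $e$ at step $j$; this keeps $\varphi'$ of polynomial size and avoids the $a^q$ blow-up that a direct substitution would incur. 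I pad the walk by a short chain of deterministic wait-edges so that the resulting cycle length $L$ is divisible by $k$, add polynomially many dummy self-loop vertices to $\G'$ to ensure $\log(|\G'|+|\varphi'|)\ge L/k$, and output the parameters $a':=L/k$ and $k':=k$; the map is clearly computable in parameterised logspace and preserves the number of witnesses, so it is $\leqparalogpars$.

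\textbf{Main obstacle.} The delicate part is respecting the out-degree bound of $2$ while insisting that every off-walk vertex carry a self-loop. A layered vertex $(u,j)$ naturally wants three outgoing edges (two walk successors and a self-loop), so I plan to interpose a branching vertex $B(u,j)$ that absorbs the two walk successors while $(u,j)$ keeps its self-loop and a single edge to $B(u,j)$. The same predicament then reappears at $B(u,j)$, which I expect to resolve by a small gadget coupling $B(u,j)$ with a partner so that off-walk configurations yield self-loops rather than spurious 2-cycles (the latter would violate the budget of $\le k$ non-self-loop cycles); an analogous gadget realises each Tseitin auxiliary $y_e$ without contributing a non-self-loop cycle. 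This careful accounting of gadget sizes and cycle-cover parities---rather than the high-level strategy---is where the proof becomes technically involved.
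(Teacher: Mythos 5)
Your upper bound matches the paper's in substance, but for a \emph{parsimonious} counting argument you must also fix a canonical way of guessing each cover: the paper guesses the cycle heads in increasing order and forbids each walk from visiting any vertex lexicographically smaller than its starting vertex, so that every admissible cycle cover corresponds to exactly one accepting path. As written, a cycle of length $c$ can be entered at any of its $c$ vertices and the cycles can be permuted, so your machine would overcount.

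The lower bound is where the genuine gap lies, and you have in fact located it yourself without closing it. Two steps fail as described. First, the branching vertex $B(u,j)$ already spends its out-degree budget of $2$ on the two walk successors, so it can carry neither a self-loop nor an edge to a partner; when $(u,j)$ is off the chosen walk and takes its self-loop, $B(u,j)$ has no incoming cover edge and no admissible way to be covered, so \emph{no} cycle cover of the constructed graph exists at all. Second, the Tseitin auxiliaries $y_e$ must themselves be edges of the constructed graph (the problem requires $\Vars{\varphi'}\subseteq E'$), and an edge variable is true exactly when that edge lies in the cover; a true non-self-loop edge forces a non-self-loop cycle through its endpoints, and since a walk can use up to $a\approx k\cdot\log n$ distinct edges of $\G$, this blows both the budget of at most $k'$ non-self-loop cycles and the requirement that exactly $k'\cdot a'$ vertices be covered non-trivially. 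The paper sidesteps all of this by \emph{not} reducing from the problem $\p\sh\LOG\reach_2\CNF$ but directly from an arbitrary $\sh\paraWL$ machine: it reuses the configuration graph $\G(x,k)$ and the consistency formula $\varphi(x,k)$ from the hardness proof of Theorem~\ref{thm:reach2cnf-parawl-complete}, adds self-loops at every vertex except $s(x,k)$ and $t$ together with the back edge $(t,s(x,k))$, and observes that the unique non-self-loop cycle in any admissible cover is exactly one accepting computation path of length $f(k)\cdot\log|x|$. There the formula already lives on the configuration-graph edges, so no unrolling, no padding, and no auxiliary variables are needed; I recommend switching to that machine-level reduction rather than trying to engineer the missing gadgets.
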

\end{restatetheorem}

\begin{proof}
  The following algorithm shows membership.
  Let the input be the directed graph $\G = (V,E)$, the CNF-formula $φ$ and $a, k \in \mathbb{N}$.
  Guess distinct vertices $v_1, \dots, v_\ell \in V$ where $\ell \leq k$.
  These are supposed to be vertices from distinct cycles in the cycle cover.
  Now, for each $v_i$ guess a walk from that vertex to itself not containing any vertex whose encoding is lexicographically smaller than that of $v_i$.
  Reusing nondeterministic bits, it can be assured that no vertex is visited twice.
  In the above procedure, count the number of visited vertices.
  If that number is $k \cdot a$, then accept.
  This way, each accepting path corresponds to a distinct cycle cover of $\G$ with the desired properties.

  For hardness, we proceed similarly as for $\p\sh\LOG\reach_2\CNF$ in the proof of Theorem~\ref{thm:reach2cnf-parawl-complete}.
  Let $\G(x,k) = (V(x,k), E(x,k)), s(x,k), t$ and $f$ be defined as in the proof of that theorem.
  Now let $\G'(x,k)$ be the graph $\G(x,k)$ with added self-loops for all vertices except $s(x,k)$ and $t$, as well as the additional edge $(t,s(x,k))$.
  In $\G(x,k)$, the component reachable from $s(x,k)$ does not contain any cycles.
  Furthermore, $s(x,k)$ does not have a self-loop in $\G'(x,k)$.
  For these reasons, the edge $(t, s(x,k))$ is contained in any cycle cover of $\G'(x,k)$.
  The cycle containing this edge always consists of this edge as well as a path from $s(x,k)$ to $t$, that is, an accepting computation path.
  By construction, all accepting computation paths have length exactly $f(k) \cdot \log |x|$.
  For this reason, we can simply choose $k' = f(k)$ and $a = \log |x|$.
  Now cycle covers of $\G'(x,k)$ with $\leq f(k)$ non-self-loop-cycles and exactly $k\cdot a$ non-trivially covered vertices cover exactly one accepting computation path non-trivially and cover all other vertices by self-loops.
  Finally, $φ(x,k)$ is chosen as in the proof of Theorem~\ref{thm:reach2cnf-parawl-complete}.

  The desired reduction function is $(x,k) \mapsto ((\G'(x,k), φ(x,k), f(k) \cdot \log |x|), f(k))$.
  \end{proof}

\begin{restatetheorem}[thm:mc-local-bounded]
\begin{theorem}
	For $a\ge2,r\ge1$, $\p\sh\MC{\Sigmarlocal}_a$ is $\sh\paraBetaL$-complete and $\sh\parabt\L$-complete \wrt $\leqparalogpars$.
\end{theorem}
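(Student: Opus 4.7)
I would establish both completeness claims in one stroke by showing (i) membership in $\sh\parabt\L$, which automatically yields membership in $\sh\paraBetaL$ via the inclusion $\sh\parabt\L\subseteq\sh\paraBetaL$, and (ii) hardness for $\sh\paraBetaL$ under $\leqparalogpars$, which, by the same inclusion, automatically yields $\sh\parabt\L$-hardness under $\leqparalogpars$. Thus the two completeness claims follow from a single upper and a single lower bound.

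\textbf{Membership in $\sh\parabt\L$.}
On input $(\varphi,\struc A,k)$, first deterministically check $k=|\varphi|$ (otherwise output~$0$); this part is fully deterministic and thus does not affect the tail-nondeterminism bound. The plan is then to scan the atoms $\theta_1,\dots,\theta_m$ of $\varphi$ in syntax-tree order, maintaining a small cache of current variable assignments. When the scan first encounters a variable $x$ in some atom $\theta_i$, the machine guesses an element of $\dom(\struc A)$ for $x$ by reading $\log|\struc A|$ fresh nondeterministic bits, stores the pair $(x,\text{value})$ in the cache, and evaluates $\theta_i$ using random access to the relations of $\struc A$; subsequent atoms within the window of $r$ atoms simply look up $x$ in the cache. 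Because $\varphi$ is $r$-local, any variable can safely be \emph{dropped} from the cache once $r$ further atoms have been read without it reappearing; so the cache holds at most $a\cdot r$ pairs at any time, using $O(a\,r\,\log|\struc A|)$ space. After the final atom, the machine evaluates the Boolean skeleton (a standard log-space pass over the syntax tree) and accepts iff the induced assignment to all free variables (which must be completed by additional independent guesses for any free variable not actually constrained) is a model; counting ensures each satisfying tuple is hit exactly once. The nondeterministic bits are consumed strictly left-to-right (read-once), and the total number of steps from the first guess onward is bounded by $O(|\varphi|\cdot a\,r\,\log|\struc A|)=f(k)\cdot\log|\struc A|$, giving tail-nondeterminism. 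The only subtlety, and the main obstacle in this part, is to ensure that parsimony holds even when some free variables are not mentioned in any atom in the $r$-local window used for evaluation; I would handle this by a preprocessing pass that lists all free variables in the order of their first occurrence, so that each free variable is guessed exactly once and contributes the correct multiplicative factor $|\dom(\struc A)|$ for unrestricted free variables.

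\textbf{Hardness for $\sh\paraBetaL$.}
I would reduce from $\p\sh\reach$, which is $\sh\paraBetaL$-complete under $\leqparalogpars$ by Theorem~\ref{thm:reach-parabetal-complete}. Given an instance $(\G,s,t,k)$ with $\G=(V,E)$, build the $(E,s,t)$-structure $\struc A$ with universe $V$ and natural interpretations, and output the formula
\[
\varphi_k(x_1,\dots,x_k)\;\dfn\;(x_1=s)\,\wedge\,E(x_1,x_2)\,\wedge\,E(x_2,x_3)\,\wedge\,\dots\,\wedge\,E(x_{k-1},x_k)\,\wedge\,(x_k=t).
\]
Every variable $x_i$ occurs only in the at most two consecutive atoms that mention $x_{i-1},x_i$ and $x_i,x_{i+1}$ (plus the endpoint equalities for $i=1,k$), so $\varphi_k$ is $1$-local, and all relation symbols have arity $\le 2$; hence $\varphi_k\in\Sigmarlocal$ for any $r\ge 1$, and the arity bound $a\ge 2$ is respected. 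By construction, $(v_1,\dots,v_k)\in\varphi_k(\struc A)$ iff $(v_1,\dots,v_k)$ is an $s$-$t$-walk of length $k-1$ in $\G$, so $|\varphi_k(\struc A)|$ equals $\p\sh\reach(\G,s,t,k)$ (or $k{+}1$, adjusting by a constant shift if needed). Setting the new parameter to $|\varphi_k|$, which is a computable function of $k$, makes the map $(\G,s,t,k)\mapsto((\varphi_k,\struc A),|\varphi_k|)$ a $\leqparalogpars$-reduction: the structure $\struc A$ is a direct re-encoding of $\G$, and $\varphi_k$ is computable in para-logspace from $k$ alone.

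\textbf{Combining the two bounds.}
Membership in $\sh\parabt\L$ implies membership in $\sh\paraBetaL$, while $\sh\paraBetaL$-hardness under $\leqparalogpars$ implies $\sh\parabt\L$-hardness under the same reducibility (since any $F\in\sh\parabt\L\subseteq\sh\paraBetaL$ reduces to the problem). The hardest part of the whole argument is not the reduction but the \emph{tail-nondeterministic, read-once} evaluation algorithm: one has to arrange the guesses in a strict left-to-right order dictated by the $r$-local scan, bound the cache independently of $|\struc A|$, and simultaneously preserve exact counts of satisfying assignments so that parsimony of both inclusions is genuine.
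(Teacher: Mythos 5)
Your proposal is correct and follows essentially the same route as the paper: membership is shown for $\sh\parabt\L$ via a tail-nondeterministic, read-once scan of the atoms that caches at most $a\cdot r$ variable assignments thanks to $r$-locality, and hardness is shown for $\sh\paraBetaL$ by reducing from $\p\sh\reach$ with exactly the same path formula $\varphi_k$, the two completeness claims then following from $\sh\parabt\L\subseteq\sh\paraBetaL$. Your extra care about free variables that occur in no atom and about the off-by-one in the walk-length convention are reasonable refinements of details the paper leaves implicit.
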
	
\end{restatetheorem}

\begin{proof}
  We show both completeness-results by showing membership in $\sh\parabt\L$ and hardness for $\sh\parab\L$.

  For membership, let $φ \in {\Sigmarlocal}_a$ be a formula over some vocabulary σ, $\struc{A}$ be a σ-structure and $k \in \mathbb{N}$.
  \WLOG, $|φ| \leq k$, since we can immediately reject otherwise.
  When evaluating φ in $\struc{A}$ in a topdown-fashion, book-keeping only needs space bounded by a computable function in $k$, since $|φ| \leq k$.
  With the naive approach, we need more than logarithmic space to store the first-order assignment.
  To avoid this, we use the fact that φ is $r$-local and the arity of relation symbols occurring in φ is bounded by $a$.
  We nondeterministically guess the assignment for any first-order variable in φ when we first encounter an atom involving that variable and due to $r$-locality we can discard it after evaluating the next $r$ atoms.
  This means that at any point during the computation, we store an assignment only for the variables that occurred in the previous $r$ atoms, that is, for at most $a\cdot r$ variables.
  Note that we evaluate φ node by node and for each node only have to process a constant number of elements of $\dom(\struc A)$, which are of logarithmic length.
  Since $|φ| \leq k$, this means that the runtime of the whole procedure is bounded by $f(k) \cdot \log|\struc A|$ for some computable function $f$ and hence the procedure is tail-nondeterministic.

  To show hardness, we reduce from $\p\sh\reach$, which is $\sh\paraBetaL$-complete by Theorem~\ref{thm:reach-parabetal-complete}.
  Define the vocabulary $σ\dfn (E,s,t)$. 
  Fix an input $((\G = (\textsf{V},\textsf{E}), \textsf{s},\textsf{t}), k)$ of $\p\sh\reach$. 
  The formula
  \[φ_k(x_1, \dots, x_k) \dfn (x_1 = s) ∧ E(x_1, x_2) ∧ E(x_2, x_3) ∧ \dots ∧ E(x_{k-1}, x_k) ∧ x_k = t\]
  expresses that a tuple of vertices $(v_1, \dots, v_k)$ is an $s^{\struc{A}}$-$t^{\struc{A}}$-walk in the input structure $\struc{A}$.
  We define the σ-structure $\struc{G}$ as $(\textsf{V}, \textsf{E}, \textsf{s}, \textsf{t})$.
  By construction, the number of satisfying assignments of φ in structure $\struc{G}$ is exactly the number of $\textsf{s}$-$\textsf{t}$-paths of length $k$ in $\G$.
  Also, $|φ|$ is bounded by $k$.
  As a result, the mapping $((\G, \textsf{s}, \textsf{t}), k) \mapsto ((φ_k, \struc{G}), |φ|)$ is the desired reduction.
\end{proof}

\begin{restatetheorem}[thm:mc-for-w1]
\begin{theorem}
	$p\text-\sh\MC{\Sigma_0}$ is $\sh\parawt\L$-complete \wrt $\leqparalogpars$.
\end{theorem}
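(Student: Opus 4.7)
For membership, my plan is to construct a $\sh\parawt\L$-machine that, given $((\varphi,\struc A),k)$ with $k=|\varphi|$, enumerates the $m\le k$ distinct free variables of $\varphi$ in a deterministic preprocessing phase, then enters its tail and nondeterministically reads $m\cdot\lceil\log|\struc A|\rceil$ bits of the choice tape, interpreting them as an assignment in $\dom(\struc A)^m$. It evaluates $\varphi$ top-down in logspace, rewinding the choice-tape head (crucially using multi-read access) whenever an atom $R(x_{i_1},\dots,x_{i_a})$ needs the value of some $x_{i_j}$, and querying $R^{\struc A}$ in constant time via the relation-query tape. The machine accepts iff $\varphi$ holds, so accepting paths correspond bijectively to satisfying assignments. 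With $O(k\log|x|)$ nondeterministic bits and an $O(k^2\log|x|)$-step tail, this is $k$-bounded and tail-nondeterministic.

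For hardness, take $F\in\sh\parawt\L$ computed by a machine $M$ and apply Lemma~\ref{normalise-lemma} to obtain a normalised $M'$ with unique accepting configuration, exactly $N=f(k)\log|x|$ nondeterministic bit accesses per path, and tail length bounded by $T=g(k)\log|x|$. The plan is to reduce $(x,k)$ to $((\varphi,\struc A),|\varphi|)$, where $\struc A$ has polynomial-size universe containing $M'$'s tail configurations, $\log|x|$-bit chunk values, and chunk indices $1,\dots,f(k)$. On this universe I introduce (i) a relation $\text{Trans}(c,s,c')$ holding iff running $M'$ from configuration $c$ for $\log|x|$ tail steps while taking the bits of $s$ as the ordered sequence of bit values read during the interval (whether fresh reads or re-reads) leads to $c'$, and (ii) for each $j\in\{1,\dots,f(k)\}$ a relation $\text{Consistent}^{(j)}(c,s,y)$ holding iff every bit position read during that interval which falls in tape chunk $j$ has the value recorded in $s$ matching the corresponding bit of $y$. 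Both relations have polynomial size and can be populated in $\para\L$, since for each fixed $(c,s)$ the interval's trajectory—including which bit positions are visited—is deterministic.

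The formula $\varphi$ has free variables $y_1,\dots,y_{f(k)}$ for the tape chunks, $c_0,\dots,c_{g(k)}$ for checkpoint configurations at the ends of the $\log|x|$-step tail intervals, and summary variables $s_1,\dots,s_{g(k)}$ for the ordered bit-value sequences read during each interval, with body
\[
c_0 = c_{\text{start}} \wedge c_{g(k)} = c_{\text{acc}} \wedge \bigwedge_{i=1}^{g(k)} \Bigl(\text{Trans}(c_{i-1},s_i,c_i) \wedge \bigwedge_{j=1}^{f(k)} \text{Consistent}^{(j)}(c_{i-1},s_i,y_j)\Bigr).
\]
The size of $\varphi$ is $O(g(k)\cdot f(k))$, a function of $k$ alone. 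For any accepting tape $(y_1,\dots,y_{f(k)})$, the deterministic unwinding of $M'$ uniquely fixes all $c_i$'s and $s_i$'s and satisfies every conjunct; conversely every satisfying assignment encodes exactly one accepting computation, yielding $|\varphi(\struc A)|=\acc_{M'}(x,k)$ and making the reduction parsimonious.

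The main technical obstacle is that the multi-read access of $M'$ would, in a naive encoding, force a per-interval atom of arity $f(k)+2$ tabulated over $|x|^{\Theta(f(k))}$ tuples, exceeding what a $\para\L$-reduction can produce. The factorisation via the trace-summary variables $s_i$ resolves this: once $(c_{i-1},s_i)$ is fixed the interval's trajectory is determined, so chunkwise tape-consistency reduces to a polynomial-size check per chunk, and the transition itself is tabulated once in $\text{Trans}$ independently of the tape content; this is precisely where the unbounded arity permitted in $\Sigma_0$ is needed, in contrast to the bounded-arity setting of Theorem~\ref{thm:mc-local-bounded}.
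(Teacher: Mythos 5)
Your membership argument is the same as the paper's (guess the assignment, then evaluate, re-reading the choice tape instead of storing the assignment on the work tape -- which is indeed necessary, since $k\cdot\log|\struc A|$ bits do not fit into $\para\L$ space), so that direction is fine. Your hardness argument reaches the same conclusion by a genuinely different encoding. The paper keeps the formula minimal: a single relation $R(C_1,C_2,b_1,\dots,b_{g(k)})$ of arity $g(k)+2$ receives the \emph{entire} choice-tape content at every atom, and consistency of re-reads across intervals is automatic because every atom sees the same $z_1,\dots,z_{g(k)}$. The price is exactly the obstacle you identify: as an explicit list of tuples that relation has $|x|^{\Theta(g(k))}$ entries, and the paper only argues that membership in $R^{\struc A}$ is \emph{decidable} in $\para\L$ (leaning on the implicit, query-style representation of relations built into its machine model) rather than that the tuple list is fpt-sized. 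Your factorisation through the per-interval read-trace variables $s_i$ buys explicitly fpt-sized, arity-$3$ relations that a $\para\L$ reduction can literally write down, at the cost of extra free variables whose values must be shown to be forced; your induction that each successive entry of $s_i$ is pinned down by $\mathrm{Consistent}^{(j)}$ against the actual chunk contents is the right argument and does establish the bijection. Two loose ends you should discharge in the style of Lemma~\ref{normalise-lemma}: (i) if an interval performs fewer than $\log|x|$ reads, the unused suffix of $s_i$ must be forced (e.g.\ required to be all zeros, or $s_i$ ranges over traces of exactly the realised length), otherwise one accepting computation yields several satisfying assignments and parsimony fails; (ii) the tail must be padded so that every accepting path runs for exactly $g(k)\cdot\log|x|$ steps after the first nondeterministic step, so that the $g(k)$ checkpoints are well defined on every path. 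Both are routine normalisations.
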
	
\end{restatetheorem}
\begin{proof}
Regarding membership in $\sh\parawt\L$, let $\varphi(x_1, \dots, x_n) \in\Sigma_0$ be a σ-formula and ${\struc{A}}$ be a σ-structure. Furthermore, let $A$ be the universe of $\struc A$. 
The counting machine for $|\varphi({\struc{A}})|$ can be described as follows. 
Nondeterministically guess assignments for $x_1,\ldots, x_k$ from $A$ and verify if $\varphi$ is true under the guessed assignments.
The resulting machine is $k$-bounded, tail-nondeterministic (\cf the proof of Theorem \ref{thm:mc-local-bounded}) and uses space at most $O(|\varphi| + \log n)$, where $n = |{\struc{A}}|$. 

Regarding hardness, let $F\in\sh\parawt\L$ and $M$ be a $k$-bounded $O(\log|x|+g(k))$ space-bounded NTM, with access to $g(k)\cdot\log|x|$ nondeterministic bits.
\WLOG, assume that $M$ is in the normal form from Lemma~\ref{normalise-lemma}.

Let $σ = (V, R, C_0, C_\acc)$.
Fix an input $(x,k)$.
We define a $σ$-structure $\struc A$ from the configuration graph of $M$ on input $(x,k)$.
Let $V^{\struc A}$ be the set of all nondeterministic configurations of $M$ on input $(x,k)$ along with the accepting configuration, $C_0^{\struc A}$ be the first nondeterministic configuration of $M$ on input $(x,k)$ and $C_\acc^{\struc A}$ be the accepting configuration of $M$.
Let $R^{\struc A}$ be a relation with arity $g(k)+2$ \ST $R^{\struc A}(C_1,C_2, b_1,\ldots,b_{g(k)})$ is true if an only if $M$ reaches configuration $C_2$ from $C_1$ using exactly $g(k)\cdot\log|x|$ nondeterministic steps with $b_1 \circ \dots \circ b_{g(k)}$ as the content of the choice tape.
Here, $b_i \in \{0,1\}^{\log|x|}$ for all $i$. 
Given $C_1,C_2$ and numbers $b_1,\ldots , b_{g(k)} \in \{1,\ldots, |x|\}$, testing if $R^{\struc A}(C_1,C_2, b_1,\ldots,b_{g(k)})$ is true or not can be done in $\para\L$.
Now, let $\dom(\struc A) = \{1,\ldots, |x|\} \cup V$ and define the following $σ$-formula:
\[φ = \bigwedge_{i=1}^{g(k)} V(x_i) \wedge \bigwedge_{i=1}^{g(k)} \neg(V(z_i))\wedge (C_0 =x_1) \wedge (C_\acc = C_{g(k)}) \wedge \bigwedge_{i=1}^{g(k)-1} R(x_i, x_{i+1}, z_1,\ldots, z_{g(k)}). \] 
 Note that φ has $2\cdot g(k)$ free variables. 
 For any assignment to the free variables, φ is true if and only if the assignment to $z_1,\ldots, z_{g(k)}$ represents a nondeterministic choice that leads to acceptance and the assignment to $x_1,\ldots, x_{g(k)}$ represents the corresponding sequence of configurations. 
 In fact, given an assignment to $z_1,\ldots, z_{g(k)}$, if $M$ accepts along this nondeterministic path, then there is a unique assignment to the variables $x_1,\ldots, x_{g(k)}$ that satisfies the formula φ.
 
 Combining the above observations with the arguments used in the proofs of Theorems~\ref{thm:reach-parabetal-complete} and~\ref{thm:mc-local-bounded}, we conclude that $\sh\acc_M(x,k) = |φ({\struc{A}})|$. 
 Given $(x,k)$, the structure ${\struc{A}}$ and φ can be computed in $\para\L$ by similar arguments as before.
 This completes the proof.
\end{proof}

\begin{restatetheorem}[thm:phom(p^*)-pbetaL-complete]
\begin{theorem}
	$\pcountHom{\class{P}^*}$ is $\sh\paraBetaL$-complete \wrt $\leqparalogpars$.
\end{theorem}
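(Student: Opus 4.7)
The plan is to prove both directions via reductions to and from $\p\sh\reach$, which is $\sh\paraBetaL$-complete by Theorem~\ref{thm:reach-parabetal-complete}. The guiding observation is that a homomorphism from a coloured path into an arbitrary structure amounts to a walk in a layered graph whose layers are pinned down by the singleton colour relations $C_a^{\struc A} = \{a\}$.

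For the upper bound, I would reduce $\pcountHom{\class{P}^*}$ to $\p\sh\reach$. Given $((\struc A, \struc B), k)$ with $\struc A \cong P_n^*$ and $n \leq k$, first fix an enumeration $a_1, \dots, a_n$ of $\dom(\struc A)$ along the path (identifiable in $\para\L$ by locating a degree-one vertex and traversing edge-by-edge). Then I would build a layered digraph $\G$ on vertex set $\dom(\struc B) \times \{1, \dots, n\}$ together with two fresh endpoints $s, t$: place an edge $((u, i), (v, i+1))$ exactly when $u \in C_{a_i}^{\struc B}$, $v \in C_{a_{i+1}}^{\struc B}$, and both $(u, v)$ and $(v, u)$ lie in $E^{\struc B}$; connect $s$ into each $(v, 1)$ with $v \in C_{a_1}^{\struc B}$ and each $(u, n)$ with $u \in C_{a_n}^{\struc B}$ into $t$. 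Then $s$-$t$-walks of length $n+1$ in $\G$ are in bijection with homomorphisms $\struc A \to \struc B$. Adjacency in $\G$ is decidable in $\para\L$ via the random-access relation queries, and the resulting walk-length parameter is bounded by $k+1$, so this yields a $\leqparalogpars$-reduction.

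For the lower bound, I would reduce in the opposite direction. From $((\G, s, t), k)$ with $\G = (V, E)$, set $\struc A \dfn P_{k+1}^*$ on vertices $1, \dots, k+1$ and build a layered target $\struc B$: the universe is $V \times \{0, \dots, k\}$; the singleton classes $C_1^{\struc B} \dfn \{(s, 0)\}$ and $C_{k+1}^{\struc B} \dfn \{(t, k)\}$ pin the two endpoints, while $C_i^{\struc B} \dfn V \times \{i - 1\}$ for $2 \leq i \leq k$ admits arbitrary intermediate choices; and for every $(u, v) \in E$ and every $0 \leq i < k$, put both $((u, i), (v, i+1))$ and $((v, i+1), (u, i))$ into $E^{\struc B}$. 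The singleton colours then force $h(i) \in V \times \{i-1\}$, and the edge relation forces the sequence of $V$-components to trace an $s$-$t$-walk of length $k$ in $\G$; conversely, each such walk lifts uniquely to a homomorphism. The construction is in $\para\L$ with $|\dom(\struc A)| = k+1$.

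The principal obstacle I anticipate is the undirected (bidirectional) nature of $P_n$: a homomorphism must respect \emph{both} $(h(a_i), h(a_{i+1}))$ and $(h(a_{i+1}), h(a_i))$ in $E^{\struc B}$. This compels filtering the layered edges in the upper-bound reduction by the intersection of the two orientations of $E^{\struc B}$, and explicitly symmetrising the layered-copy edges in the lower-bound reduction; either adjustment, if omitted, would corrupt the bijection with walks. Beyond this, the combinatorial core---singleton colours pinning each path vertex into a dedicated slice of the target, so that homomorphism counting degenerates into layered reachability---mirrors the normalised, coloured reduction style already employed elsewhere in the paper.
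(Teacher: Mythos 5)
Your proof is correct, and the overall strategy (both directions via $\p\sh\reach$, with the singleton colours of $\struc A$ pinning each path vertex to a layer of the target) matches the paper's; but your lower bound takes a genuinely different route. The paper first introduces an intermediate problem, a coloured variant $\p\sh\reachcolour$ of $\p\sh\reach$, proves it $\sh\parab\L$-hard by re-running the machine simulation of Theorem~\ref{thm:reach-parabetal-complete} and using the step counter of the normal form as the colouring, and only then reduces $\p\sh\reachcolour$ to $\pcountHom{\class P^*}$ with $\struc B$ living on the original vertex set $V$; the colouring is what renders the symmetrisation of $E$ harmless. You instead reduce \emph{directly} from $\p\sh\reach$ by building the layered target $V\times\{0,\dots,k\}$, letting the layer index play the role of the step counter. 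This is more modular and purely combinatorial --- it needs no further appeal to the machine model --- whereas the paper's detour yields $\p\sh\reachcolour$ as an additional complete problem. On the membership side you also deviate slightly: you form the product graph $\dom(\struc B)\times\{1,\dots,n\}$ rather than the paper's graph on $\dom(\struc B)\cup\{s,t\}$, which cleanly handles vertices lying in several colour classes, and you explicitly require \emph{both} orientations $(u,v),(v,u)\in E^{\struc B}$ before adding a layered edge. This last point is a genuine improvement in precision: since $E^{P_n}$ contains both $(i,i+1)$ and $(i+1,i)$, a homomorphism must preserve both orientations, a condition the paper's definition of $E'$ checks only in one direction. Both constructions are parsimonious and $\para\L$-computable, so your argument establishes the theorem.
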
	
\end{restatetheorem}
\begin{proof}
\begin{figure}[t]
	\centering
	\includegraphics[width=.7\linewidth]{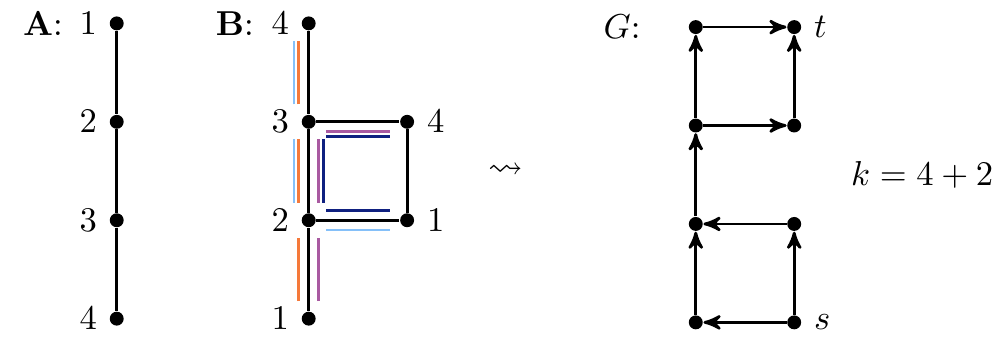}
	\caption{Example for the membership proof of Theorem~\ref{thm:phom(p^*)-pbetaL-complete}.}\label{fig:membership-phom-to-reachlog}
	
\end{figure}

	Regarding membership, we will show $\pcountHom{\class P^*}\leqparalogpars\p\sh\reach$, where the latter is in $\sh\paraBetaL$ by Theorem~\ref{thm:reach-parabetal-complete}.
	The proof idea is visualised in Figure~\ref{fig:membership-phom-to-reachlog} with an example. 
	Consider an arbitrary input $((\struc A,\struc B), k)$ to $\pcountHom{\class{P}^*}$ with \FO-structures $\struc A, \struc B$ and $k \in \mathbb{N}$. By definition of the problem we have $\struc A\in\class P^*$.
  Let $A = \dom(\struc A), B = \dom(\struc B)$.
  We assume that $\struc A$ and $\struc B$ have the same underlying vocabulary σ and $|A| \leq k$, since the remaining cases can easily be handled.
  Let $E$ be the edge relation symbol in σ.
	Furthermore, let $A=\{a_1,\dots,a_{n}\}$ with $n\in\N$ be the universe of $\struc A$ such that $(a_i,a_{i+1})\in E^{\struc A}$ and let $B$ be the universe of $\struc B$.
			
	Now, define the directed graph $\G=(B\cup\{s,t\},E')$ with $s,t\notin B$ and
	\begin{align*}
	E'\dfn& \{\,(s,x)\mid x\in C_{a_1}^{\struc B}\,\}\cup\\
  &\{\,(x,y)\in E^{\struc B}\mid \exists 1\leq i\leq n: x\in C_{a_i}^{\struc B}, y\in C_{a_{i+1}}^{\struc B} \,\}\cup\\
  &\{\,(x,t)\mid x\in C_{a_n}^{\struc B}\,\}.
	\end{align*}
  The reduction is given by the mapping $((\struc A, \struc B), k) \mapsto ((\G, s, t), |A|+2)$.
	
	We show correctness by giving a $1$-$1$-correspondence between homomorphisms from $\struc A$ to $\struc B$ and $s$-$t$-walks in $\G$ of length $|A|+2$.
	This correspondence is given by the following mapping.
	Let $h$ be a homomorphism from $\struc A$ to $\struc B$.
	This homomorphism is bijectively mapped to an $s$-$t$-path as follows.
	The homomorphism yields a sequence $(h(a_1),\dots,h(a_n))$ with $h(a_i)\in C_{a_i}^{\struc B}$, $(h(a_i),h(a_{i+1}))\in E^{\struc B}$ for $1\leq i<n$.
	From this, we obtain the $s$-$t$-path $(s,h(a_1),\dots,h(a_n),t)$ in $\G$. 
	
	Regarding injectivity, let $h\neq h'$ be two different homomorphisms from $\struc A$ to $\struc B$.
	As a result, there exists an $1\leq i< n$ such that $h(a_i)\neq h'(a_i)$.
	Consequently, $(s,h(a_1),\dots,h(a_n),t)\neq(s,h'(a_1),\dots,h'(a_n),t)$.
	
	Regarding surjectivity, let $(s,v_1,\dots,v_n,t)$ be an $s$-$t$-path in $\G$.
	Now, $(s,v_1)$, $(v_i,v_{i+1})$, $(v_n,t)\in E'$ with $1\leq i<n$.
	By construction of $E'$, this means $v_i\in C_{a_i}^{\struc B}$ for $1\leq i\leq n$ and $(v_i,v_{i+1})\in E^{\struc B}$ for $1\leq i<n$.
	Consequently, the function $h$ with $h(a_i) = v_i$ for all $i$ is a homomorphism from $\struc A$ to $\struc B$ and $h$ is a preimage of $(s, v_1, \dots, v_n, t)$ under the given mapping.
	
	The reduction can be computed by a $\para\Logspace$-machine as follows. 
	The vertex set of the graph is just the universe $B$ extended by two new vertices. 
	Store the sequence $a_1,\dots,a_n$.
	To compute $E'$, the machine identifies all vertices $x\in C_{a_1}^{\struc B}$ and prints for each the edge $(s,x)$; similarly for all edges $(x,t)$ with $x\in C_{a_n}^{\struc B}$.
	Then, for each edge $(x,y)\in E^{\struc B}$, we find and store the index $i$ such that $x\in C_{a_i}^{\struc B}$ and check whether $y\in C_{a_{i+1}}^{\struc B}$ is true.
	The machine only queries relations of $\struc B$ for individual tuples which can be achieved with binary counters.

	Regarding the lower bound, we first introduce a coloured variant $\p\sh\reachcolour$ of $\p\sh\reach$ as follows.
  Here, a colouring function $\ell\colon V\to \{1,\dots,m\}$ for some $m \in \mathbb{N}$ is given as an additional part of the input.
	\paracountingproblemdef{$\p\sh\reachcolour$}{Directed graph $\G=(V,E)$, $s,t\in V$, $\ell\colon V\to\{1,\dots,m\}$ with $m \in \mathbb{N}$, $\ell(s)=1$ and $\ell(t)=m$}{$k$}{number of $s$-$t$-paths $(s=v_1,\dots,v_k=t)$ with $\ell(v_i)=i$ for $1\leq i\leq k$, if $m = k$, 0 otherwise}
	The $\sh\parab\Logspace$-hardness with respect to $\leqparalogpars$ can be shown similarly to the one of $\p\sh\reach$ (as in the proof of Theorem~\ref{thm:reach-parabetal-complete}) as follows, see Lemma~\ref{lem:reach*} (the next result below) for details.
	The inputs $\G,s,t$ and $k$ can be constructed in the same way whereas the colouring function $\ell$ can be defined as follows.
	For each configuration $v\in V$, define $\ell(v)$ as the content of \tapeS, which is the value of the step counter.
	By definition of $\ell$, all $s$-$t$-paths respect the colouring.

	Now, we show $\p\sh\reachcolour\leqparalogpars\pcountHom{\class P^*}$.	
	Given a directed graph $\G=(V,E)$, $s,t\in V$, $k\in\N$, $\ell\colon V\to\{1,\dots,k\}$, we define two structures $\struc A,\struc B$ as follows.
	Let the universe of $\struc A$ be defined as $\{1,\dots,k\}$, $E^{\struc A}\dfn\{\,(i,i+1),(i+1,i)\mid 1\leq i<k\,\}$, and $C_{i}^{\struc A}\dfn\{i\}$ for $1\leq i\leq k$.
	Define the universe of $\struc B$ as $V$, $E^{\struc B}\dfn\{\,(u,v),(v,u)\mid (u,v)\in E\,\}$, and $C_i^{\struc B}\dfn \{\,u\mid\ell(u)=i\,\}$.
	
	Regarding correctness, the function that maps each $s$-$t$-path $\pi=(s=v_1,\dots,v_k=t)$ to the homomorphism $h_\pi$ with $h_\pi(i)=v_i$ for all $1\leq i\leq k$ is a bijective mapping between colour-respecting $s$-$t$-paths $\pi$ in $\G$ and homomorphisms $h_\pi$ from $\struc A$ to $\struc B$.
  For this, note that the colouring $\ell$ ensures that the direction of edges in the original graph is respected even for edges that are not part of actual computation paths on the current input. This is due to the fact that the step counter is added for any transition of the TM.
	The mapping is injective due to the fact that each node in the path contributes to the construction of the homomorphism.
	The mapping is surjective as for each homomorphism $h$ the path $(h(1),\dots,h(k))$ is a  colour-respecting $s$-$t$-path in $\G$ and a preimage of $h$.
	
	Regarding $\parab\Logspace$-computability, $\struc A$ relies only on the parameter.
	The universe of structure $\struc B$ is a copy, $E^{\struc B}$ is a symmetric closure of $E$, and the $C_i^{\struc B}$ require a $\log|V|$-counter to check for each $u\in V$ whether $\ell(u)=i$.
	This completes the proof.
\end{proof}

\begin{lemma}
\label{lem:reach*}
$\p\sh\reachcolour$ is $\sh\parab\L$-complete \wrt $\leqparalogpars$.
\end{lemma}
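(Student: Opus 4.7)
The plan is to piggy-back on Theorem~\ref{thm:reach-parabetal-complete} and to obtain the colouring essentially for free from the step-counter tape \tapeS supplied by the normal form of Lemma~\ref{normalise-lemma}. Both directions therefore reduce to a small extension of what was already done for $\p\sh\reach$.

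For the upper bound I would modify the path-guessing algorithm used in the proof of Theorem~\ref{thm:reach-parabetal-complete}. On input $(\G=(V,E),s,t,\ell,k)$, first verify $m=k$ and $\ell(s)=1$, rejecting otherwise. Then iterate $k-1$ times: from the current vertex $u$, guess a successor $v$ by spending $\lceil\log|V|\rceil$ fresh nondeterministic bits that name an element of $V$, verify that $(u,v)\in E$ and that $\ell(v)$ equals the next required colour, and continue with $v$. Accept iff the final vertex equals $t$. This consumes $O(k\cdot\log|V|)$ read-once nondeterministic bits plus $O(\log|V|+\log k)$ workspace, so the procedure witnesses membership in $\sh\parab\L$.

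For hardness I would reuse verbatim the configuration graph $\G(x,k)=(V(x,k),E(x,k))$ built in the proof of Theorem~\ref{thm:reach-parabetal-complete} together with its source $s(x,k)$ and sink $t=C_\acc$; only a colouring has to be added. Assuming $M$ is in the normal form of Lemma~\ref{normalise-lemma}, the step counter on \tapeS is part of every configuration and advances deterministically by $\log|x|$ along each edge of $\G(x,k)$. Defining $\ell(v)\dfn (\text{counter value at }v)/\log|x|+1$ therefore yields a well-defined function $\ell\colon V(x,k)\to\{1,\dots,f(k)+1\}$, and every $s(x,k)$-$t$-walk in $\G(x,k)$ automatically satisfies $\ell(v_i)=i$. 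Consequently the number of colour-respecting paths coincides with the number of $s(x,k)$-$t$-paths in $\G(x,k)$, which equals $\sh\acc_M(x,k)=F(x,k)$ as in the original proof. The colouring is computable in $\para\L$ (simulate the deterministic prefix leading to $v$ and read off \tapeS), and the new parameter $f(k)+1$ is a computable function of $k$, so $(x,k)\mapsto((\G(x,k),s(x,k),t,\ell),f(k)+1)$ is a $\leqparalogpars$-reduction.

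The main obstacle is bookkeeping rather than mathematical depth: I need the step counter to be simultaneously (i) a part of every configuration, so that distinct step levels truly yield disjoint vertices and $\ell$ is well defined, and (ii) normalised to advance by a fixed amount per edge, so that the induced levels are the only colours that occur on walks of the prescribed length. Both properties are exactly what Lemma~\ref{normalise-lemma} provides, which is why invoking the normal form is indispensable; without it a configuration could appear on several levels and the colouring would be either ill-defined or vacuously restrictive.
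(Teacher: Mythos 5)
Your proposal is correct and follows essentially the same route as the paper: both the membership algorithm (a colour-checking variant of the path-guessing procedure from Theorem~\ref{thm:reach-parabetal-complete}) and the hardness argument (reusing that theorem's configuration graph and reading the colouring off the step counter on tape~S from Lemma~\ref{normalise-lemma}) match the paper's sketch. Your explicit normalisation of the counter value to the range $\{1,\dots,f(k)+1\}$ is a detail the paper glosses over, but it is the intended reading.
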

\begin{proof}[Sketch]
The proof is very similar to that of Theorem~\ref{thm:reach-parabetal-complete}. Regarding hardness, we outline the main difference and the process of labeling the vertices. Let $F\in \sh\parab\L$ via a $k$-bounded $\log$-space NTM $M$ which has read-once access to nondeterministic bits. Consider an input $(x,k)$ and  let $\G =(V,E)$ be the configuration graph of $M$ as defined in the proof of Theorem~\ref{thm:reach-parabetal-complete}. For a nondeterministic configuration $C\in V$, let $\ell(C)$ be the number represented in tape S in the configuration $C$. Now it is not hard to see that
 $F(x,k) = \p\sh\reachcolour(\G,s,t,\ell)$, where $s$ is the first nondeterministic configuration reachable from the initial configuration of $M$ on $(x,k)$ and $t$ is the unique accepting configuration of $M$.  
\end{proof}

\begin{restatelemma}[lem:clow]
\begin{lemma}
For all $(0,1)$-matrices $A$ and $k \in \mathbb{N}$ it holds that
\[\pdet(A,k) = \sum_{W \in {\cal W}_{\G_A, k}}\sign{W}\cdot\wt{W}.\]
\end{lemma}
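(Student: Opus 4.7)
The proof plan follows the Mahajan--Vinay strategy of exhibiting a sign-reversing, weight-preserving involution that cancels all ``non-cycle-cover'' contributions, adapted here to the parameterised setting with the extra constraints imposed by the definition of a $k$-clow sequence (no self-loops, every clow of length $\geq 2$, total edge count exactly $k$).

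First I would partition $\mathcal{W}_{\mathcal{G}_A,k}$ into two disjoint sets: the \emph{good} $k$-clow sequences, in which the clows are pairwise vertex-disjoint and each clow is a simple cycle, and the \emph{bad} ones, in which some vertex appears in two positions (either twice inside one clow, or in two different clows). For good sequences, a $k$-clow sequence $W$ with $r'$ clows covers exactly $k$ distinct vertices by non-trivial cycles and fixes the remaining $n-k$, hence naturally corresponds to a unique $\pi\in S_{n,k}$ whose cycle decomposition has $r'$ cycles of length $\ge 2$ and $n-k$ fixed points. Under this correspondence $\mathrm{wt}(W)=\prod_{i:\pi(i)\neq i}a_{i,\pi(i)}$, and the sign check is $\mathrm{sgn}(W)=(-1)^{2n-k+r'}=(-1)^{k+r'}$ while $\mathrm{sgn}(\pi)=(-1)^{n-((n-k)+r')}=(-1)^{k-r'}=(-1)^{k+r'}$, so the contribution of the good sequences is exactly $\pdet(A,k)$.

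Next I would define an involution $\phi$ on the bad $k$-clow sequences. Given a bad sequence $W=(C_1,\dots,C_{r'})$, walk through the clows in order until the first vertex $v$ is encountered that either (i) has already been visited earlier in the current walk or (ii) equals the head of some later clow $C_j$. In case (i), split the current clow at $v$ into two clows, producing a sequence with $r'+1$ clows; in case (ii), splice $C_j$ into the current clow at $v$, producing a sequence with $r'-1$ clows. Standard bookkeeping shows that $\phi$ is an involution, preserves $\mathrm{wt}$, and flips the parity of $r'$, hence flips $\mathrm{sgn}$; thus the total contribution of bad sequences to $\sum_W \mathrm{sgn}(W)\,\mathrm{wt}(W)$ is zero.

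The main obstacle is verifying that $\phi$ actually maps $k$-clow sequences to $k$-clow sequences, i.e.\ that it preserves the parameterised constraints. Preservation of the total edge count~$k$ is immediate since $\phi$ only rearranges edges. Preservation of the ``no self-loops'' property is also immediate, as no new edges are introduced. The delicate point is the ``every clow has length $\geq 2$'' condition: when splitting at the first repeated vertex $v=v_j=v_m$, the inner clow has length $j-m$, and because $v$ is the \emph{first} repetition along the walk, the cases $j-m=1$ (which would require $v_{j-1}=v_j$, impossible by the no-self-loop condition) and $m<1$ are ruled out, so $j-m\geq 2$; similarly when splicing two clows together the resulting clow is strictly longer than either piece. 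One also has to check the minimality-of-head condition in the resulting clows, which follows because $v$ was chosen as the earliest conflict in the head-ordering of the sequence. Once these verifications are in place, summing the contributions of good sequences (which give $\pdet(A,k)$) with those of bad sequences (which cancel in pairs) yields the claim.
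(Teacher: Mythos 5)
Your proposal takes essentially the same route as the paper: the Mahajan--Vinay sign-reversing, weight-preserving involution, restricted to $k$-clow sequences, plus the observation that the involution preserves the parameterised constraints (total edge count $k$, no self-loops, every clow of length $\ge 2$) and that the surviving terms are exactly the cycle covers corresponding to permutations in $S_{n,k}$ with the matching sign $(-1)^{2n-k+r'}$. One detail to repair in your description of the involution: the merge trigger in case~(ii) should be that the current vertex $v$ \emph{belongs to} some later clow $C_j$, not that it equals the head of $C_j$ --- after a split, the vertex at which the inverse merge must fire is the split vertex $v$, which in general is not the minimum (hence not the head) of the detached simple cycle, so with your literal condition the map would fail to be self-inverse; with the corrected condition (and Mahajan--Vinay's convention of scanning for the last clow not in a clean suffix) the bookkeeping goes through exactly as you outline.
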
      	
\end{restatelemma}

 \begin{proof}
 The statement essentially follows from the arguments of Mahajan and Vinay~\cite[Theorem 1]{DBLP:journals/cjtcs/MahajanV97} and reader is referred to the original article for a full construction. 
 Their proof involves defining an involution $\eta$ (\ie, $\eta$ is a bijection whose inverse is itself) on the set of clow sequences such that $\eta$ is the identity on the set of all cycle covers (\ie, $\eta(C) = C$ for any cycle cover $C$) and for any clow sequence $W$ that is not a cycle cover, we have $\sign{W} = -\sign{\eta(W)}$.  

 We briefly describe the involution $\eta$ given by Mahajan and Vinay~\cite{DBLP:journals/cjtcs/MahajanV97}. 
 For a clow sequence ${W} =(W_1,\ldots,W_r)$, the clow sequence $\eta({ W})$ is obtained from $W$ as follows.
 Let $i \in \{1, \dots, r\}$ be the smallest index such that clows $W_{i+1}, \ldots, W_r$ are vertex  disjoint simple cycles.
 Traverse the clow $W_i$ starting from the head until we reach some vertex $v$ such that either $v$ is in some $W_j$ for $i+1 \le j\le r$ or $v$ completes a simple cycle within $W_i$.
 In the former case, we merge the simple cycle $W_j$ with the clow $W_i$ 
 and remove $W_j$ from the sequence to get the clow sequence $\eta(W)$.
 In the latter case, we split the clow $W_i$ at the vertex $v$ to get a new clow $W_i'$ and a simple cycle $C'$.
 Then $\eta(W)$ is the clow sequence obtained by replacing $W_i$ by the new clow $W_i'$ and inserting the cycle $C'$ into the resulting clow sequence.
 The remaining clows in $W$ are kept untouched.
  
 We note that the involution $\eta$ described above does not require that the clow sequence ${ W}$ has exactly $n$ edges.
 In particular, $\eta(W)$ is well defined even when $W \in {\cal W}_{\G,k}$ for some graph $\G$. 
 Furthermore, for $W \in {\cal W}_{\G,k}$ we have $\eta(W) \in {\cal W}_{\G,k}$ and $\sign{W} = -\sign{\eta(W)}$, since $\eta(W)$ has the same number of edges as $W$ and either has one clow more than $W$ or one clow less than $W$.
 Combining with the argument that $\eta$ is indeed an involution~\cite[Theorem 1]{DBLP:journals/cjtcs/MahajanV97}, we conclude:
    \[
    \pdet(A,k) = \sum_{W \in {\cal W}_{\G_A, k}}\sign{W}\cdot\wt{W}.\qedhere
    \]
\end{proof} 

\begin{restatetheorem}[thm:pdet-ub]
\begin{theorem}
The problem $\pdet$ for $(0,1)$-matrices can be written as a difference of two functions in $\sh\parabt\L$, and is $\sh\parabt\L$-hard \wrt $\leqparalogmet$.
\end{theorem}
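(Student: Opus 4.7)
The plan is to exploit Lemma~\ref{lem:clow}, which gives $\pdet(A,k) = \sum_{W \in \class{W}_{\G_A,k}} \sign{W}\cdot\wt{W}$. For a $(0,1)$-matrix $A$ the weight $\wt{W}$ equals $1$ if every edge of $W$ is actually present in $\G_A$ and $0$ otherwise, while $\sign{W} = (-1)^{2n-k+r'} = (-1)^{k+r'}$, where $r'$ is the number of clows in $W$. Partitioning the valid $k$-clow sequences according to the parity of $k+r'$ yields functions $F_+, F_-$ with $\pdet(A,k) = F_+(A,k) - F_-(A,k)$, so it suffices to show $F_+, F_- \in \sh\parabt\L$.

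For the upper bound I would design a nondeterministic $\para\L$-machine that first performs a short deterministic preamble (reading $k$ and initialising counters) and then, in one nondeterministic block, guesses a $k$-clow sequence vertex by vertex. Concretely, the machine repeatedly guesses the next vertex of the current clow ($\log n$ nondeterministic bits per guess) and, whenever the current vertex coincides with the current head, deterministically closes that clow and starts the next one by guessing a fresh head. All structural checks---each non-closing successor is strictly greater than the current head, every edge $(u,v)$ satisfies $a_{u,v}=1$ and $u\ne v$, the heads of successive clows are strictly increasing, every clow has length at least two, and exactly $k$ edges are used in total---are local and can be verified on the fly using $O(\log n + f(k))$ workspace to keep the current head, the previous head, the current vertex, the current clow length, the total edge count, and the parity of $k+r'$. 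Two sibling machines $M_+$ and $M_-$ accept only when the tracked parity matches the desired sign, witnessing $F_+, F_- \in \sh\parabt\L$: the total number of nondeterministic bits is $O(k\log n)$ (hence $k$-bounded), the bits are consumed strictly in order (read-once), and every step after the first nondeterministic one fits within $O(k\log n) = O(g(k)\log n)$ time, so the machines are tail-nondeterministic.

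For $\sh\parabt\L$-hardness I would reduce from $\p\sh\reach$, which is $\sh\parabt\L$-complete by Theorem~\ref{thm:reach-parabetataill-complete} and whose hardness proof already produces DAGs. Given an input $(\G=(V,E),s,t,k)$ with $\G$ acyclic, add a fresh vertex $v_0$ declared smaller than every vertex of $V$ together with the edges $(v_0,s)$ and $(t,v_0)$, take $A$ to be the adjacency matrix of the resulting graph $\G'$, and set $k' = k+2$. Since $\G$ is acyclic, every clow of $\G'$ must traverse $v_0$ and thus have $v_0$ as its head, so a $k'$-clow sequence in $\G'$ consists of a single clow of the form $v_0 \to s \to \cdots \to t \to v_0$ whose middle portion is an $s$-$t$-walk of length $k$ in $\G$. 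With $r'=1$ always, Lemma~\ref{lem:clow} gives $\pdet(A,k') = (-1)^{k'+1}\cdot\p\sh\reach(\G,s,t,k)$, and multiplying the oracle answer by $(-1)^{k'+1}$ in the post-processing yields the desired $\leqparalogmet$-reduction; the construction of $A$ from $\G$ is clearly in $\para\L$, and $k'$ is a computable function of $k$.

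The main obstacle is the simultaneous juggling of read-once access, tail-nondeterminism, and sign tracking in the upper bound: the machine cannot revisit any nondeterministic bit and must finish all verification within $O(g(k)\log n)$ steps of the first guess, so every structural constraint of a $k$-clow sequence---most delicately that the head is never revisited before the clow legitimately closes and that degenerate single-edge clows are excluded---must be detected incrementally with only the bounded state listed above, while also maintaining the parity of $k+r'$ so that $F_+$ and $F_-$ separate cleanly.
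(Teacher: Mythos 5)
Your proposal is correct and follows essentially the same route as the paper: the upper bound is the Mahajan--Vinay clow-sequence argument via Lemma~\ref{lem:clow}, with two tail-nondeterministic, read-once, $k$-bounded machines that guess a $k$-clow sequence on the fly and separate by the parity of $k+r'$, and the hardness is a $\leqparalogmet$-reduction from $\p\sh\reach$ on DAGs obtained by closing $s$-$t$-walks into cycles. The only cosmetic deviations are that you signal clow closure by guessing the head again (rather than a dedicated choice bit) and that your reduction routes the back edge through a fresh globally minimal vertex $v_0$ instead of adding $(t,s)$ directly, which if anything makes the uniqueness of the head slightly more transparent.
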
 	
\end{restatetheorem}
\begin{proof}
We prove this using a parameterised version of the algorithm given by Mahajan and Vinay~\cite[Thm.~2]{DBLP:journals/cjtcs/MahajanV97}. 
Let $A$ be the adjacency matrix of a directed graph $\G$.
We construct two $k$-bounded nondeterministic $\para\L$-machines $M_1$ and $M_2$ that have read-once access to their nondeterministic bits such that $\pdet(A,k) = \sh\acc_{M_1}(A,k) -\sh\acc_{M_2}(A,k).$

Both $M_1$ and $M_2$ behave exactly the same except for the last step, where their answer is flipped. 
More precisely, both machines nondeterministically guess a $k$-clow sequence, and $M_1$ accepts if  the guessed clow sequence has a positive sign while $M_2$ accepts if the guessed clow sequence has a negative sign. 
Then, it follows from Lemma~\ref{lem:clow} that $\pdet(A,k) =\sh\acc_{M_1}(A,k) -\sh\acc_{M_2}(A,k).$

Now, we describe the process of guessing nondeterministic bits. 
The process is the same for both $M_1$ and $M_2$. 
We need the following variables throughout the process: {\tt curr-head}, {\tt curr-vertex}, {\tt parity}, {\tt count}, {\tt ccount}.
The variable {\tt curr-head} contains the head of the clow currently being constructed, while {\tt parity} holds the parity of the partial clow constructed so far and is initialised to $(-1)^{2n-k}$.
Note that this initiation of {\tt parity} is because we are going to compute sign of a clow sequence $W$ as $(-1)^{2n-k+r'}$ where $r'$ is the number of clows in $W$ with at least two edges.
The variable {\tt count} keeps track of the total number of edges used in the partial clow sequence  constructed so far and {\tt ccount} keeps track of the number of edges in the current $k$-clow. 
The machines $M_1$ and $M_2$ are described in Algorithm~\ref{alg:det}.

Note that the guess $a=0$ in step 5 leads to expansion of the current clow with addition of a newly guessed edge.
The guess $a=1$ in step 5 leads to completion of the current clow by chosing the back edge to the head (step 10, only existence of such edge needs to be checked), and guessing head for a new clow.
Since the parity of the number of clows changes for the case when $a=1$ we flip the parity (step 14).
Note that the algorithm reaches step 17 if only if the nondeterministic choices correspond to a clow sequence with exactly $k$ edges.
Hence, for $b \in \{1,2\}$ the machine $M_b$ accepts on all nondeterministic paths where the guessed $k$-clow sequence has parity $(-1)^{b+1}$ which completes the correctness proof. 

Since both $M_1$ and $M_2$ guess exactly $k$ vertices, they are $k$-bounded. 
Also, only \texttt{curr-vertex} and \texttt{curr-head} need to be stored at any point of time, consequently the machines need only read-once access to nondeterministic bits.
Finally, the machines use $O(\log|A| + \log k)$ space and are tail-deterministic, taking only $O(k\cdot \log|A|)$ steps after the first nondeterministic step.
\begin{algorithm}[t]
 	\DontPrintSemicolon
 	\SetKwInOut{Input}{Input}
 	\caption{Machine $M_b$, $b\in \{1,2\}$.}\label{alg:det}
 \Input{$G = (V,E)$ as the adjacency matrix $A$. }
 Guess a vertex $v \in \{1,\ldots, n\}$\;
  $\texttt{curr-head}\gets v, \texttt{curr-vertex}\gets v$ \;
 $\texttt{parity}\gets (-1)^{2n-k}$, $\texttt{count}\gets 0$, $\texttt{ccount}\gets0$. \;
 \While{\normalfont$\texttt{count}\le k-1 $}{
	 Guess $a \in \{0,1\}$ \;
	 \eIf{$a=0$}{
		 Guess $v\in \{1,\ldots, n\}$  such that $(\texttt{curr-vertex},v) \in E$\;
		 $\texttt{curr-vertex}\gets v$, $\texttt{count}\gets \texttt{count}+1$, $\texttt{ccount}\gets \texttt{ccount}+1$\;
	 }{
		\lIf{\normalfont$\texttt{ccount}<1$ \textbf{or} $(\texttt{curr-vertex}, \texttt{curr-head}) \notin E$}{\textbf{reject}} 
		$\texttt{count}\gets \texttt{count}+1$\; 
		Guess $v\in\{1,\ldots, n\}$ such that $v>\texttt{curr-head}$\;
		$\texttt{curr-head}\gets v$ and $\texttt{curr-vertex}\gets v$\;
		$\texttt{parity}\gets -1 \cdot \texttt{parity}$\;
		$\texttt{ccount}\gets 0$
	 }
 }
 \lIf{\normalfont$(\texttt{curr-vertex}, \texttt{curr-head}) \notin E$}{\textbf{reject}}
 \leIf{\normalfont$(-1)^{b+1} = \texttt{parity}$}{\textbf{accept}}{\textbf{reject}}
  \end{algorithm}

For hardness we give a reduction from $\p\sh\reach$. 
Let $\G,s,t$ be an instance of $\p\sh\reach$, where $\G$ is a DAG. 
Let $\G'$ be the graph obtained by adding the ``back edge'' $(t,s)$ to $\G$. 
Note that the set of all $s$-$t$ paths in $\G$ is in bijective correspondence with the set of cycles in $\G'$. 
Let $A'$ be the adjacency matrix of $\G'$. 
Then $\pdet(A',k) = (-1)^{2n-k+1}\cdot R$ where $R$ is the number of $s$-$t$ paths in $\G$.  
As a result, $R$ can be retrieved from $\pdet(A',k)$ in deterministic logspace.
This completes the proof. 
\end{proof}

\section{Branching Programs}
\subsection*{Preliminaries}
A BP $P$ is a layered directed acyclic graph (DAG) with a \emph{source node} $s$ and a \emph{sink node} $t$ (see, e.g., textbook by Vollmer~\cite{DBLP:books/daglib/0097931}). 
The vertices of the BP are labelled by input variables in $\{x_1,\ldots, x_n\}$ and edges are labelled by $0$ or $1$. 
An input $a_1\cdots a_n\in \{0,1\}^n$ is \emph{accepted by} $P$ if there is a directed $s$ to $t$ path (short: $s$-$t$-path) $\rho$ that is consistent with $a$, that is, for each edge $(u,v)$ in $\rho$,
${\lab} (u,v) = a_i$ where ${ \lab}(u) = x_i$. 
The BP $P$ is said to be \emph{deterministic} (DBP for short) if every vertex has either no outgoing edge or exactly two outgoing edges, one labelled by $0$ and the other by $1$. 
The \emph{size} of the program $P$ is the number of vertices in it, the \emph{length} is the length of a longest path starting from $s$. 
If $P$ has length $\ell$, then we assume that the vertices of $P$ are partitioned into layers $L_0 \cup L_1\cup \ldots \cup L_\ell$ where $L_0$ contains the source and $L_\ell$ contains the sink. 
By layer $i$, we mean the set of vertices in $L_i$.

A \emph{family of branching programs} is a family $\mathcal{P} = (P_n)_{n \in \mathbb{N}}$ containing one BP $P_n$ for each input length $n \in \mathbb{N}$.
In this article, every family of BPs is implicitly logspace-uniform (in the context of classical complexity) or $\para\L$-uniform (in the context of parameterised complexity).
Details about notions of uniformity in the classical context can be found in the textbook of Vollmer~\cite{DBLP:books/daglib/0097931}.
We need a straightforward adaptation of these notions to the parameterised setting.

We let $\bp$ be the set of all languages accepted by polynomial-size families of BPs, and $\DBP$ be the set of all languages accepted by polynomial-size families of DBPs.
%
Now, we will incorporate the notion of parameters to BPs. 
For that reason, the families of BPs are of the form $\mathcal{P}\dfn(P_{n,m})_{n,m\ge0}$.
The language \emph{accepted by $\mathcal{P}$} is the set of all inputs $(x,k)\in\{0,1\}^*\times\N$ such that $P_{|x|,|k|}$ accepts $(x,k)$.
Let $\calc$ be a complexity class based on families of BPs of size $s(n)$.
A \emph{family of $\para\calc$-BPs} is a family of BPs $(P_{n,m})_{n,m\ge0}$ of size $s(n+f(m))$.

Note that, while the operator $\para$ is defined for arbitrary complexity classes, the operators $\paraw$, and $\parab$ are only defined with respect to TM based classes. 
These operators can be generalised to be also applicable to complexity classes based on BPs by extending the notion of $k$-bounded nondeterminism to this context, though.
A \emph{family of branching programs with nondeterministic input} has additional input bits that are marked as nondeterministic, similar to a choice tape for TMs.
Let $\mathcal{ P}\dfn (P_{n,m})_{n,m\ge0}$ be such a family and $\ell(n,m)$ be the number of nondeterministic input bits in $P_{n,m}$.
Then \emph{$\mathcal{P}$ accepts an input $(x,k)\in\{0,1\}^*\times\N$}, if there is a $y\in\{0,1\}^{\ell(|x|,|k|)}$ such that $P_{|x|,|k|}$ accepts $((x,k),y)$. 
Also, denote by $\sh\acc_P(x,k)$ the number of $y\in\{0,1\}^{\ell(|x|,|k|)}$ such that $\mathcal{P}$ accepts $((x,k),y)$.
While it might seem counter-intuitive at first, we also use the notion of \emph{deterministic} families of BPs with nondeterministic input, meaning that the program is deterministic if the nondeterministic input is fixed. 
Furthermore, $\mathcal{P}$ is \emph{$k$-bounded} if there exists a computable function $f$ \stfa $n,m\ge0$, the number $\ell(n,m)\le f(m)\cdot \log n$.


We will now introduce a notion of read-once access to nondeterministic bits for the above classes.
This notion is inspired from the notion of read-once certificates considered in~\cite{DBLP:books/daglib/0023084} and first defined in~\cite{DBLP:conf/fct/MahajanR09}.
Let $P(x,y)$ be a BP with two inputs $x=x_1\cdots x_n$ and $y=y_1\cdots y_m$. Here, $y$ is the nondeterministic input.
We say that $P$ is \emph{read-once certified} if there are layers $i_0< i_1 < i_2< \dots < i_m$ in the underlying graph of $P$ such that the variable $y_j$ occurs as a label only in layer numbers $\eta$ such that $i_{j-1}\le \eta\le i_j$.

\begin{definition}
Let $\calc$ be the class of languages accepted by families of BPs of size $O(s(n))$ for some $s\colon\N\to\N$.
Then, $\paraw\calc$ is the class of all parameterised languages computable by $k$-bounded families of $\para\calc$-BPs.
\end{definition}
\begin{definition}
Let $\calc$ be the class of languages accepted by families of BPs of size $O(s(n))$ for some $s\colon\N\to\N$.
Then, $\parab\calc$ is the class of all parameterised languages computable by $k$-bounded families of $\para\calc$-BPs that are read-once certified.
\end{definition}
Analogously, we also define the tail-nondeterministic variants of the operators $\paraw$ and $\parab$, called $\parawt$ and $\parabt$, respectively.

\subsection*{Parameterised Counting and Branching Programs}
Using the notion of bounded nondeterminism for the case of BPs, we define counting classes based on $\DBP$, which can be generalised to further classes of BPs.
\begin{definition} Let $F$ be a parameterised function.
Then, $F\in\sh\paraw\DBP$ if there exists a $k$-bounded family $\mathcal{P}=(P_{n,m})_{n,m\ge0}$ of DBPs of size $O(f(k)\cdot p(|x|))$ \stfa $(x,k)$: $F(x,k) = \sh\acc_{\mathcal{P}}(x,k)$.
We say that $F$ is also in 
\begin{itemize}
	\item $\sh\parab\DBP$ if there is such a $\mathcal{P}$ that is read-once certified,
	\item $\sh\parawt\DBP$ if there is such a $\mathcal{P}$ that is tail-nondeterministic, and
	\item $\sh\parabt\DBP$ if there is such a $\mathcal{P}$ that is read-once certified and tail-nondeterministic.
\end{itemize}
\end{definition}

$\NL$ coincides with the class of all languages accepted by logspace-uniform families of BPs of polynomial size~\cite{DBLP:conf/mfcs/Meinel86}. 
The desired family of BPs is obtained from the configuration graph of a nondeterministic logspace-bounded machine  preserving the number of accepting paths, and hence the result generalises to the corresponding counting classes.
We extend these relationships to the parameterised setting. 
We need the following construction based on configuration graphs.
\begin{lemma}
\label{lem:layering}
Let $M$ be a $k$-bounded nondeterministic \para\L-machine. Then  for any input $(x,k)$ there is a layered DAG  $\G_{M,n,m}$ with two special vertices $s$ and $t$ such that $\sh\acc_M(x,k)$ is the number of paths from $s$ to $t$ in $\G_{M,n,m}$. Given $M$ and $(x,k)$, the graph $\G_{M,n,m}$ can be computed in $\para\L$. 
\end{lemma}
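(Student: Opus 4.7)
The plan is to construct $\G_{M,n,m}$ as a time-layered version of the configuration graph of $M$ on input $(x,k)$, where $n=|x|$, $m=|k|$. First I would invoke Lemma~\ref{normalise-lemma} to replace $M$ by an equivalent $k$-bounded $\para\L$-machine $M'$ with $\#\acc_{M}(x,k)=\#\acc_{M'}(x,k)$ that has a unique accepting configuration $C_\acc$ and accesses exactly $g(k)\cdot\log|x|$ nondeterministic bits on every computation path. Since $M'$ is $O(\log|x|+h(k))$ space-bounded, the total number of configurations is $T(x,k)=|x|^{O(1)}\cdot 2^{O(h(k))}$, which also bounds the runtime of $M'$ on $(x,k)$; in particular both are fpt in the combined measure.

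Next I would define the vertices of $\G_{M,n,m}$ as pairs $(C,i)$, where $C$ ranges over configurations of $M'$ on $(x,k)$ and $0\le i\le T(x,k)$ is a time-stamp; the layer of a vertex is its second coordinate. I place an edge from $(C,i)$ to $(C',i+1)$ exactly when $C'$ is the (possibly nondeterministic) successor of $C$ in $M'$. To make $C_\acc$ a genuine sink that every accepting computation reaches in exactly $T(x,k)$ steps, I add a self-loop $(C_\acc,i)\to(C_\acc,i+1)$ for all $i$; symmetrically, no outgoing edges are added at rejecting halting configurations. I set $s\dfn(C_0,0)$ for the unique initial configuration and $t\dfn(C_\acc,T(x,k))$. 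Layering by the time coordinate makes the graph a DAG by construction.

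For correctness, I observe that every $s$-to-$t$ path in $\G_{M,n,m}$ projects (by forgetting the time coordinate) onto a sequence of configurations starting at $C_0$ and ending at $C_\acc$, where consecutive pairs are one-step transitions of $M'$ and the path is padded by self-loops at $C_\acc$ until the final layer. Conversely, each accepting computation of $M'$ on $(x,k)$ lifts uniquely to such a path. Hence the $s$-$t$ paths in $\G_{M,n,m}$ are in bijection with accepting computations, so their number equals $\#\acc_{M'}(x,k)=\#\acc_M(x,k)$. Finally, $\G_{M,n,m}$ is $\para\L$-computable: given two candidate vertices $(C,i)$ and $(C',i+1)$, deciding whether the corresponding edge exists amounts to storing the two configurations together with the layer index and simulating a single step of $M'$, all of which fits in $O(\log|x|+h(k))$ space.

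The only subtlety that requires some care is making the bijection between accepting computations and $s$-$t$ paths \emph{exact}: without normalisation, different accepting computations could halt at different times or at different accepting configurations, and the deterministic fragments between nondeterministic steps might produce spurious length mismatches. Lemma~\ref{normalise-lemma} eliminates all of these issues in one stroke (unique $C_\acc$, fixed number of nondeterministic bits), and the self-loop padding at $C_\acc$ handles the residual freedom in exact computation length, so this is the technical hinge rather than a genuine obstacle.
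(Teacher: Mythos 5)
Your proof is correct and follows essentially the same route as the paper: both layer the configuration graph by a step counter (the paper realises the counter as an extra tape of a modified machine $\widehat{M}$, you simply index vertices by a time stamp), and both reduce correctness to a bijection between accepting computations and consistent source-to-sink paths. Your explicit use of Lemma~\ref{normalise-lemma} plus padding edges at $(C_\acc,i)\to(C_\acc,i+1)$ to reach a fixed final layer is a clean way to pin down the single terminal vertex, a detail the paper leaves implicit; the only substantive difference is that the paper builds the \emph{generic} (input-variable-labelled) configuration graph depending only on $n,m$, which is what the later branching-program theorems actually consume, whereas your graph is specific to $(x,k)$ — fine for the lemma as stated.
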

\begin{proof}
In general, the configuration graph of a nondeterministic space-bounded machine is not layered and layering an arbitrary DAG might require reachability which is not possible in logspace-uniform fashion. 
However, we can modify the machine by adding a step counter, resulting in a layered configuration graph. We explain the details of this modification below.

 Let $M$ be a $k$-bounded, $s$ space-bounded TM.
 \WLOG, $M$ has a unique accepting configuration. 
 We now construct the machine $\widehat{M}$ modifying $M$ as follows: 
 In addition to the tapes of $M$, $\widehat{M}$ has an extra tape that keeps a binary step counter for $M$. 
 The machine $\widehat{M}$ simulates $M$ step by step and after each step of $M$, it increments the counter by one.
 By construction, $\widehat{M}$ behaves exactly as $M$ apart from the additional counter.
 Note that at most $O(s)$ bits are required to store the counter because the runtime of $M$ is bounded by $2^{O(s)}$.
 For a given input $x$, a configuration of $\widehat{M}$ on $x$ can be represented as $(\gamma,i)$ where $\gamma$ is a configuration of $M$ on the input $x$, and $i$ is the content of the additional tape of $\widehat{M}$ that contains the binary counter. 
 To simplify things, we assume that $i$ is a number in $\{0, \ldots, 2^{O(s)}\}$, ignoring the head position information on the tape used for storing the counter. The normalised configuration graph of $\widehat{M}$ is a graph on the set of all possible configurations of $\widehat{M}$ on any input of length $n$, with the edge relations as follows. 
 There is an edge from configuration $(\gamma ,i)$ to $(\gamma', i')$ if and only if $\gamma'$ is reachable from $\gamma$ in a single step of $M$ and $i' = i+1$. 
 That is, the steps required to update the counter are merged into the preceding step of $M$ in the normalised configuration graph. 

 The generic configuration graph $\G_{M,n,m}$ of $\widehat{M}$ on inputs $(x,k)$ with $x=x_1\cdots x_n$ and $m=|k|$ is defined as follows: The set of vertices is the set of all configurations $(\gamma,i)$ of $\widehat{M}$ on any input of length $n,m$ without information on the content of the input tape. Additionally, vertices are labelled with variables $\{x_1, \dots, x_n, k_1, \dots, k_m\}$ and edges are labelled from $\{0,1\}$ in the same way as for BPs: Vertex labels state which input bit is read in the next step of the computation while edge labels state for which value of the respective input bit that edge is used.
 It is not hard to see that the underlying graph of any generic configuration graph for input lengths $n,m\ge 0$ is a layered DAG.
 Moreover, it can be seen that for any $(x,k)\in \{0,1\}^*\times \mathbb{N}$, $M$ accepts $(x,k)$ if and only if there is a directed path from the initial configuration to the accepting configuration in $G_{M,|x|,|k|}$ such that the edge labels along the path are consistent with the bits of $x$ and $k$. 
 Finally, we may note that there is a $\para\L$-machine $N$ that, given $(C,i)$ and $(C',i')$, decides if $G_M$ has an edge from $(C,i)$ to $(C',i')$. 
 Accordingly, the construction of $\G_{M,n,m}$ is $\para\L$-uniform. 
\end{proof}

\begin{theorem}
\label{lem:equivalence-BP-SC}
For any $o\in\{\W{},\W1,\beta,\beta\mbox{-}\complClFont{tail}\}$, we have that $\sh\complClFont{para}_o\mbox-\DBP = \sh\complClFont{para}_o\mbox-\L$.
\end{theorem}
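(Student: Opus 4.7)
The plan is to establish both inclusions simultaneously for all four variants of the parameter $o$ by a common translation between layered configuration graphs and parameterised DBPs, while carefully tracking read-once certification and tail-nondeterminism.

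\emph{From TM to DBP.} Given a $k$-bounded $\para\L$-machine $M$ in the normal form of Lemma~\ref{normalise-lemma}, apply Lemma~\ref{lem:layering} to obtain the generic layered configuration graph $\G_{M,n,m}$ with a unique sink. I would reinterpret $\G_{M,n,m}$ as a DBP with nondeterministic input by labelling each non-sink vertex with the variable whose value determines the next transition: an input bit $x_i$ for input-reading steps, a parameter bit $k_j$ for parameter reads, and a fresh nondeterministic variable $y_\tau$ for the $\tau$-th nondeterministic step (which is well defined because tape \tapeS counts nondeterministic steps). Edges are labelled $0$ or $1$ by the corresponding bit value. The family inherits $\para\L$-uniformity from that of $\G_{M,n,m}$ and is $k$-bounded since $M$ reads at most $f(|k|)\log|x|$ nondeterministic bits. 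Read-once access by $M$ implies that each $y_\tau$ labels vertices in a single layer, so the ordering of these layers delivers the required read-once certification. Tail-nondeterminism confines all $y_\tau$-labelled vertices to the final $g(|k|)\log|x|$ layers. In each case $\sh\acc_M(x,k)$ equals the number of source-to-sink paths consistent with the fixed input, which is exactly the accepting count of the constructed BP.

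\emph{From DBP to TM.} Given a $\para\L$-uniform $k$-bounded family $\mathcal{P}=(P_{n,m})$ of DBPs with nondeterministic input, the para-L-machine $M$ stores the index of the current vertex of $P_{|x|,|k|}$ using $O(\log(|x|+f(|k|)))$ bits. At each step $M$ invokes the uniformity machine to determine the label and the two successors of the current vertex; if the label is an input or parameter bit, $M$ queries the corresponding position, and if it is a nondeterministic variable $y_\ell$, $M$ consults the choice tape. $M$ then moves to the successor selected by the obtained bit and accepts upon reaching the sink. By construction $\sh\acc_M(x,k)=\sh\acc_\mathcal{P}(x,k)$, and $M$ is $k$-bounded because the number of nondeterministic input bits is at most $f(|k|)\log|x|$. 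Read-once certification allows $M$ to traverse the layers in order, reading each $y_\ell$ at most once from the choice tape, storing its value in a constant-size buffer while inside the permitted layer range and discarding it once that range has been passed. Tail-nondeterminism of $\mathcal{P}$ translates to tail-nondeterminism of $M$, since the nondeterministic labels occur only in the tail segment of the layered graph.

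The principal technical obstacle is reconciling the definition of read-once certification, in which a variable may label several vertices within its layer range, with read-once access by the TM, in which each choice bit may be consulted only once. This is resolved by the constant-size buffer described above, which fits into logarithmic space and is compatible with tail-nondeterminism. A secondary point is that $\para\L$-uniformity must be preserved in both directions: this follows by direct simulation of the uniformity machine on the BP side and from the $\para\L$-constructibility of $\G_{M,n,m}$ guaranteed by Lemma~\ref{lem:layering} on the TM side.
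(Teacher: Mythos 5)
Your overall architecture matches the paper's: simulate the BP by a $\para\L$-machine in one direction, and use the layered generic configuration graph of Lemma~\ref{lem:layering} as the DBP in the other, checking that $k$-boundedness, read-once certification and tail-nondeterminism carry over. The BP-to-TM direction is fine (your constant-size buffer for re-reads of a certified variable is if anything slightly more careful than needed). However, there is a genuine gap in the TM-to-DBP direction for the read-once cases $o\in\{\beta,\beta\mbox{-}\complClFont{tail}\}$.

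You assert that ``read-once access by $M$ implies that each $y_\tau$ labels vertices in a single layer, so the ordering of these layers delivers the required read-once certification.'' This is not true. The graph $\G_{M,n,m}$ is layered by the \emph{total} step counter, not by the nondeterministic step counter, and the deterministic segments between consecutive nondeterministic steps can have different lengths on different computation paths. Hence one path may read its $\tau$-th nondeterministic bit at layer $5$ and its $(\tau+1)$-st at layer $6$, while another path reads its $\tau$-th bit only at layer $7$: the vertices labelled $y_\tau$ are spread over several layers and are interleaved with those labelled $y_{\tau+1}$, so no choice of thresholds $i_0<i_1<\dots<i_m$ in the definition of read-once certification can work for the graph as constructed. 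The paper confronts exactly this issue and resolves it by a staggering construction: for each $\tau$ it computes the last layer $\tau_\alpha$ at which $y_\tau$ is read on any path, and pads the program with dummy nodes so that no path reads $y_{\tau+1}$ before layer $\tau_\alpha$; the values $\tau_\alpha$ and the padding are computable in $\para\L$ from the uniformity machine, and the padding does not change the number of accepting paths. Without this step (or an equivalent re-synchronisation of the nondeterministic reads), your proof establishes the equality only for $o\in\{\W{},\W1\}$.
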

\begin{proof}
	We will first show that $\sh\complClFont{para}_o\mbox{-}\DBP \subseteq \sh\complClFont{para}_o\mbox{-}\L$ for $o$ as above, giving the detailed argument for $o = W$ and outlining the changes required (if any) for the remaining cases.
	
	Let $F \in \sh\paraw\DBP$ via the $\para\L$-uniform family $\mathcal{P}\dfn (P_{n,m})_{n,m\ge0}$ of BPs of size $g(m)\cdot n^c$ for $c\in\N$.
	Let $M$ be a $k$-bounded NTM $M$ which on input $(x,k)$ evaluates $P_{|x|,k}$ on that input, such that every $s$-$t$-path in $P_{|x|,k}$ corresponds to a unique accepting path of $M$.
	Algorithm~\ref{simulation-algorithm} describes the behaviour of a machine with these properties. Since $\mathcal{P}$
	is $\para\L$-uniform, there is a $\para\L$-machine $M'$ that on input $(1^n,k)$ outputs $P_{n,k}$ as a graph. Since $M$ cannot store $P_{|x|,k}$, as usual with composition of space-bounded algorithms, whenever $M$ needs to access an edge of $P_{|x|,k}$, $M$ runs the machine $M'$ from the start until the required edge is output or $M'$ halts, in which case the required edge is treated as absent in $P_{|x|,k}$. 
	\begin{algorithm}[t]
	\DontPrintSemicolon
	\SetKwInOut{Input}{Input}
	\caption{Algorithm that evaluates the BP in the proof of Theorem~\ref{lem:equivalence-BP-SC}.}\label{simulation-algorithm}
	
	\Input{$x=x_1\cdots x_n\in\{0,1\}^*$, $k\in\N$}
	Let $v_{\textit{cur}}$ be the starting vertex of $P_{|x|,k}$ and $t\leftarrow 0$\;
	\Repeat{$v_{\textit{cur}}$ is not the accepting node of $P_{|x|,k}$ and $t \leq g(k)\cdot |x|^c$}{
		\uIf{$v_{\textit{cur}}$ is a deterministic node in $P_{|x|,k}$}{
			move to next node in $P_{|x|,k}$ based on value of label $\ell\leftarrow x_i$\;
		}
		\ElseIf{$v_{\textit{cur}}$ is a nondeterministic node in $P_{|x|,k}$ and $label(v_{\textit{cur}}) = y_i$}{
			Guess a value $y_i=b$, nondeterministically and set $\ell\leftarrow b$\;
		}
		\lIf{edge ($v_{\textit{cur}}\xrightarrow{\ell}v_{\textit{next}}$) is in $P_{|x|,k}$}{
			$v_{\textit{cur}}\leftarrow v_{\textit{next}}$ and $t\leftarrow t+1$ \textbf{\ else\ \ reject}
		}
	}
	\lIf{$v_{curr}$ is the accepting node of $P_{|x|,k}$}{
		{\bfseries accept\ \ else\ \ reject}
	} 
	
	\end{algorithm}
	
	 By construction $\sh\acc_M(x,k)=\sh\acc_P(x,k)$ for all $(x,k)\in\{0,1\}^*\times\N$.
	 Furthermore, $M$ is $h(k)\cdot |x|^{O(1)}$ space-bounded for some computable $h$, since it only needs to store two vertices of $P_{|x|,k}$ as well as the space needed to construct $P_{|x|,k}$.
	 The machine is $k$-bounded, because $\mathcal{P}$ is $k$-bounded.
	 This concludes the proof for $o = W$.
	 If $P_{n,m}$ is read-once certified for all $n,m$, then $M$ requires only a read-once access to nondeterministic bits. 
	 If $P_{n,m}$ is tail-nondeterministic, so is $M$. 
	
	 Now, we prove $\sh\complClFont{para}_o\mbox{-}\DBP \supseteq \sh\complClFont{para}_o\mbox{-}\L$ for $o$ as in the statement of the theorem.
	 Our argument crucially uses the fact that the generic configuration graph $\G_{M,n,m}$ of a $k$-bounded machine is layered and can be constructed in $\para\L$ on input $(x,k)$ with $|x|=n, |k|=m$ (Lemma~\ref{lem:layering}). 
	 As in the case of $``\subseteq$'', we argue for the case of $\sh\paraw\L$ and mention the modifications required (if any) for the remaining classes. 
	 Let $F\in\sh\paraw\L$ via the $k$-bounded machine $M$ using $O(\log n + g(k))$ space on all inputs $(x,k)$.
	 \WLOG, we assume that the machine $M$ reads either from the input tape or from the choice tape in any configuration. 
	 Let $P_{n,m}\dfn \G_{M,n,m}$ be the generic configuration graph of $M$ for input length $n$ and $m=|k|$. 
	 Then, $M$ accepts $(x,k)$ if and only if there is an assignment $y\dfn y_1,\ldots, y_\ell \in \{0,1\}$ to the nondeterministic input such that $P_{n,m}$ has a directed path consistent with the input $((x,k),y)$ from the initial configuration to the accepting configuration. 
	 In fact, there is a $1$-to-$1$-correspondence between accepting computation paths of $M$ and choices for $y$.
	 As a result, $\sh\acc_P(x,k) = \sh\acc_M(x,k)$ as required. 
	 Also, $P_{n,m} = \G_{M,n,m}$ can be constructed using a $\para\L$ uniformity machine. 
	 This shows that $\sh\paraw\L \subseteq \sh\paraw\DBP$. 
	 
	 For the case of $\parab\L$, we need to show that the resulting DBP $P_{n,m}$ is read-once certified. 
	 However, it may happen that the configuration $(C,i)$ reads variable $y_j$, whereas $(C', i)$ reads variable $y_{j'}$ with $j\neq j'$. 
	 This makes $P_{n,m}$ far from being read-once certified. 
	 However, a crucial observation is that in any start-to-terminal path in $P_{n,m}$, the $y$-variables are read in the order $y_1,\ldots, y_\ell$, and if $y_j$ is read at any point, then none of the $y_i$'s for $i<j$ will be read along this path after this point. 
	  Accordingly, with suitable staggering we can make $P_{n,m}$ read-once certified. 
	 We sketch the process below.
	 
	 For $1\le i\le m$, let $i_\alpha$ be the largest number such that there is a configuration $C$ such that $(C,i_\alpha)$ reads the nondeterministic bit $y_i$. 
	 We delay the computations so that the last read of $y_i$ occurs until layer $i_\alpha$ and the variable $y_{i+1}$ is read only after layer $i_\alpha$.
	 That is, for $y_1$, we wait till layer number $1_\alpha$ before proceeding to read $y_2$, and for $y_2$ we wait till layer $2_\alpha$ before proceeding to read $y_3$ and so on. 
	 This can be achieved by adding necessary dummy nodes that have a single outgoing edge labelled by $1$. 
	 For the whole process we only need the value of $i_\alpha$ which can be computed in $\para\L$ given access to the uniformity machine. 
	 Consequently, the overall staggering process can be done in $\para\L$ so that the resulting BP is uniform. 
	 It may be noted that we do not alter the number of accepting paths during the above process.
	 
	 From the above, we conclude that $\sh\parab\L \subseteq \sh\parab\DBP$. 
	 Finally, in the case of tail-nondeterminism, we may note that if $M$ is tail-nondeterministic then so is $\G_{M,n,m}$ for every input length $n$. 
	 Since the above staggering process does not alter tail-nondeterminism, we conclude that $\sh\parawt\L \subseteq \sh\parawt\DBP$ and $\sh\parabt\L \subseteq \sh\parabt\DBP$.
	\end{proof}

\begin{theorem}\label{thm:bp-nl-equality}
For any $o\in\{\W{},\W1,\beta,\beta\mbox{-}\complClFont{tail}\}$, we have that $\sh\complClFont{para}_o\mbox-\BP = \sh\complClFont{para}_o\mbox-\NL$.
\end{theorem}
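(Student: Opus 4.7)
The strategy is to mimic the proof of Theorem~\ref{lem:equivalence-BP-SC} verbatim, replacing $\DBP$ by $\BP$ and $\L$ by $\NL$ throughout, while carefully tracking that the two sources of nondeterminism (nondeterministic internal transitions and $k$-bounded nondeterministic inputs) can be handled independently. We give one argument for each inclusion and indicate the modifications needed to transport read-once certification and tail-nondeterminism.

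For the inclusion $\sh\complClFont{para}_o\mbox-\BP \subseteq \sh\complClFont{para}_o\mbox-\NL$, we take a $\para\L$-uniform family $\mathcal{P}=(P_{n,m})_{n,m\ge 0}$ of $k$-bounded nondeterministic BPs of size $g(m)\cdot n^c$ and simulate it by a nondeterministic machine $M$ along the lines of Algorithm~\ref{simulation-algorithm}. The only modification is that when the current node $v_{\mathit{cur}}$ of $P_{|x|,k}$ is an internal nondeterministic node (not labelled by a nondeterministic input bit $y_i$), the machine $M$ nondeterministically picks one of the outgoing edges of $v_{\mathit{cur}}$ via an internal nondeterministic transition of the NTM rather than via a guessed choice bit. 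Since these internal choices never consume bits from the choice tape, the $k$-boundedness of $\mathcal{P}$ is preserved, and the number of accepting computations of $M$ still coincides with $\sh\acc_{\mathcal{P}}(x,k)$. As in the DBP case, the BP is reconstructed on the fly by invoking the $\para\L$-uniformity machine whenever an edge lookup is needed; this works because $\NL$ is closed under composition with $\L$. Read-once certification of $\mathcal{P}$ implies that $M$ accesses its choice tape in order, giving read-once access, and tail-nondeterminism of $\mathcal{P}$ carries over to $M$.

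For the reverse inclusion $\sh\complClFont{para}_o\mbox-\NL \subseteq \sh\complClFont{para}_o\mbox-\BP$, we apply the step-counter construction of Lemma~\ref{lem:layering} to a $k$-bounded $\para\NL$-machine $M$. Since $M$ is nondeterministic, a configuration may have several successor configurations; in the resulting generic configuration graph $\G_{M,n,m}$ those successors appear simply as multiple outgoing edges, making $\G_{M,n,m}$ naturally a nondeterministic BP of polynomial size in which deterministic transitions of $M$ that read input or choice bits become labelled vertices and purely internal nondeterministic transitions become unlabelled nondeterministic branchings. The $1$-to-$1$ correspondence between accepting computation paths of $M$ on $(x,k)$ and accepting $s$-$t$-paths in $\G_{M,|x|,|k|}$ consistent with the input yields $\sh\acc_{\G_{M,n,m}}(x,k)=\sh\acc_{M}(x,k)$.

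The remaining task is to transfer the $\beta$- and $\beta$-$\mathrm{tail}$-variants. For read-once certification, we reuse the staggering trick from the proof of Theorem~\ref{lem:equivalence-BP-SC}: along any $s$-$t$-path in $\G_{M,n,m}$ the nondeterministic input variables $y_1,\dots,y_\ell$ are still read in order (because $M$ has read-once access to its choice tape), so inserting dummy nodes to push the last read of $y_i$ to a common layer $i_\alpha$ produces a read-once certified family; this padding is $\para\L$-uniform and preserves the path count. Tail-nondeterminism of $M$ immediately translates to a corresponding structural property of $\G_{M,n,m}$, and the staggering step does not alter it. The main obstacle is verifying that this padding does not interfere with the internal nondeterministic branchings of $\G_{M,n,m}$, but since those branchings correspond to transitions of $M$ that do not consume any $y_i$, the layering argument on the $y_i$'s is unaffected and the four inclusions follow in parallel.
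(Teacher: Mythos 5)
Your proposal is correct and follows essentially the same route as the paper: the paper's own argument for this theorem is only a short remark that the construction of Theorem~\ref{lem:equivalence-BP-SC} carries over, with the relevant properties ($k$-boundedness, read-once certification, tail-nondeterminism) guaranteed only on the component reachable from the initial configuration --- which is exactly the restriction to $s$-$t$-paths that you invoke for the staggering step. Your explicit separation of the two sources of nondeterminism (internal branchings versus choice-tape bits, with only the latter counted towards $k$-boundedness) is the right way to flesh out that remark.
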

\begin{proof}[Proof Idea.]
	In the proof of Theorem~\ref{lem:equivalence-BP-SC}, for the graph $P_{n,m}= \G_{M,n,m}$, the properties such as $k$-bounded nondeterminism, read-once certified nondeterministic bits and tail-nondeterminism are preserved only in the component reachable from the initial configuration. 
	The remaining part of the BP may not have these properties but is not relevant to the program.
\end{proof}

\end{document}